\newtheorem{theorem}{Theorem}
\newtheorem{proposition}{Proposition}
\newcounter{axiom}
\newcounter{appaxiom}
\newtheorem{lemma}[theorem]{Lemma}
\newtheorem{corollary}[theorem]{Corollary}
\newtheorem{definition}[theorem]{Definition}
\newtheorem{example}[theorem]{Example}
\newtheorem{axm}[axiom]{Axiom}
\newtheorem{appaxm}[appaxiom]{Axiom}
\newtheorem{property}{Property}
\newcommand{\RR}{\mathbb{R}}
\newcommand{\CC}{\mathbb{C}}
\newcommand{\NN}{\mathbb{N}}
\newcommand{\id}{\mathbbm{1}}
\newcommand{\mc}[1]{\mathcal{#1}}
\newcommand{\ket}[1]{|#1\rangle}
\newcommand{\ketbra}[2]{| #1 \rangle \langle #2 |}
\newcommand{\proj}[1]{\vert #1\rangle\!\langle#1 \vert}
\newcommand{\norm}[1]{\left\Vert #1 \right\Vert}
\newcommand{\ro}[1]{\textcolor{black}{#1}}
\newcommand{\je}[1]{\textcolor{black}{#1}}
\newcommand{\he}[1]{{\color{black} #1}}
\definecolor{henrik}{rgb}{0,0,0}
\newcommand{\Tr}{\operatorname{tr}}
\newcommand{\tr}{\Tr}
\newcommand{\fu}{Dahlem Center for Complex Quantum Systems, Freie Universit{\"a}t Berlin, 14195 Berlin, Germany}
\begin{document}
\title{Thermodynamic work from operational principles}
 
\author{R.\ Gallego, J.\ Eisert, and H. Wilming}
\affiliation{\fu}

\begin{abstract}
In recent years we have witnessed a concentrated effort to make sense of thermodynamics for small-scale systems. One of the main difficulties is to capture a suitable notion of work that models realistically the purpose of quantum machines, in an analogous way to the role played, for macroscopic machines, by the energy stored in the idealisation of a lifted weight. Despite of several attempts to resolve this issue by putting forward specific models, these are far from capturing realistically the transitions that a quantum machine is expected to perform. In this work, we adopt a novel strategy by considering arbitrary kinds of systems that one can attach to a quantum thermal machine and seeking for \emph{work quantifiers}. These are functions that measure the value of a transition and generalise the concept of work beyond the model of a lifted weight. We do so by imposing simple operational axioms that any reasonable \emph{work quantifier} must fulfil and by deriving from them stringent mathematical condition with a clear physical interpretation. Our approach allows us to derive much of the structure of the theory of thermodynamics without taking as a primitive the definition of work. We can derive, for any \emph{work quantifier}, a quantitative second law in the sense of bounding the work that can be performed using some non-equilibrium resource by the work that is needed
to create it. We also discuss in detail the role of reversibility and correlations in connection with the second law. Furthermore, we recover the usual identification of work with energy in degrees of freedom with vanishing entropy as a particular case of our formalism. Our mathematical results can be formulated abstractly and are general enough to carry over to other resource theories than quantum thermodynamics.
\end{abstract}
\maketitle

\section{Introduction}\label{sec:problems}

With the advent of highly-controlled experiments with small-scale quantum devices and the technological
perspective to use such devices as \emph{machines}
\cite{Exp1,Exp2,Exp3,Exp4,Exp5} 
it is becoming increasingly important to understand what it 
precisely means for such a machine to {extract work}.

For macroscopic, classical machines, work can be grasped in several equivalent ways. In particular, it can be captured by introducing a \emph{work-storage device}, which can be seen as being modelled by a lifted weight. This model considers a body in a conservative force (a weight), described by a single a deterministic state-variable (the height). Importantly, the body cannot be used as an entropy sink or reservoir. By defining work proportional to the height-difference of the weight, one ensures that work captures a notion of operationally useful energy
and as such reflects the actual purpose of macroscopic thermal machines, which may literally be seen as raising a weight.

{In contrast, the situation is much less clear when considering microscopic thermal machines in physical situations
in which quantum effects are expected to be relevant. 
Even in this regime, one may readily conceive quantum analogues of a lifted weight  \cite{Skrzypczyk2010,Nanomachines,Abergtrullywork,Renes14,Halpern14,Faist15},
but it remains conceptually more challenging to justify such notions as reflecting the behaviour of realistic machines at the nano-scale.
Taking the idea seriously that quantum effects are expected to play a role, one should take into account the possibility that the 
work-storage device $A$ itself may be composed of a few atoms only, and thus thermal fluctuations may well become comparable with the typical energy scales
of the machine. Hence, such a microscopic system $A$ is expected to end up with fluctuating values of energy or act as an entropy sink or reservoir.
It may even be necessary to take coherences into account in a quantum description \cite{Aberg,Piotr14,Faist14,Lostaglio14}. All these features suggest that the idealisation of the lifted weight, \he{while} possible also in the setting where quantum effects are expected to be relevant, may not 
\ro{have the same degree of applicability to model realistic situations as it has in macroscopic thermodynamics.}}

{In this work, we introduce a fresh approach towards addressing the problem of dealing with the notion of work in a general fashion.
We propose to allow for more general classes of work-storage devices---going beyond those considered in Refs.\ \cite{Skrzypczyk2010,Nanomachines,Abergtrullywork,Renes14,Halpern14,Faist15}---and quantify a notion of 
work for each such class of systems. The framework introduced is general enough to recover known notions of work
as particular cases of the formalism.}

Conceptually more importantly still, we will adopt a strictly operational perspective. Rather than aiming at defining \ro{work as an \emph{a priori} given quantity}, we will 
advocate th\ro{e e}nterprise 
to first carefully and precisely state what operational properties any quantity reasonably measuring work
for arbitrary classes of work-storage devices
should satisfy. We cast these requirements in the form of basic operational axioms that we expect a measure of work to fulfil.
%
{From these elementary axioms, we derive surprisingly stringent and specific conditions to the 
relevant work quantifier.} In this way, we approach the question of defining work from an entirely new angle.

{Our approach builds upon and further develops  ideas from 
\emph{quantum resource theories} \cite{ResourceTheory,Horodecki13,Coecke14,Brandao15,Lidia};
as a consequence of this approach, we obtain several results on resource theories interesting in their own right. In particular, 
our results highlight the role of correlations and so-called \emph{catalysts} \cite{SecondLaw,Ng14,Mueller14}. 
Our approach also draws some inspiration from the \emph{axiomatic approach to thermodynamics} put
forth in the seminal Ref.\ \cite{Lieb13}, in which axiomatics for thermodynamic state transformations is
introduced, even though the object of study here being quite different. }
\bigskip

\section{A motivating example}

Consider the situation where a thermal machine {runs a protocol aiming at work extraction. The protocol will change the state of a system that we label by $A$ and that we refer to as the \emph{work-storage device}. One can think of $A$ as taking the role of a battery that is being charged or a weight that is being lifted.} Suppose that initially the system 
$A$ is in a state with vanishing deterministic energy and finally with deterministic energy $\Delta$. Moreover this change of energy 
is assumed to happen  in each run of the experiment, that is, with unit probability.  {If one associates work \he{to a deterministic change of} energy, as it is usually done in thermodynamics \cite{Jarzynski97,Crooks99,Talkner07}, then we will conclude that} the machine extracts work $\Delta$ with certainty.

{However, whether energy can be directly associated with work or should rather be regarded as heat depends crucially on what is considered a valid work-storage device.}  To see this, we will elaborate on {the following example: Let us consider as work-storage device a} \he{quantum} system $A$ described by a Hamiltonian which has a unique ground state and a $d$-fold degenerate excited energy-level with energy $\Delta$. Then it is possible, {as discussed in Fig.\ \ref{fig:counterex}, to bring $A$ in contact with a heat bath and map the ground state of $A$ to a state with deterministic energy $\Delta$---}namely the maximally mixed state within that subspace. {The energy $\Delta$ was stored deterministically in $A$, but entirely provided by a heat bath, thus it is at least questionable whether it should be associated with heat-like energy instead of 
actual work.} 

{It is easy to appreciate in this example what it is that allows for \ro{such a process}: the work-storage device is acting not only as a storage of deterministic energy, but also as an entropy sink. Hence, basic notions of phenomenological thermodynamics compel one to reject \ro{a system $A$ like the one of Fig \ref{fig:counterex}} as a valid work-storage device. \ro{Importantly, \he{within phenomenological thermodynamics,} imposing that $A$ must be modelled by the idealisation of a lifted weight is very well motivated from an operational perspective.} \ro{The reason \he{is} that it serves to model accurately the purpose of macroscopic thermal machines, which is indeed often equivalent to literally lifting a heavy object attached to a rope. \he{Moreover, the machine should lift the object} independently of the internal state of the object.}
}

Again, a most natural question that emerges is whether a similar reasoning, \ro{advocating that $A$ must be modeled by a lifted weight,} can be applied at the nano or atomic scale where quantum effects are expected to be relevant. That is, as discussed in the introduction, a microscopic thermal machine might be attached to a system $A$ consisting of a few atoms. Hence, the idealisation of the lifted weight---or its microscopic analogues \cite{Skrzypczyk2010,Nanomachines,Abergtrullywork,Renes14,Halpern14,Faist15}---are far from capturing all relevant and realistic transformations that one expects a nano-machine to perform. Of course, we do not claim that the example of Fig.\ \ref{fig:counterex} models a realistic system. Nonetheless, it illustrates the difficulties emerging when work-storage devices are not idealisations of the like of a lifted weight, thus making problematic the identification of work with deterministic energy. 

In the following sections, we will largely overcome those difficulties by dropping the conceptual guide that work must be analogue to the energy that we store in the microscopic version of a lifted weight. We will show that for any classes of allowed systems describing $A$ 
one can find \emph{work quantifiers}: functions that, for a given class of systems $A$, {associate to each transition of $A$ a real number. These functions behave analogously to the energy when the class of systems is indeed taken to be a lifted weight: the second-law can be expressed as a limitation to the value this function can take; do not allow for a \emph{perpetuum-mobil\'e} creating positive value of this function without burning resources; and the maximum values of the function are obtained by reversible processes. Furthermore, the common identification of energy of a lifted weight and work is recovered as a particular case of the work-quantifiers. All these properties suggest that we identify and refer to these functions as \emph{work quantifiers}. }

\begin{figure}
\includegraphics{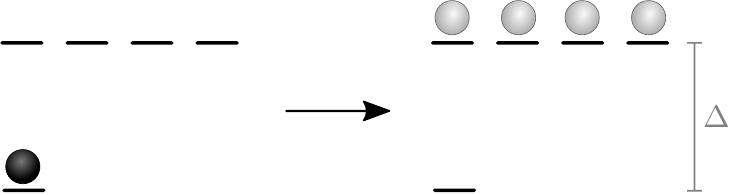}
\caption{{Consider a work-storage device $A$ with a Hamiltonian $H_{d,\Delta}$ having a unique ground state and a $d$-fold degenerate excited level with energy $\Delta$. Let us assume that $A$ is initially in the ground state. We put this system in contact with a heat bath at inverse temperature $\beta$, so that the final state is the Gibbs state with inverse temperature $\beta>0$. This state is such the final probability of being in the ground state is $p_g=1/Z(H_{d,\Delta})$, where $Z(H)$ is the partition function of $H$. For any value of $\Delta$ one can find $d$ such that $p_g$ is arbitrarily close to zero. \he{Using a more sophisticated operation consisting of an energy-conserving unitary on bath and system one can even achieve $p_g=0$ exactly as long as $d>\exp(\beta \Delta)$. This can be seen using techniques of Ref.\ \cite{SecondLaw}.}
This example shows that a deterministic change of deterministic energy can in principle be done by i) only using a single heat bath and ii) bringing the system closer to thermal equilibrium. Note also that the protocol works for every value of $\Delta$ if $d$ is large enough. Hence, it cannot be interpreted as an exponentially suppressed statistical violation of the second law, as the results in Refs.\ \cite{Jarzynski97,Crooks99,Talkner07} are often interpreted.  }}
\label{fig:counterex}
\end{figure}

\section{The operational framework and axioms}

The previous considerations motivate our approach to the problem of
defining work in terms of an operational viewpoint.
We formulate it as a game between two players. The first one is Arthur, who possesses a quantum system which takes the role of the \emph{work-storage device}. 
The system is described by a pair of a quantum state and a Hamiltonian 
\begin{equation}
	p=(\rho,H), 
\end{equation}
referred to as \emph{object}. Typically one assigns certain properties to what is considered a valid work-storage system. For instance, {as discussed in the introduction, in phenomenological thermodynamics the usual demand is to impose that it is a body under a conservative force (such as, again, a  lifted weight). To reiterate, 
complementing notions capturing the idea of a microscopic analogue of a lifted weight have been put forward \he{in the literature \cite{Skrzypczyk2010,Nanomachines,Abergtrullywork,Renes14,Halpern14,Faist15}. However, }here we are precisely interested in considering arbitrary classes of work-storage devices, hence, we will be fully general and encode such constraints by assuming that $p$ belongs to some set $\mc{P}$. \footnote{{Note that the choice of $\mc{P}$ 
has a subjective element, in the same way it amounts to a restriction to take $\mc{P}$ as a lifted weight. It describes a particular choice of systems that we consider valid resources because we can handle them in a given experimental situation. To put an example, one can imagine that a given experimental setup can only handle qubits, systems of bounded entropy or energy, or systems with a fixed Hamiltonian. Our formalism allows to choose $\mc{P}$ so that it encodes each of those situations.}}}. The system is initially described by $p^{(i)} \in \mc{P}$. The second player is 
referred to as Merlin, who has a machine capable of performing transitions between the initial object $p^{(i)}$ to a final object $p^{(f)}\in \mc{P}$. This machine will play the role of the thermal machine or engine, which performs a transformation on the work-storage device.

We assume that the transition $p^{(i)} \rightarrow p^{(f)}$ is performed
in an environment of temperature $T=1/\beta$ (we set $k_B=1$), which we
consider to be fixed throughout the rest of our work. In a resource-theoretic setting, this means that Merlin performs the transitions while having unlimited and free access to arbitrary heat baths at inverse temperature $\beta$. The term ``free'' is here used in the sense of a resource theory, a notion that will be made precise later.

Arthur and Merlin, having agreed on the free character of heat baths and the properties of the work-storage device given by $\mc{P}$, would like to establish a fair way of quantifying the value of a given transition $p^{(i)} \rightarrow p^{(f)}$. That is, they aim at agreeing on a function 
\begin{equation}
	(p^{(i)} \rightarrow p^{(f)}, \beta)\mapsto \mc{W}(p^{(i)} \rightarrow p^{(f)}, \beta) 
\end{equation}
for any $p^{(i)},p^{(f)} \in \mc{P}$, that establishes the prize at which Merlin sells to Arthur the transition that
he has performed, where we will take the convention that $\mc{W}(p^{(i)} \rightarrow p^{(f)}, \beta) \geq 0$ implies that Arthur has to pay to Merlin. The prize of the transition is what we define as \emph{work}, and the function $\mc{W}$ a \emph{work quantifier}. {The notion of ``fair'' prize will encode properties that $\mc{W}$ fulfils when it is identified simply with the energy difference and $\mc{P}$ is a lifted weight. It is precisely in this sense that $\mc{W}$ plays an analogue role of the one played by work in thermodynamics, and what justifies that we refer to it as work quantifier.} 

Since the temperature of the free heat baths is fixed, we also often
write $\mc{W}(p^{(i)} \rightarrow p^{(f)})$ for $\mc{W}(p^{(i)} \rightarrow p^{(f)},\beta)$.
Apart from the agreement on the free heat baths at inverse temperature $\beta$, the work value has to be established solely on the basis of which transition $p^{(i)} \rightarrow p^{(f)}$ is performed by Merlin, without any assumption or restriction on the internal mechanism of Merlin's device. {This is also a property inherited from the usual notion of work within phenomenological thermodynamics, where the work can be quantified by looking only at the 
initial and final state (height) of the lifted weight. 
\je{Lastly, we would like to stress that the language making reference to players such as  Arthur and 
Merlin captures the usual thermodynamic setting and bounds the very same
quantities usually under consideration in thermodynamics, as explained in Sec.\ \ref{sec:notionsofwork}. 
Yet, the novel operational framework introduced in this work is most transparently 
stated in such a language, as a pedagogical tool inspired by common notions of  interactive proof systems in theoretical computer science.}

\subsection{Free catalytic transitions}

\begin{figure*}[htb]
\includegraphics{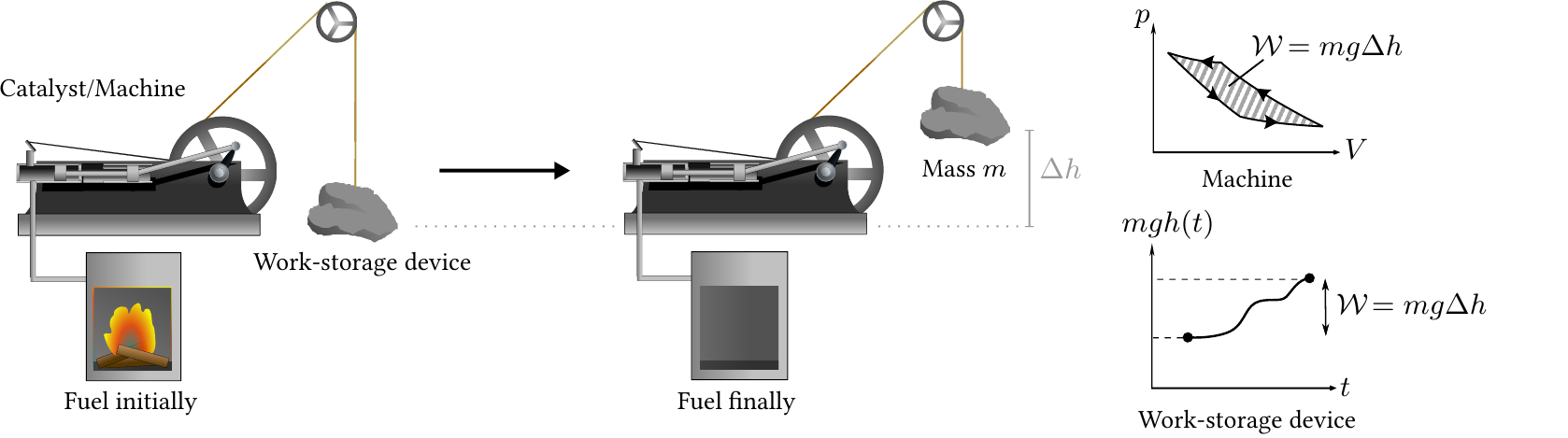}
\caption{\he{\emph{Left:} Phenomenological analogy of our setting. The catalyst corresponds to the machine that returns to its initial state, using up burning fuel to lift a weight. The burning fuel corresponds to a non-equilibrium system and the lifted weight corresponds to Arthur's work-storage device. \emph{Right:} }Work from two points of view. Path-dependent work obtained by looking at the time-dependent thermodynamic state of the thermal machine at the top and operational path-independent work obtained by looking at the weight (work-storage device) at the bottom. \he{All processes happen at some background-temperature $T$. The work of transition $W_{\mathrm{trans}}$ of the fuel corresponds to the maximal height that the weight can be lifted by arbitrary machines leaving the fuel in the corresponding final state and operating at background-temperature $T$.}}
\label{fig:fig1}
\end{figure*}

Since the notions and the use of language may be unfamiliar in the quantum thermodynamic
context, we will now specify clearly what we mean by free operations in the context of a resource theory.
Here, Arthur and Merlin have free access to heat baths at inverse temperature $\beta$. This will be relevant for the choice of the function $\mc{W}$, since Arthur will not pay a positive amount for a transitions that can be performed by only employing free resources. That is to say, it is important to specifically characterise the transitions that can be performed without expending valuable resources and only using baths.

Concretely, we assume that both Arthur and Merlin can pick heat baths, that is, quantum systems $B$ 
prepared in a Gibbs state
\begin{equation}\label{eq:Gibbsstate}
\omega_{H_B,\beta} = \frac{\exp(-\beta H_B)}{Z_H},
\end{equation}
with arbitrary Hamiltonian $H_B$. They can also apply any global unitary $U$
that commutes with the total Hamiltonian $H + H_B$. We use the short-hand
$H_A+H_B := {H_A\otimes\id} + {\id\otimes H_B}$ whenever the support of two
operators is clear from the context.
This amounts to the formalism of \emph{thermal operations} first introduced in Ref.\
\cite{Nanomachines}. It is also meaningful to allow for more general sets of
operations such as the so-called \emph{Gibbs-preserving maps}
\cite{Janzing00,Faist14,Wilming15}, also see the appendix, or simply \emph{thermalising maps} where the only possible interaction with the thermal bath is to bring the system to the Gibbs state in the spirit of Refs.\ \cite{Negative,Abergtrullywork,Anders2013}. Undoubtedly, in many thermodynamic settings, the latter one is the most realistic one capturing actual experimental situations.
The final form of the work quantifier $\mc{W}$ will in principle depend on
which model of operations with the bath is chosen, but the formalism
is general enough to be applicable in any of these situations. In the appendix we discuss in detail which are the minimal properties of the free operations that are explicitly needed to derive our results and show that the examples presented above have such properties.

More importantly, we will assume that both Arthur and Merlin, in addition to the heat bath, can also borrow any quantum ancillary system uncorrelated with the bath and the work-storage device, as long as it is returned in the same initial state and also uncorrelated with the work-storage device (see Fig.~\ref{fig:free_ops}). This ancillary system is referred to as a \emph{catalyst}, and its usage extends the set of transitions that can be performed with a bath \cite{SecondLaw,Ng14}. Such catalytic operations have been frequently studied in the recent literature of quantum thermodynamics, and naturally capture ``bystanders'', so auxiliary systems that 
may help performing transformations. In the following, we will refer to the operations described in this section as
\emph{free operations} when done without catalyst and \emph{catalytic free
operations} when performed with catalyst. Similarly, we will refer to the transitions induced by free and catalytic free operations as \emph{free transitions} and \emph{catalytic free transitions}, respectively. Lastly, given any object $p$, we define $\mc{F}(p)$ as the set of objects that can be reached from $p$ by free operations. Similarly, $\mc{F}_C(p)$ denotes the set of objects that can be reached from $p$ by catalytic free operations.
\begin{figure}
\includegraphics{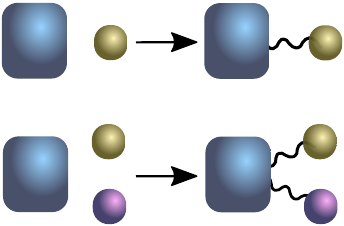}
\caption{Free operations and catalytic free operations. The big blue object denotes a heat bath, the yellow object a local system and the purple object a catalyst. The wiggly lines indicate the possible correlations after a {free operation (top figure), and {catalytic free operation} (bottom figure)}.}
\label{fig:free_ops}
\end{figure}

\subsection{The thermodynamical reading of the operational framework}
\label{sec:notionsofwork}

The game between Arthur and Merlin that we have introduced encodes a typical situation in the study of thermal machines, but does not describe it in the canonical way. The canonical analysis in the literature describes a thermal machine as composed by at least the following elements (see also Fig.\ \ref{fig:fig1}). 
\begin{itemize}
\item[i)] A heat bath at inverse temperature $\beta$, 
\item[ii)] a system $M$ out of equilibrium, \emph{i.e.} not in the Gibbs state \eqref{eq:Gibbsstate} at the temperature of the heat bath. {We will refer to this system as the \emph{fuel}, because is the resource that allows one to extract work,}
\item[iii)] a system $A$ where the work is stored. {This system is referred to as \emph{work-storage device},}
\item[iv)]  and possibly but not necessarily a catalyst $C$. 
\end{itemize}
Our game formulates the problem of evaluating the work given a transition of $A$, {that is, evaluating how much work has been stored in the work-storage device.} In the language of our game, $M$ would be any system that Merlin employs inside his machine performing the transitions {and that is possibly modified (burnt) after a protocol of work extraction}, i.e.,  {it plays the role of the ``fuel'' in traditional thermodynamics}. Such three-partite (four-partite, if the catalyst is explicitly included) structure is indeed the one followed in Refs.\
\cite{Horodecki13,Aberg,Piotr14,Faist14,Aberg,Piotr14,Dahlsten11,Negative,EgloffRenner,QuantitativeLandauer,
Abergtrullywork,Popescu2013,Popescu2013b,Malabarba14,Lostaglio14} where the function $\mc{W}$ is defined \emph{a priori} as related to the energy difference in various ways. For instance, in Refs. \cite{Popescu2013,Popescu2013b} {$\mc{P}$ is taken as the quantum analogue of a lifted weight} and the work quantifier is defined as
\begin{equation}\label{eq:workmean}
\mc{W}_{\text{mean}}( p_A^{(i)} \rightarrow p_A^{(f)} ) = \tr ( \rho_A^{(f)} H_A^{(f)} ) -\tr ( \rho_A^{(i)} H_A^{(i)} ). 
\end{equation}

{Another relevant model is the case of the \emph{$\epsilon$-deterministic work}, following the approach in Ref.\ \cite{Horodecki13}. It also gives rise to an instructive 
example} of how the constraints on the sets of allowed objects $\mc{P}$ come into play. The systems that are considered useful 
in that context are qubits such that 
\begin{equation}
\mc{P_{\epsilon}}:= \{ (\rho,H) \:  | \:  H=\Delta \proj{1}, \: \norm{\rho -\ketbra{E}{E}}\leq 2\epsilon \}
\end{equation}
where $\ket{E}$ is an  eigenvector of $H$, $\norm{ \cdot }$ is a norm on quantum states and $\epsilon < \frac{1}{2}$. The restriction $\mc{P}_{\epsilon}$ encodes that Arthur is interested in having states of well-defined energy or at least $\epsilon$-close to it. Work is then given by the energy difference of the closest energy-eigenstates. Formally as
\begin{equation}\label{eq:workepsilon}
\mc{W}_{\text{det}} (p_A^{(i)} \rightarrow p_A^{(f)})=f(\rho_A^{(f)},H_A^{(f)})-f(\rho_A^{(i)},H_A^{(i)}),
\end{equation}
with the function $f$ being defined \cite{footnotewbit} as
\begin{equation}
f(\rho,H)=
\begin{cases}
\Delta & \mbox{if } \norm{\rho -\proj{1}} < 1\\
0 & \mbox{if }  \norm{\rho -\proj{0}}<  1 \\
\end{cases}.
\end{equation}
As heuristically discussed in the introduction, both $\mc{W}_{\text{mean}}$ and $\mc{W}_{\text{det}}$ {and the limitations that they impose $\mc{P}$ can be regarded as particular cases of the general framework that we put forward. Nonetheless, they serve to illustrate the mathematical objects we are concerned with. The precise way they are recovered as particular cases and the subtleties that emerge when doing so will be discussed in Appendix \ref{sec:otherforms}}.

\subsection{Work of transition, work cost and work value}
It is important to distinguish $\mc{W}$ as a work quantifier on the work-storage device from other quantities that are usually analysed in thermodynamics and referred to as \emph{work}. {Once $\mc{W}$ is defined, one is then usually concerned, in the language of the present work,  with the optimal amount of work that Merlin can obtain by performing a transition on his system $M$. This quantity has been considered in the context of single-shot work extraction in Refs.\ \cite{Renes14,EgloffRenner,Faist15}. Here we will refer to this function as \emph{work of transition}.}

{\begin{definition}[Work of transition]\label{def:workoftransition} Given a work quantifier $\mc{W}$ and inverse temperature $\beta$, a set of restrictions $\mc{P}$, and 
initial and final objects of $M$, denoted by $p^{(i)}_M$ and $p^{(f)}_M$, respectively, the work of transition $W_{\mathrm{trans}}(p^{(i)}_M \rightarrow p^{(f)}_M,\beta)$ is defined as 
\begin{eqnarray}
\nonumber&&W_{\mathrm{trans}}(p_M^{(i)} \rightarrow p_M^{(f)},\beta) \\
&:=&\!\!\!\!\!\!  \sup_{ \substack{p_{A}^{(i)},p_{A}^{(f)} \in \mc{P} ; \\ p_{M}^{(f)} \otimes p_{A}^{(f)}  \in  \mc{F}_C(p^{(i)}_M \otimes p^{(i)}_A)\\  
}} \!\!\!\!\!\!\!\!  \mc{W}( p_A^{(i)} \rightarrow p_A^{(f)},\beta).\label{eq:workext}
\end{eqnarray}
\end{definition}
In \eqref{eq:workext}, we have introduced the short hand notation
\begin{equation}
p^{(i)}_M \otimes p^{(i)}_A:=\left(\rho^{(i)}_M \otimes \rho^{(i)}_A, H_{M}\otimes \id_A + \id_{M} \otimes H_A\right).
\end{equation}
\he{Also recall that the set $\mc{F}_C(p^{(i)}_M \otimes p^{(i)}_A)$ is determined by all those free transitions} that Merlin can perform, including interactions with a heat bath and the catalyst.}

{Notice that in contrast to $\mc{W}$, $W_{\mathrm{trans}}$ is evaluated on transitions on $M$ and \emph{not} on the work-storage system. In fact, the work associated to a given physical process can never be specified as a transition on $M$ alone. That is, it is impossible to determine $\mc{W}$ as a function of $p_M^{(i)}\rightarrow p_{M}^{(f)}$. One needs to either specify a particular catalytic free transition leading to $p_{M}^{(f)} \otimes p_{A}^{(f)}  \in  \mc{F}_C(p^{(i)}_M \otimes p^{(i)}_A)$, or to simply define it by taking the optimal one, as we do in \eqref{eq:workext}. It is precisely in this sense that work, as a function of transitions on $M$, is said to be a path dependent quantity when evaluated in transitions on $M$, and a path-independent quantity when evaluated in transitions on $A$. This is also the case in phenomenological thermodynamics: work can be specified by knowing only the initial and final height of the lifted weight, however it is path-dependent as function of the other components operating the machine, for example, the fuel (see Fig.\ \ref{fig:fig1}).}

{One may be tempted to think at this point that the distinction between the work of transition $W_{\mathrm{trans}}$ and the work quantifier at the work-storage device $\mc{W}$ is somehow artificial. 
In the end, both quantities evaluate transitions on systems and which player plays the role of Arthur or Merlin might seem at first sight arbitrary. However, let us insist that by no means Arthur and Merlin play an equivalent role. The key point is to understand that the transitions on Arthur systems are restricted so that $p_{A}^{(i)},p_{A}^{(f)}\in \mc{P}$. However, transitions on Merlin are fully unrestricte\ro{d.
 T}his is possibly most transparent in phenomenological thermodynamics: there, the work storage device undergoes a transition between two states of definite energy (a weight), however, the ``fuel'' employed in the process may undergo arbitrary transitions.}

{Lastly, note that in \eqref{eq:workext} we demand that the final state of $MA$ is uncorrelated. As discussed in Ref.\ \cite{Mueller14}, the creation of correlations can be a resource for performing thermodynamical transitions. Indeed, those correlations between $MA$ will turn out to play an important role in our axiomatic formulation. Hence, for 
reasons of clarity of presentation, we will first consider the case where no correlations are allowed, as in \eqref{eq:workext}, and study extensively the role of correlations in Sec.\ref{sec:secondlaw}.}

{Yet other relevant quantities in thermodynamics are given by the so-called \emph{work value} and \emph{work cost} defined as
\begin{eqnarray}
\label{eq:defworkvalue}&&W_{\mathrm{value}}(p_{M},\beta):=  W_{\mathrm{trans}}(p_M \rightarrow \omega_{\beta},\beta), \\
\label{eq:defworkcost}&&W_{\mathrm{cost}}(p_{M},\beta):=- W_{\mathrm{trans}}(w_{\beta} \rightarrow p_M ,\beta),
\end{eqnarray}}
where $w_\beta$ is an object describing a thermal state. The quantity $W_{\mathrm{value}}$ plays a relevant role since the second law is usually put as a bound on it. It describes how much work can be extracted from the system if it is viewed as a resource. In this work we are, however, mainly concerned with the form that $\mc{W}$ can take given a set of axioms. Clearly, the quantities $W_{\mathrm{trans}}$, $W_{\mathrm{value}}$ and $W_{\mathrm{cost}}$ can only be defined once $\mc{W}$ has been specified. We will show in Sec.\  \ref{sec:secondlaw}, though, that from the general properties of $\mc{W}$ implied by the axioms, we can find a second law \he{as $W_{\mathrm{value}}\leq W_{\mathrm{cost}}$}.

\section{Two basic axioms}

We are now in the position to formulate the basis on which all of the following analysis rests. We 
introduce two operational axioms concerning the work quantifier $\mc{W}$. They seem as
innocent as they are natural, and clearly capture features that any reasonable function of the above type quantifying work should satisfy. They 
are physically very intuitive. In order to precisely develop our operational framework, they will be formulated in 
the mindset of the game played by Arthur and Merlin, in the language of a so-called interactive
proof system. In this language, they simply ensure that none of the players 
can get arbitrarily rich without expending valuable resources. We will carefully discuss all 
implications of our results, however, also in a physical language, stressing that  the conclusions
we draw indeed give rise to a natural framework for naturally grasping concepts of work in 
quantum thermodynamics.

\begin{axm}[{Cyclic transitions of the work-storage device}]\label{prin:one}
For any cyclic sequence of transitions {of the work-storage device} $p_A^{(1)}\rightarrow p_A^{(2)}\rightarrow\cdots\rightarrow p_A^{(n)}=p_A^{(1)}$, such that 
$p^{(i)} \in \mc{P} \:\: \forall i$, the sum of the work-values of the individual transitions is larger than or equal to zero,
\begin{equation}
\sum_{i=1}^{n-1} \mc{W}(p^{(i)}_A\rightarrow p^{(i+1)}_A,\beta) \geq 0.
\end{equation}
\end{axm}
According to our convention, if $\mc{W}$ takes a negative value, then Arthur is benefiting from the transaction, \emph{i.e.} Merlin pays to Arthur. Hence, the previous axiom ensures 
that---taking the simplest case---Arthur cannot get rich by demanding Merlin to first do a transition $p^{(1)}\rightarrow p^{(2)}$ and then 
asking from him to undo the transition. If this principle was violated, Arthur could get infinitely rich just by repeatedly interacting with Merlin. Note that Arthur is not even himself implementing the transition, hence, he is by definition not expending any resource.

We will now impose our second axiom, which ensures in turn that Merlin cannot get arbitrarily rich without spending resources. 
\begin{axm}[{Cyclic transitions of the fuel}]\label{prin:two}
For any cyclic sequence of transitions {of the fuel (Merlin's system) $p_M^{(1)}\rightarrow p_M^{(2)}\rightarrow\cdots\rightarrow p_M^{(n)}=p_M^{(1)}$, 
the sum of the optimal work that Merlin can obtain in each sequence (this is given by $W_{\mathrm{trans}}$ in \eqref{eq:workext}) is smaller or equal to zero,
\begin{equation}\label{eq:axiom2}
\sum_{i=1}^{n-1} W_{\mathrm{trans}}(p_M^{(i)}\rightarrow p_M^{(i+1)},\beta) \leq 0.
\end{equation}}
\end{axm}
{According to our convention and \eqref{eq:workext}, if $W_{\mathrm{trans}}$ takes a positive value, then Merlin is benefiting from the transaction, \emph{i.e.} Arthur pays to Merlin. Hence, the previous axiom ensures that---taking the simplest case---Merlin cannot get rich by the overall process of burning his fuel in a transition $p_{M}^{(1)} \rightarrow p_{M}^{(2)}$ and then restoring the fuel back to its original state $p_{M}^{(1)} \rightarrow p_{M}^{(2)}$. If this was violated, that is, the l.h.s. of \eqref{eq:axiom2} was positive, then Merlin would get rich while not having burnt any fuel. It is important to notice that the objects $p_M^{(i)}$ of Axiom \ref{prin:two} are \emph{not} restricted to be in $\mc{P}$, since these restrictions apply to the work-storage device, and here we are concerned with transitions on the fuel (Merlin's systems) which are fully unrestricted. However, Axiom \ref{prin:two} depends on $\mc{P}$ because $W_{\mathrm{trans}}$ is defined as a function of $\mc{P}$ and $\mc{W}$, as given by \eqref{eq:workext}. Also, note that Axioms \ref{prin:one} and \ref{prin:two} allow---and this will be indeed the case as discussed in Sec.\ \ref{sec:irreversibility}---for the l.h.s. of eq.\ \eqref{eq:axiom2} to be strictly smaller than zero.}

As a final remark, note that both Axioms \ref{prin:one} and \ref{prin:two} encode the spirit of the second
law of thermodynamics: By preventing any of the two players to become arbitrarily rich without spending resources, we are enforcing the impossibility
to create a \emph{perpetuum-mobil\'e}. Our approach is, however, inverse to
what usually found in phenomenological thermodynamics. There, work is defined
\emph{a priori} through the lifted weight and the second law is understood as a
constraint on the possible physical processes. In contrast, Axioms \ref{prin:one} and \ref{prin:two} do not impose any constraint on the allowed physical operations that Merlin is performing. They simply state that one does \emph{not} account as work what can be generated with a bath and a catalyst with the \emph{a priori} given physical operations. As such, in our set-up it is also \emph{impossible to violate the second law}: If by using, say, a forthcoming post-quantum theory, someone claimed to extract work from a single heat bath, then it simply means---regardless of the details of such theory---that what it is referred to as work does not fulfil our operational principles.
\section{General properties of the work quantifier}\label{sec:generalproperties}

It is the key feature of the framework developed here that very basic principles already allow {one to 
formulate} stringent properties of possible work functions $\mc{W}$. In
this subsection we will turn to discussing properties of a work-function $\mc{W}$ that respects Axioms
\ref{prin:one} and \ref{prin:two}. For conceptual clarity, we will keep the discussion rather informal in this subsection. For a mathematically detailed and rigorous treatment, we refer to the appendix. Nevertheless, we will have to introduce some notation and definitions first.
We are looking for a function $\mc{W}$ that assigns a real number to any pair of objects $p^{(i)}=(\rho^{(i)},
H^{(i)})$ and $p^{(f)}=(\rho^{(f)},H^{(f)})$ that belong to the given set $\mc{P}$. Such an
inclusion is assumed throughout the remaining, unless explicitly mentioned otherwise. We will use Latin letters
$p,q,r,s,\ldots$ to denote objects and denote the work-value of a transition
$p^{(i)}\rightarrow p^{(f)}$ as $\mc{W}(p^{(i)}\rightarrow p^{(f)},\beta)$ or simply $\mc{W}(p^{(i)} \rightarrow p^{(f)})$ if $\beta$ is clear from the context. If the Hamiltonian of the two objects is 
identical, we will also use the notation $\mc{W}(\rho^{(i)}\rightarrow \rho^{(f)})$.
Let us recall from previous sections that given any object $p$, we define $\mc{F}(p)$ and $\mc{F}_C(p)$ as the set of objects that can be reached from $p$ by free operations and catalytic free operations, respectively. Also, in the following we always assume the existence of the \emph{empty object} $\emptyset\in \mc{P}$. Physically it means that there is no work-storage device. Formally it is given by the state $1$ with Hamiltonian $0$ on the Hilbert-space $\CC$. It therefore fulfils $p\otimes \emptyset=\emptyset\otimes p=p$ for any $p\in\mc{P}$.

\begin{theorem}[Form of work quantifiers]\label{thm:generalform}
A function $\mc{W}$ respects Axioms~\ref{prin:one} and \ref{prin:two} if and only if it can be written as
\begin{equation}\label{eq:diffmonotones}
\mc{W}(p\rightarrow q) = M(q) - M(p),
\end{equation}
for a function $M$ such that $M(\emptyset)=0$ and that fulfils the following property:
\begin{itemize}
\item \emph{Additive monotonicity:} For all $p^{(1)},\ldots,p^{(m)}$ and $q^{(1)},\ldots,q^{(m)}$ in $\mc{P}$ such that $\bigotimes_{i=1}^{m} q^{(i)} \in \mc{F}_C(\bigotimes_{i=1}^{m} p^{(i)} )$
\begin{equation}\label{eq:additivemonotonicity}
\sum_{i=1}^{m} M( q^{(i)} ) \leq \sum_{i=1}^{m}  M (p^{(i)} ).
\end{equation}
\end{itemize}
\end{theorem}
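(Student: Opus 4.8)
The plan is to prove the two implications separately, using throughout the elementary closure properties of the free operations that are isolated in the appendix: the identity and the swap of tensor factors are free, parallel composition of (catalytic) free transitions is again a (catalytic) free transition, and any subsystem returned in its initial state uncorrelated from the rest may be absorbed into the catalyst.

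For the direction ``$\Leftarrow$'', I would assume $\mc{W}(p\to q)=M(q)-M(p)$ with $M(\emptyset)=0$ and $M$ additively monotone. Axiom~\ref{prin:one} is then immediate, since along any cycle the sum telescopes to zero. For Axiom~\ref{prin:two} I would fix a fuel cycle and, for each step and each $\varepsilon>0$, pick near-optimal work-storage objects $a_i,b_i\in\mc{P}$ with $p_M^{(i+1)}\otimes b_i\in\mc{F}_C(p_M^{(i)}\otimes a_i)$ and $\mc{W}(a_i\to b_i)\ge W_{\mathrm{trans}}(p_M^{(i)}\to p_M^{(i+1)})-\varepsilon/n$. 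Composing these transitions in parallel gives a free catalytic transition whose fuel factor $\bigotimes_i p_M^{(i)}$ is, up to a free reordering, unchanged and left uncorrelated; absorbing it into the catalyst shows $\bigotimes_i b_i\in\mc{F}_C(\bigotimes_i a_i)$. Additive monotonicity then yields $\sum_i\mc{W}(a_i\to b_i)=\sum_i(M(b_i)-M(a_i))\le 0$, hence $\sum_i W_{\mathrm{trans}}\le\varepsilon$, and $\varepsilon\to 0$ gives Axiom~\ref{prin:two}.

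For the direction ``$\Rightarrow$'', the first task is to upgrade Axiom~\ref{prin:one} to the vanishing of \emph{cyclic} sums of $\mc{W}$. The key observation is a swap bound: for any $a,b\in\mc{P}$, choosing the work-storage pair $(a,b)$ and using that $a\otimes b\in\mc{F}_C(b\otimes a)$ by a free swap, the definition of $W_{\mathrm{trans}}$ yields
\begin{equation}
W_{\mathrm{trans}}(b\to a)\ge \mc{W}(a\to b).
\end{equation}
Applying this term by term to the reversal of a cycle in $\mc{P}$ and invoking Axiom~\ref{prin:two} gives $\sum_i\mc{W}(p^{(i)}\to p^{(i+1)})\le 0$; with Axiom~\ref{prin:one} the cyclic sum is exactly zero. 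Path-independence then makes $\mc{W}$ a potential difference: setting $M(p):=\mc{W}(\emptyset\to p)$ one checks $M(\emptyset)=0$ and, from the vanishing of the cycle $\emptyset\to p\to q\to\emptyset$, that $\mc{W}(p\to q)=M(q)-M(p)$.

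It remains to prove additive monotonicity, and I expect this to be the main obstacle, since Axiom~\ref{prin:two} references only a single work-storage object at a time whereas the claim involves $m$ of them at once. The device is to let the fuel carry the extra objects. Given $\bigotimes_i q^{(i)}\in\mc{F}_C(\bigotimes_i p^{(i)})$, I would introduce the hybrid fuel objects $F_k:=q^{(1)}\otimes\cdots\otimes q^{(k)}\otimes p^{(k+1)}\otimes\cdots\otimes p^{(m)}$, so $F_0=\bigotimes_i p^{(i)}$ and $F_m=\bigotimes_i q^{(i)}$. Consecutive $F_k$ differ only in their $k$-th factor, and the swap bound (now swapping that factor with an appended work-storage register) gives $W_{\mathrm{trans}}(F_k\to F_{k-1})\ge\mc{W}(p^{(k)}\to q^{(k)})$. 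I would then close the fuel cycle $F_m\to F_{m-1}\to\cdots\to F_0\to F_m$, whose final step $F_0\to F_m$ is free catalytic by hypothesis and so carries $W_{\mathrm{trans}}\ge\mc{W}(\emptyset\to\emptyset)=0$. Axiom~\ref{prin:two} bounds the total by zero, whence $\sum_{k=1}^m\mc{W}(p^{(k)}\to q^{(k)})\le 0$, i.e.\ $\sum_k M(q^{(k)})\le\sum_k M(p^{(k)})$. The points demanding care are the handling of the suprema defining $W_{\mathrm{trans}}$ through the $\varepsilon$-arguments above, and the verification that parallel composition keeps the carried fuel uncorrelated so that it is legitimately catalytic.
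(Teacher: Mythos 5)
Your proof is correct, and it reaches the theorem by a genuinely different decomposition than the paper's. The paper never works with the main-text Axiom~\ref{prin:two} directly: it first replaces it by an equivalent reformulation stated purely in terms of the work-storage device (collections of transitions assisted by $(c_k,c_{k+1})$ with $c_n=c_1$), proves that equivalence, then establishes an intermediate lemma characterising $\mc{W}$ (antisymmetry, path-independence, and $\sum_i\mc{W}(p^{(i)}\to q^{(i)})\le 0$ whenever $\bigotimes_i q^{(i)}\in\mc{F}_C(\bigotimes_i p^{(i)})$), and only then sets $M(p)=\mc{W}(\emptyset\to p)$, exactly as you do. You instead argue straight from the $W_{\mathrm{trans}}$-based axiom via two devices the paper never isolates: the swap bound $W_{\mathrm{trans}}(b\to a)\ge\mc{W}(a\to b)$, and the hybrid fuel objects $F_k$ interpolating between $\bigotimes_i p^{(i)}$ and $\bigotimes_i q^{(i)}$, with the hypothesis used once to close the cycle $F_m\to\cdots\to F_0\to F_m$. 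Structurally your $F_k$ are the paper's chain of assistants $c_j=\bigotimes_{i=2}^{j-1}p_i\otimes\bigotimes_{k=j}^{n}q_k$ in different clothing, and both proofs power the individual steps by free swaps; but your organisation buys self-containedness (no reformulated axiom, no separate equivalence proof) at the price of handling the supremum in $W_{\mathrm{trans}}$ by hand ($\varepsilon$-near-optimal choices in one direction, exhibited feasible pairs in the other), whereas the paper's reformulation localises that bookkeeping once and reuses it elsewhere in the appendix. One caveat, which you flag yourself: your parallel-composition step rests on the closure property that a free transition may be tensored with an idle system; this is not literally among Properties~\ref{prop:composability}--\ref{prop:tracingfree}, but the paper's own Lemma~\ref{lemma:timetospace} (``mapping time to space'') uses it in exactly the same implicit way, so your proof is no less rigorous than the paper's on this point---though it would be cleaner to state it as an explicit assumption.
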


In particular, the theorem implies that  work, \emph{as measured by the work-storage device}, is not path-dependent in the sense that
\begin{eqnarray}
\label{eq:reversibility1}
&&\mc{W}(p\rightarrow q) + \mc{W}(q\rightarrow s) = \mc{W}(p\rightarrow s),\\
\label{eq:reversibility2}&&\mc{W}(p\rightarrow q)=-\mc{W}(q\rightarrow p)
\end{eqnarray}
and that no work can be extracted in a free catalytic transition,
\begin{equation}
\label{eq:noworkfromfreeoperations}
\mc{W}(p\rightarrow q) \leq 0,\quad \forall q\in \mc{F}_C(p).
\end{equation}
Thus the work-storage device can be treated similarly to the case of a massive
body under the influence of a conservative force in classical mechanics:
There is a state-variable $M$ and its difference along a transition determines the work-value of the transition. Using catalytic free operations, which generalise the concept of putting a system in contact with a heat bath in phenomenological thermodynamics, this state-variable cannot be increased.

 {Let us highlight that condition \eqref{eq:reversibility2} is perfectly compatible with the well-known notion of irreversibility that emerges when considering notions of deterministic work in the spirit of Refs.\ \cite{Dahlsten11,Negative,EgloffRenner,QuantitativeLandauer,
Abergtrullywork,Gemmer}. That is, \eqref{eq:reversibility2} \emph{is compatible} with $W_{\mathrm{value}}<W_{\mathrm{cost}}$ and more generally with 
\begin{equation}\label{eq:irrevers1}
W_{\mathrm{trans}}(p _M^{(i)}\rightarrow p _M^{(f)}) \neq -W_{\mathrm{trans}}(p _M^{(f)} \rightarrow p _M^{(i)}).
\end{equation}
The validity of \eqref{eq:irrevers1} for any nontrivial set $\mc{P}$ and an extended} discussion on the implications of eq.\ \eqref{eq:reversibility2} are discussed in Sec.\  \ref{sec:irreversibility}. The exhaustive proof of Thm.\ \ref{thm:generalform} can be found in Appendix \ref{sec:appgeneralaxioms}.

Let us now discuss briefly the significance of \emph{additive monotonicity}, with the following lemma.
\begin{lemma}[Consequences of additive monotonicity]\label{lemma:aditivemonotonicitymeaning}
If a function $M$ fulfils additive monotonicity and $M(\emptyset)=0$, then it fulfils also the following properties.
\begin{itemize}
\item \emph{Monotonocity:} $M(q) \leq M(p)$ $\forall$ $p,q$ $\in$ $\mc{P}$, such that $q\in \mc{F}_C(p)$.
\item \emph{Additivity:} $M(p_A \otimes p_B) =M(p_A)+M(p_B)$ $\forall$ $p_A,p_B,p_A\otimes p_B$ $\in$ $\mc{P}$.
\item \emph{Positivity:} $M(p)\geq 0$ $\forall$ $p\in \mc{P}$.
\end{itemize}
\end{lemma}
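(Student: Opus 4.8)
The plan is to extract all three properties from the single hypothesis of additive monotonicity by feeding in the empty object $\emptyset$, using only two elementary facts about the free operations: that they contain the identity (so $p\in\mc{F}_C(p)$ for every $p$, i.e.\ $\mc{F}_C$ is reflexive) and that the work-storage device can always be discarded (so $\emptyset\in\mc{F}(p)\subseteq\mc{F}_C(p)$ for every $p$). Both are guaranteed by the minimal requirements on free operations collected in the appendix; discarding, for instance, is realised by tracing out system and bath after any thermal operation, which leaves the scalar object $\emptyset$.

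\emph{Monotonicity} is just the case $m=1$: if $q\in\mc{F}_C(p)$, then applying \eqref{eq:additivemonotonicity} to the single pair $(p,q)$ gives $M(q)\le M(p)$ at once. \emph{Additivity} follows by invoking \eqref{eq:additivemonotonicity} twice with $m=2$. For the inequality $M(p_A\otimes p_B)\le M(p_A)+M(p_B)$, I would set $p^{(1)}=p_A$, $p^{(2)}=p_B$ and $q^{(1)}=p_A\otimes p_B$, $q^{(2)}=\emptyset$; since $q^{(1)}\otimes q^{(2)}=p_A\otimes p_B=p^{(1)}\otimes p^{(2)}$ lies in $\mc{F}_C(p^{(1)}\otimes p^{(2)})$ by reflexivity, additive monotonicity gives $M(p_A\otimes p_B)+M(\emptyset)\le M(p_A)+M(p_B)$, which reduces to the claim via $M(\emptyset)=0$. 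For the reverse inequality I would simply swap the roles, taking $p^{(1)}=p_A\otimes p_B$, $p^{(2)}=\emptyset$ and $q^{(1)}=p_A$, $q^{(2)}=p_B$; the same reflexivity argument then yields $M(p_A)+M(p_B)\le M(p_A\otimes p_B)$. Combining the two proves the equality, where throughout I use the standing hypotheses $p_A,p_B,p_A\otimes p_B\in\mc{P}$ and $\emptyset\in\mc{P}$.

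\emph{Positivity} then follows from monotonicity together with the discarding property: since $\emptyset\in\mc{F}_C(p)$ for every $p\in\mc{P}$, monotonicity gives $M(\emptyset)\le M(p)$, and $M(\emptyset)=0$ yields $M(p)\ge 0$.

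The step I expect to need the most care is precisely positivity, because, unlike monotonicity and additivity, it is not a purely formal consequence of \eqref{eq:additivemonotonicity}: it relies on the physical input that the empty object is freely reachable from any object. Establishing (or citing from the appendix) that discarding is a legitimate free operation, so that $\emptyset\in\mc{F}_C(p)$, is therefore the crux of the argument; the remainder is bookkeeping around the neutral element $\emptyset$ and the reflexivity of $\mc{F}_C$.
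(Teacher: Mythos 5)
Your proposal is correct, and it supplies precisely the argument the paper leaves implicit: the lemma is stated without proof in the main text, so there is no competing derivation to compare against, but every ingredient you invoke is established in the paper's appendix. Monotonicity as the $m=1$ case is immediate; your additivity argument via the two assignments $(p^{(1)},p^{(2)};q^{(1)},q^{(2)})=(p_A,p_B;p_A\otimes p_B,\emptyset)$ and its swap is sound, since reflexivity $p\in\mc{F}_C(p)$ is proven there from the swapping property (taking the catalyst equal to the system itself), the neutrality $p\otimes\emptyset=p$ is part of the definition of the empty object, and the hypothesis of additive monotonicity only requires the individual objects---not their tensor products---to lie in $\mc{P}$, which your assignments respect given the standing assumptions $p_A,p_B,p_A\otimes p_B,\emptyset\in\mc{P}$. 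Your diagnosis of positivity as the one step needing physical input is also accurate: it rests on tracing out the entire system being a free operation, which the paper states explicitly (with the convention $\tr_S(p):=\emptyset$), together with the inclusion $\mc{F}(p)\subseteq\mc{F}_C(p)$, which holds by taking $\emptyset$ itself as the catalyst.
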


Nonetheless, additive monotonicity is strictly stronger than demanding that $M$ fulfils the three
conditions of previous Lemma. To see this, consider for example objects $p_A^{(i)},p_A^{(f)},p_B^{(i)},p_B^{(f)}$ $\in \mc{P}$, but with $p_A^{(i)} \otimes  p_B^{(i)} ,p_A^{(f)} \otimes p_B^{(f)} \notin \mc{P}$. 
Note that monotonicity and additivity do not apply to objects that are not in the set $\mc{P}$. The condition given by \eqref{eq:additivemonotonicity} implies that 
\begin{equation}
M(p^{(f)}_A)+M(p^{(f)}_B) \leq M(p^{(i)}_A) +M(p^{(i)}_B)  
\end{equation}
if $ p^{(f)}_A \otimes p^{(f)}_B \in \mc{F}_C (p^{(i)}_A \otimes p^{(i)}_B)$. However, this condition could not have been derived from the conditions of monotonicity and additivity, since they do not apply to objects that lie outside the set $\mc{P}$.

\section{Free energies as work quantifiers}
At this point a most natural question emerges: What are reasonable and natural
candidates for a work quantifier fulfilling Axioms \ref{prin:one} and \ref{prin:two}? 
Clearly, the set of valid functions $\mc{W}$ will crucially depend on the set of allowed states $\mc{P}$. Now, we will show that if the $\mc{P}$ is fully unrestricted, then the conditions simplify to the well-known notions of monotonicity and additivity. 

\begin{theorem}[Work qualifiers in the unrestricted case]\label{lemma:forallp}
If the set $\mc{P}=\{(\rho,H)\}$ is the set of all quantum states $\rho$ and Hamiltonians $H$, then a function $\mc{W}$ respects Axioms \ref{prin:one} and \ref{prin:two} if and only if it can be written as
\begin{equation}
\mc{W}(p \rightarrow q) = M(q)-M(p)
\end{equation}
for a function $M$ with $M(\emptyset)=0$ such that
\begin{eqnarray}
\label{eq:simplemonotonicity}&&M(q) \leq M(p) \:\: \forall \:\: q\in \mc{F}_C(p) ,\\
\label{eq:simpleadditivity} &&M(p\otimes q) =M(p)+M(q).
\end{eqnarray}
\end{theorem}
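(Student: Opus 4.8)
The plan is to deduce Theorem~\ref{lemma:forallp} directly from the already-established Theorem~\ref{thm:generalform}, by showing that in the fully unrestricted case the \emph{additive monotonicity} condition~\eqref{eq:additivemonotonicity} collapses to the pair of conditions~\eqref{eq:simplemonotonicity} and~\eqref{eq:simpleadditivity}. By Theorem~\ref{thm:generalform}, a function $\mc{W}$ respects Axioms~\ref{prin:one} and~\ref{prin:two} if and only if $\mc{W}(p\rightarrow q)=M(q)-M(p)$ for some $M$ with $M(\emptyset)=0$ that is additively monotone. Hence it suffices to prove that, when $\mc{P}$ is the set of all pairs $(\rho,H)$, a function $M$ with $M(\emptyset)=0$ is additively monotone if and only if it satisfies both~\eqref{eq:simplemonotonicity} (monotonicity) and~\eqref{eq:simpleadditivity} (additivity). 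The representation $\mc{W}=M(q)-M(p)$ and the normalization $M(\emptyset)=0$ are then inherited verbatim.

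For the forward implication I would simply invoke Lemma~\ref{lemma:aditivemonotonicitymeaning}: additive monotonicity together with $M(\emptyset)=0$ already yields both monotonicity and additivity (which are precisely~\eqref{eq:simplemonotonicity} and~\eqref{eq:simpleadditivity}), and this holds with no restriction on $\mc{P}$ at all.

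The converse is where the unrestricted assumption does the work. Assuming $M$ satisfies~\eqref{eq:simplemonotonicity} and~\eqref{eq:simpleadditivity}, I would take arbitrary $p^{(1)},\ldots,p^{(m)}$ and $q^{(1)},\ldots,q^{(m)}$ in $\mc{P}$ with $\bigotimes_{i} q^{(i)}\in\mc{F}_C(\bigotimes_{i} p^{(i)})$. Because $\mc{P}$ contains \emph{all} objects, the composite objects $\bigotimes_{i} p^{(i)}$ and $\bigotimes_{i} q^{(i)}$ themselves lie in $\mc{P}$, so by iterating the two-factor additivity~\eqref{eq:simpleadditivity} (a routine induction, legitimate precisely because every partial product is again in $\mc{P}$) I would obtain $M(\bigotimes_{i} p^{(i)})=\sum_{i} M(p^{(i)})$ and likewise for the $q^{(i)}$. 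Applying monotonicity~\eqref{eq:simplemonotonicity} to the single free catalytic transition $\bigotimes_{i} p^{(i)}\rightarrow\bigotimes_{i} q^{(i)}$ would then give $\sum_{i} M(q^{(i)})=M(\bigotimes_{i} q^{(i)})\le M(\bigotimes_{i} p^{(i)})=\sum_{i} M(p^{(i)})$, which is exactly~\eqref{eq:additivemonotonicity}.

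I expect the only genuine subtlety---and the reason the statement is specific to the unrestricted $\mc{P}$---to be the step that rewrites the two sums as $M$ evaluated on the tensor products. This is valid here only because the composite objects belong to $\mc{P}$; in a restricted $\mc{P}$ the products $\bigotimes_{i} p^{(i)}$ need not be admissible, and indeed the discussion following Lemma~\ref{lemma:aditivemonotonicitymeaning} exhibits exactly this gap, showing additive monotonicity to be strictly stronger than monotonicity plus additivity in general. Everything else is bookkeeping.
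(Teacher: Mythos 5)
Your proposal is correct and takes essentially the same route as the paper's own proof: the forward direction invokes Theorem~\ref{thm:generalform} together with Lemma~\ref{lemma:aditivemonotonicitymeaning}, and the converse reduces additive monotonicity to applying \eqref{eq:simplemonotonicity} to the tensor-product objects $\bigotimes_i p^{(i)}$ and $\bigotimes_i q^{(i)}$ (admissible exactly because $\mc{P}$ is unrestricted) and then expanding via \eqref{eq:simpleadditivity}. Your closing remark on where the unrestricted hypothesis is used is the same point the paper makes in the discussion following Lemma~\ref{lemma:aditivemonotonicitymeaning}.
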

\begin{proof}
One directions follows directly from Thm.\ \ref{thm:generalform} and Lemma \ref{lemma:aditivemonotonicitymeaning}. That is, by using Thm.\ \ref{thm:generalform} we have that the function $M$ fulfils additive monotonicity. Hence, by Lemma. \ref{lemma:aditivemonotonicitymeaning} one sees that if fulfils also the two properties of Lemma \ref{lemma:forallp}. To show the inverse relation, it suffices to show that \eqref{eq:simplemonotonicity} and \eqref{eq:simpleadditivity} imply additive monotonicity. Indeed, by taking $p=\bigotimes_{i=1}^{n}p^{(i)}$ and $q=\bigotimes_{i=1}^{n}q^{(i)}$, we have that \eqref{eq:simplemonotonicity} implies 
\begin{equation}
M\left(\bigotimes_{i=1}^{n}q^{(i)}\right) \leq M\left(\bigotimes_{i=1}^{n}p^{(i)}\right)
\end{equation}
if $\bigotimes_{i=1}^{n}q^{(i)}=\mc{F}_C(\bigotimes_{i=1}^{n}p^{(i)})$. Additive monotonocity follows straightforwardly applying now \eqref{eq:simpleadditivity}.
\end{proof}

The previous Lemma implies that any function $M$ fulfilling the properties \eqref{eq:simplemonotonicity} and \eqref{eq:simpleadditivity} (when appropriately re-scaled to fulfil $M(\emptyset)=0$) can be used to build valid work quantifier for every possible set $\mc{P}$. Here we present a natural family of
such monotones.

\begin{theorem}[Work quantifiers from R\'enyi divergences]\label{thm:renyi}
The work quantifier $\mc{W}_{\alpha}(p \rightarrow q,\beta) = \Delta F^{\beta}_{\alpha} (q) - \Delta F^{\beta}_{\alpha}(p)$ with 
\begin{equation}
 \Delta F^{\beta}_{\alpha} \big(p=(\rho,H)\big)=\frac{1}{\beta}S_{\alpha}(\rho||w_{H,\beta}),
\end{equation}
where $S_{\alpha}$ is the quantum R\'enyi divergence \cite{Mueller-Lennert13,Wilde13}, for any $\alpha>0$, fulfils Axioms \ref{prin:one} and \ref{prin:two} for every set $\mc{P}$.
\end{theorem}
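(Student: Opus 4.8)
The plan is to reduce the statement to the monotonicity and additivity conditions already isolated in the unrestricted case, and then to verify these for $M=\Delta F^{\beta}_{\alpha}$ directly from standard properties of the R\'enyi divergence. Recall from the discussion following Thm.~\ref{lemma:forallp} that any function $M$ obeying global monotonicity \eqref{eq:simplemonotonicity} under catalytic free operations and additivity \eqref{eq:simpleadditivity}, together with the normalisation $M(\emptyset)=0$, automatically satisfies \emph{additive monotonicity}, and hence by Thm.~\ref{thm:generalform} the difference $\mc{W}=\Delta M$ is a work quantifier obeying Axioms~\ref{prin:one} and \ref{prin:two} for \emph{every} set $\mc{P}$. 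Thus it suffices to check these three properties for $M(p)=\frac{1}{\beta}S_{\alpha}(\rho||w_{H,\beta})$.

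First I would dispatch the easy inputs. The empty object is the state $1$ with Hamiltonian $0$ on $\CC$, whose Gibbs state is again $1$, so $M(\emptyset)=\frac{1}{\beta}S_{\alpha}(1||1)=0$. For additivity, I would use that a non-interacting Hamiltonian has a product Gibbs state, $w_{H_A+H_B,\beta}=w_{H_A,\beta}\otimes w_{H_B,\beta}$, together with the additivity of the R\'enyi divergence under tensor products, $S_{\alpha}(\rho_A\otimes\rho_B||\sigma_A\otimes\sigma_B)=S_{\alpha}(\rho_A||\sigma_A)+S_{\alpha}(\rho_B||\sigma_B)$. Combined, these immediately yield $M(p_A\otimes p_B)=M(p_A)+M(p_B)$, establishing \eqref{eq:simpleadditivity}.

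The core of the argument is the monotonicity \eqref{eq:simplemonotonicity}. Suppose $q=(\sigma,H_q)\in\mc{F}_C(p)$ with $p=(\rho,H_p)$. By definition there is a catalyst $(\tau,H_C)$ and a free operation $\Phi$ acting on system-plus-catalyst that maps $\rho\otimes\tau$ to $\sigma\otimes\tau$, returning the catalyst uncorrelated and unchanged. The defining feature of the free operations is that they preserve the relevant Gibbs state, i.e.\ $\Phi$ carries $w_{H_p,\beta}\otimes w_{H_C,\beta}$ to $w_{H_q,\beta}\otimes w_{H_C,\beta}$. Applying the data-processing inequality for $S_{\alpha}$ to $\Phi$ on the joint system then gives
\begin{equation}
S_{\alpha}\!\left(\sigma\otimes\tau\,||\,w_{H_q,\beta}\otimes w_{H_C,\beta}\right)\leq S_{\alpha}\!\left(\rho\otimes\tau\,||\,w_{H_p,\beta}\otimes w_{H_C,\beta}\right).
\end{equation}
Invoking additivity on each side, the catalyst contribution $S_{\alpha}(\tau||w_{H_C,\beta})$ appears identically on both sides and cancels, leaving $S_{\alpha}(\sigma||w_{H_q,\beta})\leq S_{\alpha}(\rho||w_{H_p,\beta})$, that is $M(q)\leq M(p)$, which is exactly \eqref{eq:simplemonotonicity}.

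The main obstacle is the data-processing step, which is where the hypothesis $\alpha>0$ and the precise choice of free operations enter. One must confirm that the admitted free operations (thermal operations, Gibbs-preserving or thermalising maps, as detailed in the appendix) are genuinely Gibbs-preserving with respect to the possibly distinct initial and final Hamiltonians, so that the reference state in the divergence transforms as used above; the exact additivity that cancels the catalyst is robust and presents no difficulty. Equally one must invoke the data-processing inequality for the quantum R\'enyi divergence in the claimed range: for states commuting with their Gibbs reference---the generic thermodynamic situation---this holds for all $\alpha\geq0$, while for the genuinely non-commuting sandwiched version it is guaranteed for $\alpha\geq\tfrac{1}{2}$. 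Once data processing is secured for the chosen class of operations and for $\alpha$, the remaining manipulations above are routine.
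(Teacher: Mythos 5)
Your proof is correct and takes essentially the same route as the paper's own (very terse) argument: reduce, via the discussion after Thm.~\ref{lemma:forallp} and via Thm.~\ref{thm:generalform}, to verifying $M(\emptyset)=0$, additivity, and monotonicity under catalytic free operations, and obtain the latter two from additivity and the data-processing inequality of $S_{\alpha}$, with the catalyst contribution cancelled by additivity. Your closing caveat is, if anything, more careful than the paper itself, whose footnote simply assumes data processing for all $\alpha>0$ even though for the sandwiched divergence of \cite{Mueller-Lennert13,Wilde13} it is only guaranteed for $\alpha\geq\tfrac{1}{2}$.
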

The proof of this statement follows from the fact that R\'enyi divergences 
satisfy \eqref{eq:simplemonotonicity} and \eqref{eq:simpleadditivity} \footnote{See Ref.\ \cite{Mueller-Lennert13} for an exhaustive analysis of R\'enyi divergences. Indeed, for Thm.\ \ref{thm:renyi} we only need that the function $S_{\alpha}(\cdot||\cdot)$ fulfils the data-processing inequality and additivity. The former implies $\Delta F_{\alpha}^{\beta}(\mc{F}_C(p)) \leq \Delta F_{\alpha}^{\beta}(p)$.}. From all the possible choices of $\alpha$, the case of $\alpha=1$ corresponding to the free energy based on the standard von-Neumann entropy plays a crucial role that will be discussed in Sec.\  \ref{sec:super-additivity}.

\section{The second law and irreversibility}\label{sec:secondlaw}

We will now turn to discussing the connection of our framework developed here
and quantitative second laws of thermodynamics that emerge from it.
In the language introduced here, such second laws are captured by the
 work-value of an object $W_{\mathrm{value}}$ being necessarily 
 smaller than or equal to  its work-cost $W_{\mathrm{cost}}$ (defined in \eqref{eq:defworkvalue} and \eqref{eq:defworkcost}, respectively). {As already discussed after the formulation of Axioms \ref{prin:one} and \ref{prin:two}, the axioms already impose that the definition of work must not allow for \he{either} of the players to get arbitrarily rich, which in spirit encodes the second law of thermodynamics. Indeed, this intuition can be made explicit by noting that Axiom \ref{prin:two}, if we take $p^{(1)}_M=p=p^{(3)}_M$ and $p^{(2)}=\omega$ (where $\omega$ is any thermal object), implies
 \begin{equation}
 0\geq W_{\mathrm{trans}}(p \rightarrow \omega) + W_{\mathrm{trans}}(w \rightarrow p)
 \end{equation}
 which together with \eqref{eq:defworkvalue} and \eqref{eq:defworkcost} imply the second law in the form
 \begin{equation}\label{eq:secondlawuncorr}
 W_{\mathrm{value}}(p)\leq W_{\mathrm{cost}}(p) . 
 \end{equation}}
 We will now discuss the exact conditions when we can expect to get a strict inequality, which is a phenomenon usually referred to as \emph{irreversibility} and that emerges in all the analyses of deterministic work (also called single-shot work extraction) \cite{Dahlsten11,Negative,EgloffRenner,QuantitativeLandauer,
Abergtrullywork,Gemmer}. We will see that this will depend crucially on the restrictions that are imposed over the work-storage device given by the set $\mc{P}$.
 
\subsection{Restrictions imply irreversibility}\label{sec:irreversibility}

{Let us first consider the case where no restrictions are imposed on the form of the work-storage device,} that is, $\mc{P}$ is the set of all pairs of states and Hamiltonians.  It is then maybe not surprising that \emph{reversibility} arises, in the sense that {$W_{\mathrm{value}}=W_{\mathrm{cost}}$} is true. The reason for this is simple: the best strategy that Merlin can employ to extract work from an object $p_M$ is just giving the system to Arthur. In this case the transition on $A$ is given by $\emptyset \rightarrow  p_{A}=p_{M}$ and thus the work is given simply by $M(p_M)$. The same is true in the case of the work-cost of the object. Merlin can just create $p_M$ by taking it from Arthur. Hence, summarizing, we see that if $\mc{P}$ is unrestricted we find
\begin{equation}
  W_{\mathrm{value}}(p_M)=M(p_M)=W_{\mathrm{cost}}(p_M) \quad\text{($\mc{P}$ unrestricted)}. \nonumber
\end{equation}

At a more heuristic level, we have seen that the tasks involving thermodynamical work become trivial when no restrictions are imposed on $\mc{P}$, since the entire process reduces to Merlin giving the physical system he possesses to Arthur. 
This comes with no surprise if we think about the analogue situation in phenomenological thermodynamics. If Merlin has the typical resource in classical thermodynamics, namely, some instance of ``burning fuel'', then he cannot 
simply give it to Arthur, expecting that the latter accepts it as a valid form of work. But this is only because in phenomenological thermodynamics, it is explicitly assumed that work comes in a very specific form:
This could, for example, be the height of a massive body in a potential when $\mc{P}$ is the set of work-storage devices described by a deterministic state-variable. Otherwise, if $\mc{P}$ was completely unrestricted, giving to Arthur simply the burning fuel as such---and also all the other parts of the machine---would be indeed the best strategy for Merlin. Any other strategy would involve interactions with a heat bath, which would necessarily decrease the value of the burning fuel as measured by any monotone function. 

That said, the limitations on the set $\mc{P}$, rather than being a technicality, impose the very conditions so that non-trivial thermodynamical processes take place: Merlin will now have to transform resources $p_{M} \notin \mc{P}$ into resources that are in $\mc{P}$, possibly at the prize of dissipating  the resource partially, which in turn yields irreversibility of the form $W_{\mathrm{value}}<W_{\mathrm{cost}}$. To illustrate this point, we will discuss in detail particular examples of restrictions.

\subsection{Examples of restrictions: Redefining $\epsilon$-deterministic work}\label{sec:epsilondet}

Let us discuss some examples of meaningful restrictions that can be imposed on the set of states and Hamiltonians 
and see how they led to irreversibility in the form of $W_{\mathrm{value}}<W_{\mathrm{cost}}$. In Sec.\ \ref{sec:notionsofwork} we have already briefly introduced the notion of $\epsilon$-deterministic work. Intuitively it describes the situation of work-storage devices which are almost in energy-eigenstates and where work is measured in terms of the energy-difference of these eigenstates. 
The original formulation of $\epsilon$-deterministic work introduced in Ref.\ \cite{Nanomachines} does not qualify for a valid work quantifier respecting Axioms \ref{prin:one} and \ref{prin:two}, as it {is discussed in Appendix \ref{sec:otherforms}. However, the notion of $\epsilon$-deterministic work can be naturally integrated in our formalism modifying slightly the function $\mc{W}_{\text{det}}$, while keeping the physical constraint $\mc{P}_\epsilon$}. Let us therefore show how the idea can be transferred into our setting and cast into a valid work-quantifier. Consider the following set of qubit work-storage devices,
\begin{equation}
\mc{P}_\epsilon := \left\{(\rho,H)\ \big{|}\ \norm{\rho - \ketbra{E}{E}}_1< 2\epsilon, H\ket{E}=E\ket{E} \right\}. \nonumber
\end{equation}
The operational meaning of $\epsilon\geq 0$ roughly is the optimal probability to be able to distinguish the state $\rho$ from an energy-eigenstate in a measurement. As work-quantifier we can choose any work quantifier that respects Axioms \ref{prin:one} and \ref{prin:two} for the set $\mc{P}_{\epsilon}$. We will analyse for simplicity the one induced by the von~Neumann free energy, that is $\mc{W}(p\rightarrow p',\beta)=\Delta F^\beta_1(p')-\Delta F^\beta_1(p)$. In the case of $\epsilon=0$, all states are energy eigenstates and, if the Hamiltonian does not change in a transition, the work-quantifier simply measures the energy-difference between the states before and after the transition. This coincides with the original definition of $\epsilon$-deterministic work given in Ref.\ \cite{Nanomachines} only when $\epsilon=0$ and $p$ and $p'$ have the same Hamiltonian. However, it recovers in spirit the notion of $\epsilon$-deterministic work in a way that is compatible with our axiomatic approach.

Let us now show that one indeed obtains irreversibility in this setting. This can be shown in the simplest case of $\epsilon=0$. It is implied by the results of Ref.\ \cite{Nanomachines} that if Merlin has a full-rank system described by $p_M$, it cannot be used to induce a transition on the work-storage device of the form $\proj{0}_A \rightarrow \proj{E}_A$. Hence, $W_{\mathrm{value}}(p_M)=0$ for $\mc{P}_{0}$. Nonetheless, $p_M$ may by a system arbitrarily far from equilibrium, hence it is necessary to spend resources to create it and $W_{\mathrm{cost}}(p_M)\geq \Delta F^\beta_1(p_M)>0$.

Furthermore, we expect the phenomenon of \emph{irreversibility} to emerge in numerous physically meaningful sets other than the $\epsilon$-deterministic work extraction. For instance, 
one may imagine restrictions on $\mc{P}$ that reflect work-storage devices whose Hilbert-space dimension is bounded by some finite number. Alternatively, one may consider one whose states' entropy or free energy is bounded from above. We expect that irreversibility emerges in any such setting for at least some systems, since Merlin will not be in general allowed to give his system to Arthur. The former will have to interact with the heat bath leading to unavoidable dissipation and irreversibility. We will leave the detailed investigation of such scenarios for future work.


\section{The role of correlations, the second law and super-additivity}\label{sec:super-additivity}

{We will now turn to discuss the role of correlations between the fuel (Merlin system $M$) and the work-storage device $A$ and the implications that it has for the characterisation of the work quantifier $\mc{W}$. To do this, let us first define a quantity similar to the work of transition in Def.\ \ref{def:workoftransition}, but where the fuel is allowed to establish correlations with the work-storage device.
\begin{definition}[Correlated work of transition]\label{def:correlatedworkoftransition} Given a work quantifier $\mc{W}$ and inverse temperature $\beta$, a set of restrictions $\mc{P}$, 
and initial and final objects of $M$, denoted by $p^{(i)}_M$ and $p^{(f)}_M$, respectively, the \emph{correlated work of transition} $W^{\mathrm{corr}}_{\mathrm{trans}}(p^{(i)}_M \rightarrow p^{(f)}_M,\beta)$ is defined as 
\begin{eqnarray}
\nonumber&&W^{\mathrm{corr}}_{\mathrm{trans}}(p_M^{(i)} \rightarrow p_M^{(f)},\beta) \\
&:=&\!\!\!\!\!\!  \sup_{ \substack{p_{A}^{(i)},p_{A}^{(f)} \in \mc{P} ; \\ p_{MA}^{(f)}  \in  \mc{F}_C(p^{(i)}_M \otimes p^{(i)}_A)\\  
}} \!\!\!\!\!\!\!\!  \mc{W}( p_A^{(i)} \rightarrow p_A^{(f)},\beta).\label{eq:workextcorr}
\end{eqnarray}
\end{definition}
Note that the only difference with Def.\ \ref{def:correlatedworkoftransition} is that the supremum is taken over protocols that allow the final state $p_{MA}^{(f)}$ to have arbitrary correlations. \he{We can also define the correlated work cost and value as}
\begin{eqnarray}
\label{eq:defcorrworkvalue}&&W^{\mathrm{corr}}_{\mathrm{value}}(p_{M},\beta):=  W^{\mathrm{corr}}_{\mathrm{trans}}(p_M \rightarrow \omega_{\beta},\beta), \\
\label{eq:defcorrworkcost}&&W^{\mathrm{corr}}_{\mathrm{cost}}(p_{M},\beta):=- W^{\mathrm{corr}}_{\mathrm{trans}}(w_{\beta} \rightarrow p_M ,\beta),
\end{eqnarray}
 It is to be expected that Axioms \ref{prin:one} and \ref{prin:two} are not sufficient 
 to capture the second law in the case where correlations are allowed. For instance Axiom \ref{prin:two} captures the idea that Merlin cannot get rich while returning the fuel to the same initial state. But in principle, \he{it} does not prevent Merlin from getting rich by (\he{despite} returning the fuel to the same state) establishing correlations between the fuel and the work-storage device. This is indeed the case: we can find a work quantifier $\mc{W}$ fulfilling Axiom \ref{prin:one} and \ref{prin:two} such that $W^{\mathrm{corr}}_{\mathrm{value}}(p_M) > W^{\mathrm{corr}}_{\mathrm{cost}}(p_M)$. \he{Although this might not be} a surprising result we include here a specific example, because it \he{illustrates} how the notion of super-additivity will come into play: the example relies on} the use of work quantifiers $\mc{W}(p\rightarrow p')=M(p')-M(p)$ such that $M$ is \emph{not} super-additive, where super-additivity means that $M(p_{AB})\geq M(p_A)+M(p_B)$, whenever $p_{AB},p_A,p_B\in\mc{P}$. 
 
  \he{Assume now} a bipartite state $p_{MA}$ and a monotone $M$ such that super-additivity is violated, that is, $M(p_{MA})<M(p_M)+M(p_A)$. Let us first look at $W^{\mathrm{corr}}_{\mathrm{cost}}(p_M)$. One particular protocol to create $p_M$ consists of Arthur having initially $p_{MA}$ and giving subsystem $M$ to Merlin, while keeping $p_A$. This particular protocol gives an upper bound to the work cost as
\begin{equation}\label{eq:2lawviolation1}
W^{\mathrm{corr}}_\mathrm{cost}(p_M)\leq M(p_{MA})-M(p_A).
\end{equation}
Secondly, we can lower bound $W_{\mathrm{value}}(p_M)$, simply by performing the obvious protocol where Merlin gives $p_M$ to Arthur, resulting in
\begin{equation}\label{eq:2lawviolation2}
W^{\mathrm{corr}}_\mathrm{value}(p_M)\geq M(p_M).
\end{equation}
Combining the fact that $p_{MA}$ violates super-additivity with eq.'s \eqref{eq:2lawviolation1} and \eqref{eq:2lawviolation2} results in a strict violation of the second law $W^{\mathrm{corr}}_\mathrm{value}(p_M)>W^{\mathrm{corr}}_\mathrm{cost}(p_M)$.

Let us now discuss the implications of this example. Suppose that Merlin would like to use the fact that $W^{\mathrm{corr}}_\mathrm{value}(p_M)>W^{\mathrm{corr}}_\mathrm{cost}(p_M)$ to become arbitrarily rich, or in other words, create a \emph{perpetuum-mobil\'e}. He can start by having initially $p_M$ and obtaining $W^{\mathrm{corr}}_{\mathrm{value}}(p_M)$. Then he will create again $p_M$, having paid $W^{\mathrm{corr}}_{\mathrm{cost}}(p_M)$ and thus resulting in an overall benefit. Note that $M$ {is returned to its original state after each cycle}, however it becomes correlated with the work-storage device. When Merlin repeats those processes, he will need fresh uncorrelated work-storage devices each time, devices
that end up all being correlated with Merlin's catalyst 
{and among them}. Hence, Merlin is getting arbitrarily rich without spending resources in the sense that he is not changing his system which behaves like a catalyst, but he does spend resources, because he is establishing correlations between $M$ and a new work-storage device at each cycle. In other words, Merlin is spending ``absence of correlations'', hence it seems natural that he can obtain benefit from it. A similar, but non-equivalent, effect has been discussed in Ref.\ \cite{Mueller14}, where the correlations are established among different parts of the catalyst. Thus, one possible viewpoint is to state that in order to account properly for resources, correlations cannot be created. Hence, the second law would take the form \eqref{eq:secondlawuncorr} which is indeed fulfilled for any work quantifier satisfying Axioms \ref{prin:one} and \ref{prin:two}.

{A complementary approach to capture the role of correlations is to take the opposite view: Correlating the catalyst with the work-storage device does not spend any resource and hence, it should be considered a valid operation. Furthermore, the work quantifier has to be modified accordingly to prevent from violations of the second-law (even if correlations are created) as given by $W^{\mathrm{corr}}_\mathrm{value}(p_M)>W^{\mathrm{corr}}_\mathrm{cost}(p_M)$. For this, we introduce a reformulation of Axiom \ref{prin:two} that accounts for correlations. We highlight that we do not regard this reformulation as being as fundamental as Axiom \ref{prin:two}. It only aims at capturing in a consistent way which are the valid work quantifiers if correlations are treated as a free resource, in the spirit of Ref.\ \cite{Mueller14}. }

{\begin{axm}[{Correlated cyclic transitions of the fuel}]\label{prin:three}
For any cyclic sequence of transitions {of the the ``fuel'' (Merlin's system) $p_M^{(1)}\rightarrow p_M^{(2)}\rightarrow\cdots\rightarrow p_M^{(n)}=p_M^{(1)}$, the sum of the optimal work that Merlin can obtain in each sequence when correlations with the work-storage device are allowed (this is given by $W^{\mathrm{corr}}_{\mathrm{trans}}$ in \eqref{eq:workextcorr}) is smaller or equal to zero,
\begin{equation}\label{eq:axiom3}
\sum_{i=1}^{n-1} W^{\mathrm{corr}}_{\mathrm{trans}}(p_M^{(i)}\rightarrow p_M^{(i+1)},\beta) \leq 0.
\end{equation}}
\end{axm}}
The intuition behind Axiom \ref{prin:three} is \he{similar }{to the one of Axiom \ref{prin:two}, with the only difference that Merlin is not allowed to become arbitrarily rich even by creating correlations with the work-storage devices. Imposing Axiom \ref{prin:three} has two important consequences. Firstly, one can easily show that if one makes use of Axiom \ref{prin:three}, then the usual second law is fulfilled, stated as
\begin{equation}\label{eq:usualsecondlaw}
W^{\mathrm{corr}}_{\mathrm{value}}(p_M) \leq W^{\mathrm{corr}}_{\mathrm{cost}}(p_M). 
\end{equation}
Secondly, allowing for correlations has consequences on the allowed work quantifiers $\mc{W}$. Taking the simplest case of $n=2$ and $p_{M}^{(1)}=p_{M}^{(2)}=p_M$, Axiom \ref{prin:three} implies that $\he{W_{\mathrm{trans}}^{\mathrm{corr}}(p_M\rightarrow p_M)}\leq 0$ $\forall p_M$. Combining this fact with Def.\ \ref{def:correlatedworkoftransition} one can easily see that $\mc{W}$, in order to respect Axioms \ref{prin:one} and \ref{prin:three} has to satisfy
\begin{equation}
\nonumber \mc{W}(p_A \rightarrow q_A) \leq 0
\end{equation}
for all $q_A,p_A$ in $\mc{P}$ such that $q_A \in \mc{F}^{\text{Corr.}}_{C}(p_A)$, where we define $\mc{F}^{\text{Corr.}}_{C}(p)$ \he{to be} the set of objects that can be reached from $p$ by using thermal baths and an ancillary system that is left, after the interaction with the bath, with the same marginal state and Hamiltonian, but possibly correlated with the system. We will refer to this transitions as \emph{correlated catalytic free transitions}. \he{It is easy to see} that $p_{A} \otimes p_B \in \mc{F}^{\text{Corr.}}_{C}(p_{AB}))$ for $p_A,p_B,p_{AB} \in \mc{P}$. Together with additivity, this implies that in order to respect Axioms \ref{prin:one} and \ref{prin:three}, the work quantifier is written as $\mc{W}(p\rightarrow p')=M(p')-M(p)$, where 
\begin{equation}\label{eq:super-additivity}
M(p_{AB})\geq M(p_A) +M(p_B).
\end{equation}}
As a consequence, the following is true:

\begin{theorem}[Von-Neumann free energy in the unrestricted case]
Under Axioms~\ref{prin:one} and \ref{prin:three}, from all the R\'enyi free energies, only the von~Neumann free energy
\begin{equation}
\Delta F^\beta_1(\rho,H) := \frac{1}{\beta}S(\rho||\omega_{H,\beta})
\end{equation}
remains to be a valid monotone to define a work-quantifier for \emph{arbitrary} sets $\mc{P}$ \he{(up to a constant). It gives rise} to a second law in the form
\begin{equation}
\he{W^{\mathrm{corr}}_{\mathrm{value}}(p_M) \leq \Delta F^\beta_1(p_M) \leq W^{\mathrm{corr}}_{\mathrm{cost}}(p_M). }
\end{equation}
\end{theorem}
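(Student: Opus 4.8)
The plan is to reduce the whole statement to a single property---\emph{super-additivity}---and then to isolate within the R\'enyi family the unique exponent at which it holds. Since the supremum in \eqref{eq:workextcorr} ranges over a larger set than the one in Def.~\ref{def:correlatedworkoftransition}, one has $W_{\mathrm{trans}}\leq W^{\mathrm{corr}}_{\mathrm{trans}}$ pointwise, so Axiom~\ref{prin:three} implies Axiom~\ref{prin:two}; hence Theorem~\ref{thm:generalform} and Lemma~\ref{lemma:aditivemonotonicitymeaning} already give $\mc{W}(p\rightarrow q)=M(q)-M(p)$ with $M$ additive and monotone, and the discussion around \eqref{eq:super-additivity} adds super-additivity, $M(p_{AB})\geq M(p_A)+M(p_B)$. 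By Theorem~\ref{thm:renyi} every $\Delta F^\beta_\alpha$ is already additive and monotone, so for members of this family the only outstanding requirement is super-additivity. The theorem thus amounts to the dichotomy: $\Delta F^\beta_\alpha$ is super-additive on all bipartite objects if and only if $\alpha=1$ (positive rescalings $c\,\Delta F^\beta_1$, $c>0$, accounting for the clause ``up to a constant'').

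For the positive direction I would specialise to a non-interacting Hamiltonian $H_{AB}=H_A+H_B$, so the Gibbs state factorises, $\omega_{H_{AB},\beta}=\omega_{H_A,\beta}\otimes\omega_{H_B,\beta}$, and invoke the exact identity
\[
S(\rho_{AB}\|\omega_{H_A,\beta}\otimes\omega_{H_B,\beta})=S(\rho_A\|\omega_{H_A,\beta})+S(\rho_B\|\omega_{H_B,\beta})+I(A\!:\!B),
\]
where $I(A\!:\!B):=S(\rho_A)+S(\rho_B)-S(\rho_{AB})\geq 0$ by subadditivity of the von~Neumann entropy; dividing by $\beta$ gives super-additivity of $\Delta F^\beta_1$, with equality exactly on products. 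For the negative direction I would show that the super-additivity gap $G_\alpha:=S_\alpha(\rho_{AB}\|\sigma_A\otimes\sigma_B)-S_\alpha(\rho_A\|\sigma_A)-S_\alpha(\rho_B\|\sigma_B)$, with $\sigma_X:=\omega_{H_X,\beta}$, can be driven negative for every $\alpha\neq1$. With trivial Hamiltonians $G_\alpha$ is (minus) the subadditivity defect of the R\'enyi entropy $S_\alpha$, which already fails for $\alpha>1$: for instance the diagonal two-qubit state with weights $(0.9,0.05,0.05,0)$ has $\tr\rho_{AB}^2<\tr\rho_A^2\,\tr\rho_B^2$, i.e.\ $S_2(\rho_{AB})>S_2(\rho_A)+S_2(\rho_B)$, so $G_2<0$. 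For $0<\alpha<1$ I would exploit the freedom in the reference Gibbs state (a nonzero Hamiltonian), which I expect to render $G_\alpha<0$ there as well.

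Granting the dichotomy, only $\Delta F^\beta_1$ survives, and the announced second law follows from monotonicity, additivity and super-additivity together with $\Delta F^\beta_1(\omega_\beta)=0$. For the upper bound, any protocol admissible in \eqref{eq:workextcorr} for $W^{\mathrm{corr}}_{\mathrm{value}}(p_M)=W^{\mathrm{corr}}_{\mathrm{trans}}(p_M\rightarrow\omega_\beta)$ produces $p^{(f)}_{MA}\in\mc{F}_C(p_M\otimes p^{(i)}_A)$ with $M$-marginal $\omega_\beta$ and $A$-marginal $p^{(f)}_A$; monotonicity gives $\Delta F^\beta_1(p^{(f)}_{MA})\leq\Delta F^\beta_1(p_M)+\Delta F^\beta_1(p^{(i)}_A)$ and super-additivity gives $\Delta F^\beta_1(p^{(f)}_{MA})\geq\Delta F^\beta_1(p^{(f)}_A)$, so that $\mc{W}(p^{(i)}_A\rightarrow p^{(f)}_A)\leq\Delta F^\beta_1(p_M)$; taking the supremum yields $W^{\mathrm{corr}}_{\mathrm{value}}(p_M)\leq\Delta F^\beta_1(p_M)$. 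Applying the identical two inequalities to the reverse transition $\omega_\beta\rightarrow p_M$ gives $W^{\mathrm{corr}}_{\mathrm{trans}}(\omega_\beta\rightarrow p_M)\leq-\Delta F^\beta_1(p_M)$, which by \eqref{eq:defcorrworkcost} is exactly $\Delta F^\beta_1(p_M)\leq W^{\mathrm{corr}}_{\mathrm{cost}}(p_M)$, closing the chain.

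The main obstacle is the negative direction, and specifically the range $0<\alpha<1$. Because super-additivity is an \emph{equality} on products, no product computation separates the exponents, and one must find genuinely correlated states making $G_\alpha<0$; off products the simple classical state above works cleanly for $\alpha>1$, but for $0<\alpha<1$ the R\'enyi entropy is (near-)subadditive, so one must instead use the reference-state/Hamiltonian freedom---or genuinely quantum correlated states---to force $G_\alpha<0$, and do so \emph{uniformly} as $\alpha\to1$, where $G_\alpha\to0$. Establishing this uniform negativity (or, equivalently, invoking the known characterisation of when the R\'enyi family is super-additive) is where the real work lies.
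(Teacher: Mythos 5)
Your reduction is essentially the paper's own (largely implicit) argument: Axiom~\ref{prin:three} implies Axiom~\ref{prin:two} because the supremum in \eqref{eq:workextcorr} runs over a strictly larger feasible set, so Theorem~\ref{thm:generalform} applies; the discussion culminating in \eqref{eq:super-additivity} makes super-additivity necessary; your mutual-information identity $S(\rho_{AB}\|\omega_A\otimes\omega_B)=S(\rho_A\|\omega_A)+S(\rho_B\|\omega_B)+I(A{:}B)$ is exactly the content of the appendix proposition establishing (strong generalized) super-additivity of $\Delta F^\beta_1$; and your two-inequality derivation of $W^{\mathrm{corr}}_{\mathrm{value}}(p_M)\leq\Delta F^\beta_1(p_M)\leq W^{\mathrm{corr}}_{\mathrm{cost}}(p_M)$ (monotonicity plus additivity bounding the final global free energy from above, super-additivity plus $\Delta F^\beta_1(\omega_\beta)=0$ bounding it from below) is precisely the argument the paper states but never writes out; the same pattern, as the appendix remarks, is what shows that super-additivity is also \emph{sufficient} for monotonicity under correlated catalysis, which your phrase ``the only outstanding requirement is super-additivity'' tacitly uses. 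Where your proof is complete it is in fact \emph{more} explicit than the paper, which exhibits no counterexample for $\alpha\neq 1$ at all; your classical example for $\alpha=2$ checks out ($0.815<0.905^2$).

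The one genuine gap is the one you flagged yourself: the failure of super-additivity for $0<\alpha<1$. Two remarks. First, the worry about uniformity as $\alpha\to 1$ is a red herring: the statement is pointwise in $\alpha$, so for each fixed $\alpha\neq1$ a single violating object suffices, and different objects may serve different values of $\alpha$. Second, the mechanism you anticipated---exploiting a skewed reference Gibbs state, i.e.\ a nonzero Hamiltonian---does close the gap, already with classical states. Take the perfectly correlated two-bit state $\rho_{AB}=\frac{1}{2}\left(\proj{00}+\proj{11}\right)$ and local Hamiltonians $H_A=H_B=E\proj{1}$, whose Gibbs state is $\mathrm{diag}(1-\delta,\delta)$ with $\delta=(1+\e^{\beta E})^{-1}$. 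All operators commute, so the quantum R\'enyi divergence reduces to the classical one, and a direct computation gives, with $u=(1-\delta)^{1-\alpha}$ and $v=\delta^{1-\alpha}$,
\begin{equation}
\beta\left[\Delta F^\beta_\alpha(p_{AB})-\Delta F^\beta_\alpha(p_A)-\Delta F^\beta_\alpha(p_B)\right]
=\frac{1}{\alpha-1}\log\left(2^\alpha\,\frac{u^2+v^2}{(u+v)^2}\right).
\end{equation}
For $0<\alpha<1$ and $\delta\to 0^+$ one has $u\to1$, $v\to0$, so the argument of the logarithm tends to $2^\alpha>1$ while $1/(\alpha-1)<0$; hence the gap is strictly negative for all sufficiently large $E$, and super-additivity fails. (Note that this example gives a \emph{positive} gap when $\alpha>1$, and conversely your maximally mixed reference example gives a positive gap when $\alpha<1$: the two regimes genuinely require the two different constructions, confirming your diagnosis.) With this insertion your dichotomy, and hence the theorem, is complete---and, once completed, your proof is more self-contained on the exclusion of $\alpha\neq1$ than the paper itself, which asserts it without proof.
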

Note that the von~Neumann free energy can also be written as
\begin{equation}
\Delta F^\beta_1(\rho,H) = F^\beta_1(\rho,H) - F^\beta_1(\omega_{H,\beta},H)
\end{equation}
with $F^\beta_1(\rho,H) = \tr(\rho H) - S(\rho)/\beta$. It therefore closely resembles the phenomenological free energy $U-TS$, or more precisely the exergy with respect to an environment of temperature $T=1/\beta$. We hence recover the statement that the maximum amount of work that can be extracted by a working system with access to a heat bath of temperature $T$ is given by the exergy of the working system with respect to the temperature $T$---but using reasoning very different from that of phenomenological thermodynamics.
It is also interesting to see that on a formal level in the framework developed here, the von-Neumann free energy does not arise from considering an asymptotic setting, but rather arises from the way \he{correlations are taken into account. }

We have seen that super-additivity and the von~Neumann free energy emerge naturally once we allow for the creation of correlations between the catalyst and the system. A similar result was obtained in Ref.\ \cite{Mueller14}, where it was shown that, for classical states, the change of von~Neumann free energy decides whether a transition between two objects is possible if multiple catalyst can be used, which can become correlated with each other, but not with the system. 

In the light of the previous discussions one might wonder whether super-additivity already singles out the von~Neumann free energy as the \emph{unique} valid monotone to define a work quantifier in the case of correlated catalysis. \he{This is true in} the case of vanishing Hamiltonians but otherwise unrestricted sets $\mc{P}$,
which we state in the following theorem.

\begin{theorem}[Von-Neumann free energy as a work quantifier for vanishing Hamiltonians]
\he{Consider the set of all finite-dimensional quantum states and the vanishing Hamiltonian $\mc{P}= \{(\rho,\id)\}$ and free operations given by thermal operations. Then }
the unique work quantifier \he{with continuous monotone $M$ and} fulfilling Axioms \ref{prin:one} and \ref{prin:three} is given, up to a constant factor, by
\begin{equation}
\mc{W}(p \rightarrow p',\beta)=\Delta F^\beta_1(p')-\Delta F^\beta_1(p) 
\end{equation}
where $\Delta F^\beta_1$ is the von~Neumann free energy.
\end{theorem}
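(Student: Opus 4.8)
The plan is to show that for $\mc{P}=\{(\rho,\id)\}$ with vanishing Hamiltonian, the only continuous additive monotone $M$ satisfying super-additivity \eqref{eq:super-additivity} is, up to a constant, the von~Neumann free energy $\Delta F^\beta_1$. Since Hamiltonians vanish, $\Delta F^\beta_1(\rho,\id)$ reduces (up to additive constants and the $1/\beta$ prefactor) to $-S(\rho)/\beta$ plus a constant, so the relative entropy to the maximally mixed state is, up to affine rescaling, just the negative von~Neumann entropy. Thus the theorem is equivalent to the statement that any continuous function $M(\rho)$ on finite-dimensional states that is additive under tensor products, monotone under correlated catalytic free (thermal) operations, and super-additive, must be proportional to $-S(\rho)$ (equivalently to $S(\rho\|\id/d)$). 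I would first reduce to this entropic characterisation and then invoke a known axiomatic characterisation of entropy.

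First I would collect the constraints $M$ inherits from Axioms~\ref{prin:one} and \ref{prin:three} via the earlier structural results. By Thm.~\ref{thm:generalform} and Lemma~\ref{lemma:aditivemonotonicitymeaning}, $M$ is additive, positive, normalised ($M(\emptyset)=0$), and monotone under $\mc{F}_C$; the discussion preceding the theorem shows Axiom~\ref{prin:three} further forces super-additivity \eqref{eq:super-additivity}. The key additional input for vanishing Hamiltonians is that thermal operations at any $\beta$ reduce to noisy operations (mixing with the maximally mixed state and applying unitaries), so $\mc{F}_C$-monotonicity becomes monotonicity under noisy/unital catalytic maps. In particular $M$ is invariant under unitary conjugation (unitaries and their inverses are free), hence $M(\rho)$ depends only on the spectrum of $\rho$. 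Combined with additivity, continuity, and the noisy-operation monotonicity, $M$ becomes a continuous, additive, unitarily-invariant monotone under doubly-stochastic (majorisation) maps.

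Next I would exploit super-additivity together with monotonicity to pin down the functional form. The crucial step is to use that for the maximally mixed marginals one has both $M(\rho_{AB})\ge M(\rho_A)+M(\rho_B)$ (super-additivity) and, via the correlated-catalytic monotone property applied to $\rho_{AB}\in\mc{F}^{\text{Corr.}}_{C}$-reachability arguments, the matching inequality in the other direction when $\rho_{AB}$ is itself a product after optimal correlation, forcing additivity to be saturated on the relevant states. A clean route is: unitary invariance reduces $M$ to a symmetric function of eigenvalues $g(\lambda_1,\dots,\lambda_d)$; additivity under tensor products translates to a Cauchy-type functional equation $g(\{\lambda_i\mu_j\})=g(\{\lambda_i\})+g(\{\mu_j\})$; monotonicity under majorisation (doubly-stochastic maps) forces $g$ to be Schur-concave (or convex, depending on sign convention); and continuity upgrades the Cauchy equation to the logarithmic/entropic solution. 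These three properties—symmetry, additivity, and Schur-concavity with continuity—are exactly the hypotheses of the Shannon/von~Neumann entropy uniqueness theorems, so $g$ must equal $c\sum_i \lambda_i\log\lambda_i$ for some constant $c$, i.e.\ $M=-cS$. Matching the sign via positivity and super-additivity (the von~Neumann entropy is subadditive, so $-S$ is super-additive, fixing the sign of $c$) yields $M\propto \Delta F^\beta_1$ up to the claimed constant.

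The main obstacle I anticipate is the uniqueness step: showing that additivity plus $\mc{F}_C$-monotonicity plus super-additivity and continuity leave \emph{no} freedom beyond the von~Neumann entropy. The earlier Thm.~\ref{thm:renyi} shows all R\'enyi divergences $S_\alpha$ are additive and monotone, so additivity and monotonicity alone are far too weak—they admit the entire R\'enyi family. The work is therefore carried entirely by super-additivity: I must show that among the R\'enyi free energies (and indeed among all candidate monotones), only $\alpha=1$ is super-additive for vanishing Hamiltonians, and then that super-additivity together with continuity and additivity rules out all non-entropic monotones as well. Concretely, for $\alpha\neq 1$ the R\'enyi entropies fail the super-additivity inequality on suitable correlated states (one can test on a two-qubit example where the reduced states are maximally mixed but the global state is pure), so the correlated-catalytic axiom excludes them; the delicate part is promoting this from ``excludes the R\'enyi family'' to a full uniqueness proof, which is why the theorem must restrict to continuous $M$ and invoke an entropy-characterisation theorem rather than a direct comparison. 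I expect the cleanest argument to route through a majorisation/Schur-concavity characterisation of additive continuous monotones and then apply the fact that von~Neumann entropy is the unique such quantity that is also (strictly) super-additive.
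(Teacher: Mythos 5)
Your proposal is correct and follows essentially the same route as the paper's proof: unitary invariance reduces $M$ to a spectral (symmetric) function, monotonicity under mixtures of unitaries (thermal operations for vanishing Hamiltonians) gives Schur-concavity, additivity of $M$ gives additivity, super-additivity of $M$ gives sub-additivity of the entropic part, and a characterisation theorem for Shannon entropy (the paper cites Ref.~\cite{Mueller2015}, which requires exactly additivity, sub-additivity and Schur-concavity plus continuity) then forces $M\propto \log d(\rho)-S(\rho)$, with normalisation on Gibbs objects fixing the dimension-dependent constants. Note only that your second paragraph's claim that symmetry, additivity, Schur-concavity and continuity alone suffice is inaccurate (the R\'enyi entropies satisfy all four), but your third paragraph correctly identifies sub-/super-additivity as the indispensable extra hypothesis, which is precisely how the paper's cited uniqueness theorem is stated.
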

\he{\begin{proof} 
Without loss of generality, consider the candidates for a work-quantifier defined as $M(\rho,\id):=\alpha(\log d(\rho) - f(\rho))$, where $d(\rho)$ is the dimension of the Hilbert-space of $\rho$, $\alpha$ is some positive constant and $f(\rho)$ is a yet unspecified continuous (on states of fixed dimension) function. We will show that $f$ has to be given by the von~Neumann entropy. Since  $S(\rho || \id_{d(\rho)} / d(\rho)) = \log d(\rho) - S(\rho)$ this implies the claim. Using additivity, super-additivity  we immediately obtain that $f$ has to be additive and sub-additive. From monotonicity under thermal operations we obtain that a) $f(U\rho U^\dagger) = f(\rho)$ for any unitary and b) $f(\sum_{i}p_i U_i\rho U_i^\dagger) \geq f(\rho)$ for any probability-distribution $p_i$ over unitaries $U_i$. Property a) implies that $f$ only depends on the eigenvalues of $\rho$ and is therefore equivalent to a function $\tilde{f}$ on probability distributions, which fulfills additivity and sub-additivity. Property b) implies that $\tilde{f}$ is \emph{Schur-concave}, i.e., can only increase under random permutations. In Ref.~\cite{Mueller2015} it has been shown that for probabiliy-distributions without zeros, such a function is of the form $\tilde{f}(p) = c H(p) + c_{d(p)}$, where $H$ is the Shannon-entropy, $c\geq 0$ and $c_{d_1d_2} = c_{d_1} + c_{d_2}$. By continuity, this form extends to arbitrary probability-distributions and we obtain $f(\rho) = c S(\rho)+c_{d(\rho)}$, where $S$ is the von~Neumann entropy. From $M(\id_d/d,\id_d)=0$, we obtain $c \log d + c_d = \log d$. This implies
\begin{align}
  M(\rho,\id) = c\alpha \left(\log d(\rho)-S(\rho)\right) = \alpha' \left(\log d(\rho)-S(\rho)\right), \nonumber
\end{align}
which finishes the proof.
\end{proof}}
\he{One might wonder whether the result could also hold in infinite-dimensional systems. However, in such systems the vanishing Hamiltonian does not have a well-defined thermal state for any temperature, so that it should not be considered as a physical Hamiltonian on such systems.} \je{To extend this result to more general classes of Hamiltonians constitutes an interesting open problem.} 
%
%
Importantly, if true, this does still \he{not} imply that one can only make use of the von~Neumann free energy as a work quantifier. There are many situations of physical relevance where the set of $\mc{P}$ is restricted, where one could still conceive other work quantifiers. Indeed, we \he{have seen in Sec.~\ref{sec:secondlaw}} how imposing constraints $\mc{P}$, rather than a technicality, is crucial to recover several commonly discussed regimes in which thermodynamics is expected to operate.

\section{Summary}

In this work, we have approached the subtle and much discussed question of how extend the notion of work in thermodynamics to the small scale, where fluctuations and quantum effects play a relevant role. {We have done so by distinctly shifting the mindset that is usually taken when considering notions of work. We deviate from the 
implicit assumption that work should necessarily be determined by the energy stored in the quantum analogue of a lifted weight.
Instead, we consider arbitrary classes of systems other than lifted weights, intended to realistically account  for the transitions that a quantum thermal machine is expected to perform. Within this extended family of systems, we take a strictly operational approach and pose the problem of identifying reasonable 
functions that evaluate the value of a given transition; these functions are supposed to have basic properties analogue to the \he{familiar notion of work in phenomenological} thermodynamics. These properties are stated 
in the form of strictly operational axioms that capture minimum reasonable conditions that 
meaningful work quantifiers are expected to satisfy. This is again a distinct deviation in mindset: We do not define quantities ad-hoc, but 
aim at clarifying those characteristic features that any work quantifier should fulfil, providing a general framework.}

Remarkably, simple and elementary as these axioms may appear, they provide sufficient mathematical structure
to give rise to surprisingly detailed and stringent properties that any function 
quantifying work has to fulfil, properties that can be rigorously derived from the axioms. 

{One of the advantages the formalism is that it is general enough to allow one to derive central concepts in thermodynamics without taking the definition of work as energy in a lifted weight as an \emph{a priori} given element. For instance, our generalised work quantifiers give rise to quantitative versions of the second law. 
Similarly, one can precisely discuss notions of irreversibility in this framework, in the sense that in order to obtain useful work, it is necessary to dissipate
the resources provided by the ``fuel'', concomitant to familiar notions in thermodynamics.}

{When the system is taken to be an analogue of a lifted weight in the quantum regime, our general framework recovers the usual definition of work as the energy difference as a particular case. At an more heuristic level, this can be summarised by the insight that the task of extracting work is nothing but the transfer of free-energy from an arbitrary system (the fuel) to another system which has to fulfil a set of given restrictions (the work-storage device). In the specific situation in which 
those restrictions are such the work-storage device is a lifted weight, then the free energy coincides with the energy.}

For coherence of the presentation, we have focused on work quantifiers in quantum thermodynamics in the main text. It should be clear, however,
that the technical results achieved are general enough to capture also other quantum resource theories, beyond the
quantum thermodynamic context. The arguments laid out in main text and the supplementary material clearly 
highlight the role that catalysts and their correlations play in such resource theories. 
Furthermore, our results show that there is a close connection between catalysis, the built-up of correlations, and of reversibility. In particular, we have shown in what precise way a restriction of the state-space of work-storage devices is necessary in order to obtain irreversibility.
{It is our hope that the approach taken here can be seen as a further invitation to revisit notions derived from classical thermodynamics
and taking an operational perspective when aiming at clarifying in what precise way they can be extended to the quantum regime.}

\section{Acknowledgements}
This work has been supported by the EU (SIQS, AQuS, RAQUEL), the ERC (TAQ), 
the 
Alexander von Humboldt-Foundation and 
the Studienstiftung des Deutschen Volkes. 


\begin{thebibliography}{47}%
\makeatletter
\providecommand \@ifxundefined [1]{%
 \@ifx{#1\undefined}
}%
\providecommand \@ifnum [1]{%
 \ifnum #1\expandafter \@firstoftwo
 \else \expandafter \@secondoftwo
 \fi
}%
\providecommand \@ifx [1]{%
 \ifx #1\expandafter \@firstoftwo
 \else \expandafter \@secondoftwo
 \fi
}%
\providecommand \natexlab [1]{#1}%
\providecommand \enquote  [1]{``#1''}%
\providecommand \bibnamefont  [1]{#1}%
\providecommand \bibfnamefont [1]{#1}%
\providecommand \citenamefont [1]{#1}%
\providecommand \href@noop [0]{\@secondoftwo}%
\providecommand \href [0]{\begingroup \@sanitize@url \@href}%
\providecommand \@href[1]{\@@startlink{#1}\@@href}%
\providecommand \@@href[1]{\endgroup#1\@@endlink}%
\providecommand \@sanitize@url [0]{\catcode `\\12\catcode `\$12\catcode
  `\&12\catcode `\#12\catcode `\^12\catcode `\_12\catcode `\%12\relax}%
\providecommand \@@startlink[1]{}%
\providecommand \@@endlink[0]{}%
\providecommand \url  [0]{\begingroup\@sanitize@url \@url }%
\providecommand \@url [1]{\endgroup\@href {#1}{\urlprefix }}%
\providecommand \urlprefix  [0]{URL }%
\providecommand \Eprint [0]{\href }%
\providecommand \doibase [0]{http://dx.doi.org/}%
\providecommand \selectlanguage [0]{\@gobble}%
\providecommand \bibinfo  [0]{\@secondoftwo}%
\providecommand \bibfield  [0]{\@secondoftwo}%
\providecommand \translation [1]{[#1]}%
\providecommand \BibitemOpen [0]{}%
\providecommand \bibitemStop [0]{}%
\providecommand \bibitemNoStop [0]{.\EOS\space}%
\providecommand \EOS [0]{\spacefactor3000\relax}%
\providecommand \BibitemShut  [1]{\csname bibitem#1\endcsname}%
\let\auto@bib@innerbib\@empty
\bibitem [{\citenamefont {Balzani}\ \emph {et~al.}(2003)\citenamefont
  {Balzani}, \citenamefont {Credi},\ and\ \citenamefont {Venturi}}]{Exp1}%
  \BibitemOpen
  \bibfield  {author} {\bibinfo {author} {\bibfnamefont {V.}~\bibnamefont
  {Balzani}}, \bibinfo {author} {\bibfnamefont {A.}~\bibnamefont {Credi}}, \
  and\ \bibinfo {author} {\bibfnamefont {M.}~\bibnamefont {Venturi}},\
  }\href@noop {} {\emph {\bibinfo {title} {Molecular devices and machines}}}\
  (\bibinfo  {publisher} {Wiley},\ \bibinfo {year} {2003})\BibitemShut
  {NoStop}%
\bibitem [{\citenamefont {Cerefolini}(2009)}]{Exp2}%
  \BibitemOpen
  \bibfield  {author} {\bibinfo {author} {\bibfnamefont {G.}~\bibnamefont
  {Cerefolini}},\ }\href@noop {} {\emph {\bibinfo {title} {Nanoscale
  devices}}}\ (\bibinfo  {publisher} {Springer},\ \bibinfo {year}
  {2009})\BibitemShut {NoStop}%
\bibitem [{\citenamefont {Kurizki}\ \emph {et~al.}(2015)\citenamefont
  {Kurizki}, \citenamefont {Bertet}, \citenamefont {Kubo}, \citenamefont
  {M{\o}lmer}, \citenamefont {Petrosyan}, \citenamefont {Rabl},\ and\
  \citenamefont {Schmiedmayer}}]{Exp3}%
  \BibitemOpen
  \bibfield  {author} {\bibinfo {author} {\bibfnamefont {G.}~\bibnamefont
  {Kurizki}}, \bibinfo {author} {\bibfnamefont {P.}~\bibnamefont {Bertet}},
  \bibinfo {author} {\bibfnamefont {Y.}~\bibnamefont {Kubo}}, \bibinfo {author}
  {\bibfnamefont {K.}~\bibnamefont {M{\o}lmer}}, \bibinfo {author}
  {\bibfnamefont {D.}~\bibnamefont {Petrosyan}}, \bibinfo {author}
  {\bibfnamefont {P.}~\bibnamefont {Rabl}}, \ and\ \bibinfo {author}
  {\bibfnamefont {J.}~\bibnamefont {Schmiedmayer}},\ }\href@noop {} {\bibfield
  {journal} {\bibinfo  {journal} {PNAS}\ }\textbf {\bibinfo {volume} {10}},\
  \bibinfo {pages} {1073} (\bibinfo {year} {2015})}\BibitemShut {NoStop}%
\bibitem [{\citenamefont {Kosloff}\ and\ \citenamefont {Levy}(2014)}]{Exp4}%
  \BibitemOpen
  \bibfield  {author} {\bibinfo {author} {\bibfnamefont {R.}~\bibnamefont
  {Kosloff}}\ and\ \bibinfo {author} {\bibfnamefont {A.}~\bibnamefont {Levy}},\
  }\href@noop {} {\bibfield  {journal} {\bibinfo  {journal} {Ann. Rev. Phys.
  Chem.}\ }\textbf {\bibinfo {volume} {65}},\ \bibinfo {pages} {365} (\bibinfo
  {year} {2014})}\BibitemShut {NoStop}%
\bibitem [{\citenamefont {Alicki}(1979)}]{Exp5}%
  \BibitemOpen
  \bibfield  {author} {\bibinfo {author} {\bibfnamefont {R.}~\bibnamefont
  {Alicki}},\ }\href@noop {} {\bibfield  {journal} {\bibinfo  {journal} {J.
  Phys. A}\ }\textbf {\bibinfo {volume} {12}},\ \bibinfo {pages} {103}
  (\bibinfo {year} {1979})}\BibitemShut {NoStop}%
\bibitem [{\citenamefont {Linden}\ \emph {et~al.}(2010)\citenamefont {Linden},
  \citenamefont {Popescu},\ and\ \citenamefont {Skrzypczyk}}]{Skrzypczyk2010}%
  \BibitemOpen
  \bibfield  {author} {\bibinfo {author} {\bibfnamefont {N.}~\bibnamefont
  {Linden}}, \bibinfo {author} {\bibfnamefont {S.}~\bibnamefont {Popescu}}, \
  and\ \bibinfo {author} {\bibfnamefont {P.}~\bibnamefont {Skrzypczyk}},\
  }\href@noop {} {\bibfield  {journal} {\bibinfo  {journal} {Phys. Rev. Lett.}\
  }\textbf {\bibinfo {volume} {105}},\ \bibinfo {pages} {130401} (\bibinfo
  {year} {2010})}\BibitemShut {NoStop}%
\bibitem [{\citenamefont {Horodecki}\ and\ \citenamefont
  {Oppenheim}(2013{\natexlab{a}})}]{Nanomachines}%
  \BibitemOpen
  \bibfield  {author} {\bibinfo {author} {\bibfnamefont {M.}~\bibnamefont
  {Horodecki}}\ and\ \bibinfo {author} {\bibfnamefont {J.}~\bibnamefont
  {Oppenheim}},\ }\href@noop {} {\bibfield  {journal} {\bibinfo  {journal}
  {Nature Comm.}\ }\textbf {\bibinfo {volume} {4}},\ \bibinfo {pages} {2059}
  (\bibinfo {year} {2013}{\natexlab{a}})}\BibitemShut {NoStop}%
\bibitem [{\citenamefont {Aberg}(2013)}]{Abergtrullywork}%
  \BibitemOpen
  \bibfield  {author} {\bibinfo {author} {\bibfnamefont {J.}~\bibnamefont
  {Aberg}},\ }\href@noop {} {\bibfield  {journal} {\bibinfo  {journal} {Nature
  Comm.}\ }\textbf {\bibinfo {volume} {4}},\ \bibinfo {pages} {1925} (\bibinfo
  {year} {2013})}\BibitemShut {NoStop}%
\bibitem [{\citenamefont {Renes}(2014)}]{Renes14}%
  \BibitemOpen
  \bibfield  {author} {\bibinfo {author} {\bibfnamefont {J.~S.}\ \bibnamefont
  {Renes}},\ }\href@noop {} {\enquote {\bibinfo {title} {Work cost of thermal
  operations in quantum and nano thermodynamics},}\ } (\bibinfo {year}
  {2014}),\ \bibinfo {note} {arXiv:1402.3496}\BibitemShut {NoStop}%
\bibitem [{\citenamefont {Halpern}\ and\ \citenamefont {Renes}()}]{Halpern14}%
  \BibitemOpen
  \bibfield  {author} {\bibinfo {author} {\bibfnamefont {N.~Y.}\ \bibnamefont
  {Halpern}}\ and\ \bibinfo {author} {\bibfnamefont {J.~M.}\ \bibnamefont
  {Renes}},\ }\href@noop {} {\enquote {\bibinfo {title} {Beyond heat baths:
  Generalized resource theories for small-scale thermodynamics},}\ }\Eprint
  {http://arxiv.org/abs/1409.3998} {arXiv:1409.3998} \BibitemShut {NoStop}%
\bibitem [{\citenamefont {Faist}\ \emph
  {et~al.}(2015{\natexlab{a}})\citenamefont {Faist}, \citenamefont {Dupuis},
  \citenamefont {Oppenheim},\ and\ \citenamefont {Renner}}]{Faist15}%
  \BibitemOpen
  \bibfield  {author} {\bibinfo {author} {\bibfnamefont {P.}~\bibnamefont
  {Faist}}, \bibinfo {author} {\bibfnamefont {F.}~\bibnamefont {Dupuis}},
  \bibinfo {author} {\bibfnamefont {J.}~\bibnamefont {Oppenheim}}, \ and\
  \bibinfo {author} {\bibfnamefont {R.}~\bibnamefont {Renner}},\ }\href@noop {}
  {\bibfield  {journal} {\bibinfo  {journal} {Nature Comm.}\ }\textbf {\bibinfo
  {volume} {6}},\ \bibinfo {pages} {7669} (\bibinfo {year}
  {2015}{\natexlab{a}})}\BibitemShut {NoStop}%
\bibitem [{\citenamefont {Aberg}(2014)}]{Aberg}%
  \BibitemOpen
  \bibfield  {author} {\bibinfo {author} {\bibfnamefont {J.}~\bibnamefont
  {Aberg}},\ }\href@noop {} {\bibfield  {journal} {\bibinfo  {journal} {Phys.
  Rev. Lett.}\ }\textbf {\bibinfo {volume} {113}},\ \bibinfo {pages} {150402}
  (\bibinfo {year} {2014})}\BibitemShut {NoStop}%
\bibitem [{\citenamefont {\'Cwikli\'nski}\ \emph {et~al.}(2014)\citenamefont
  {\'Cwikli\'nski}, \citenamefont {Studzi\'nski}, \citenamefont {Horodecki},\
  and\ \citenamefont {Oppenheim}}]{Piotr14}%
  \BibitemOpen
  \bibfield  {author} {\bibinfo {author} {\bibfnamefont {P.}~\bibnamefont
  {\'Cwikli\'nski}}, \bibinfo {author} {\bibfnamefont {M.}~\bibnamefont
  {Studzi\'nski}}, \bibinfo {author} {\bibfnamefont {M.}~\bibnamefont
  {Horodecki}}, \ and\ \bibinfo {author} {\bibfnamefont {J.}~\bibnamefont
  {Oppenheim}},\ }\href@noop {} {\enquote {\bibinfo {title} {Towards fully
  quantum second laws of thermodynamics: limitations on the evolution of
  quantum coherences},}\ } (\bibinfo {year} {2014}),\ \bibinfo {note}
  {arXiv:1405.5029}\BibitemShut {NoStop}%
\bibitem [{\citenamefont {Faist}\ \emph
  {et~al.}(2015{\natexlab{b}})\citenamefont {Faist}, \citenamefont
  {Oppenheim},\ and\ \citenamefont {Renner}}]{Faist14}%
  \BibitemOpen
  \bibfield  {author} {\bibinfo {author} {\bibfnamefont {P.}~\bibnamefont
  {Faist}}, \bibinfo {author} {\bibfnamefont {J.}~\bibnamefont {Oppenheim}}, \
  and\ \bibinfo {author} {\bibfnamefont {R.}~\bibnamefont {Renner}},\
  }\href@noop {} {\bibfield  {journal} {\bibinfo  {journal} {New J. Phys.}\
  }\textbf {\bibinfo {volume} {17}},\ \bibinfo {pages} {043003} (\bibinfo
  {year} {2015}{\natexlab{b}})}\BibitemShut {NoStop}%
\bibitem [{\citenamefont {Lostaglio}\ \emph {et~al.}(2014)\citenamefont
  {Lostaglio}, \citenamefont {Korzekwa}, \citenamefont {Jennings},\ and\
  \citenamefont {Rudolph}}]{Lostaglio14}%
  \BibitemOpen
  \bibfield  {author} {\bibinfo {author} {\bibfnamefont {M.}~\bibnamefont
  {Lostaglio}}, \bibinfo {author} {\bibfnamefont {K.}~\bibnamefont {Korzekwa}},
  \bibinfo {author} {\bibfnamefont {D.}~\bibnamefont {Jennings}}, \ and\
  \bibinfo {author} {\bibfnamefont {T.}~\bibnamefont {Rudolph}},\ }\href@noop
  {} {\enquote {\bibinfo {title} {Quantum coherence, time-translation symmetry
  and thermodynamics},}\ } (\bibinfo {year} {2014}),\ \Eprint
  {http://arxiv.org/abs/1410.4572} {arXiv:1410.4572} \BibitemShut {NoStop}%
\bibitem [{\citenamefont {Brandao}\ \emph {et~al.}(2013)\citenamefont
  {Brandao}, \citenamefont {Horodecki}, \citenamefont {Oppenheim},
  \citenamefont {Renes},\ and\ \citenamefont {Spekkens}}]{ResourceTheory}%
  \BibitemOpen
  \bibfield  {author} {\bibinfo {author} {\bibfnamefont {F.~G. S.~L.}\
  \bibnamefont {Brandao}}, \bibinfo {author} {\bibfnamefont {M.}~\bibnamefont
  {Horodecki}}, \bibinfo {author} {\bibfnamefont {J.}~\bibnamefont
  {Oppenheim}}, \bibinfo {author} {\bibfnamefont {J.~M.}\ \bibnamefont
  {Renes}}, \ and\ \bibinfo {author} {\bibfnamefont {R.~W.}\ \bibnamefont
  {Spekkens}},\ }\href@noop {} {\bibfield  {journal} {\bibinfo  {journal}
  {Phys. Rev. Lett.}\ }\textbf {\bibinfo {volume} {111}},\ \bibinfo {pages}
  {250404} (\bibinfo {year} {2013})}\BibitemShut {NoStop}%
\bibitem [{\citenamefont {Horodecki}\ and\ \citenamefont
  {Oppenheim}(2013{\natexlab{b}})}]{Horodecki13}%
  \BibitemOpen
  \bibfield  {author} {\bibinfo {author} {\bibfnamefont {M.}~\bibnamefont
  {Horodecki}}\ and\ \bibinfo {author} {\bibfnamefont {J.}~\bibnamefont
  {Oppenheim}},\ }\href@noop {} {\bibfield  {journal} {\bibinfo  {journal}
  {Int. J. Mod. Phys. B}\ }\textbf {\bibinfo {volume} {27}} (\bibinfo {year}
  {2013}{\natexlab{b}})}\BibitemShut {NoStop}%
\bibitem [{\citenamefont {Coecke}\ \emph {et~al.}(2014)\citenamefont {Coecke},
  \citenamefont {Fritz},\ and\ \citenamefont {Spekkens}}]{Coecke14}%
  \BibitemOpen
  \bibfield  {author} {\bibinfo {author} {\bibfnamefont {B.}~\bibnamefont
  {Coecke}}, \bibinfo {author} {\bibfnamefont {T.}~\bibnamefont {Fritz}}, \
  and\ \bibinfo {author} {\bibfnamefont {R.~W.}\ \bibnamefont {Spekkens}},\
  }\href@noop {} {\enquote {\bibinfo {title} {A mathematical theory of
  resources},}\ } (\bibinfo {year} {2014}),\ \Eprint
  {http://arxiv.org/abs/1409.5531} {arXiv:1409.5531} \BibitemShut {NoStop}%
\bibitem [{\citenamefont {Brandao}\ and\ \citenamefont
  {Gour}(2015)}]{Brandao15}%
  \BibitemOpen
  \bibfield  {author} {\bibinfo {author} {\bibfnamefont {F.~G. S.~L.}\
  \bibnamefont {Brandao}}\ and\ \bibinfo {author} {\bibfnamefont
  {G.}~\bibnamefont {Gour}},\ }\href@noop {} {\bibfield  {journal} {\bibinfo
  {journal} {Phys. Rev. Lett.}\ }\textbf {\bibinfo {volume} {115}},\ \bibinfo
  {pages} {070503} (\bibinfo {year} {2015})}\BibitemShut {NoStop}%
\bibitem [{\citenamefont {del Rio}(2015)}]{Lidia}%
  \BibitemOpen
  \bibfield  {author} {\bibinfo {author} {\bibfnamefont {L.}~\bibnamefont {del
  Rio}},\ }\href@noop {} {\enquote {\bibinfo {title} {Resource theories of
  knowledge},}\ } (\bibinfo {year} {2015}),\ \bibinfo {note} {{PhD thesis (ETH
  Zurich)}}\BibitemShut {NoStop}%
\bibitem [{\citenamefont {Brandao}\ \emph {et~al.}(2015)\citenamefont
  {Brandao}, \citenamefont {Horodecki}, \citenamefont {Ng}, \citenamefont
  {Oppenheim},\ and\ \citenamefont {Wehner}}]{SecondLaw}%
  \BibitemOpen
  \bibfield  {author} {\bibinfo {author} {\bibfnamefont {F.~G. S.~L.}\
  \bibnamefont {Brandao}}, \bibinfo {author} {\bibfnamefont {M.}~\bibnamefont
  {Horodecki}}, \bibinfo {author} {\bibfnamefont {N.~H.~Y.}\ \bibnamefont
  {Ng}}, \bibinfo {author} {\bibfnamefont {J.}~\bibnamefont {Oppenheim}}, \
  and\ \bibinfo {author} {\bibfnamefont {S.}~\bibnamefont {Wehner}},\
  }\href@noop {} {\bibfield  {journal} {\bibinfo  {journal} {PNAS}\ }\textbf
  {\bibinfo {volume} {112}},\ \bibinfo {pages} {3275} (\bibinfo {year}
  {2015})}\BibitemShut {NoStop}%
\bibitem [{\citenamefont {Ng}\ \emph {et~al.}(2015)\citenamefont {Ng},
  \citenamefont {Mancinska}, \citenamefont {Cirstoiu}, \citenamefont {Eisert},\
  and\ \citenamefont {Wehner}}]{Ng14}%
  \BibitemOpen
  \bibfield  {author} {\bibinfo {author} {\bibfnamefont {N.~H.~Y.}\
  \bibnamefont {Ng}}, \bibinfo {author} {\bibfnamefont {L.}~\bibnamefont
  {Mancinska}}, \bibinfo {author} {\bibfnamefont {C.}~\bibnamefont {Cirstoiu}},
  \bibinfo {author} {\bibfnamefont {J.}~\bibnamefont {Eisert}}, \ and\ \bibinfo
  {author} {\bibfnamefont {S.}~\bibnamefont {Wehner}},\ }\href@noop {}
  {\bibfield  {journal} {\bibinfo  {journal} {New J. Phys.}\ }\textbf {\bibinfo
  {volume} {17}},\ \bibinfo {pages} {085004} (\bibinfo {year}
  {2015})}\BibitemShut {NoStop}%
\bibitem [{\citenamefont {Lostaglio}\ \emph {et~al.}()\citenamefont
  {Lostaglio}, \citenamefont {Mueller},\ and\ \citenamefont
  {Pastena}}]{Mueller14}%
  \BibitemOpen
  \bibfield  {author} {\bibinfo {author} {\bibfnamefont {M.}~\bibnamefont
  {Lostaglio}}, \bibinfo {author} {\bibfnamefont {M.~P.}\ \bibnamefont
  {Mueller}}, \ and\ \bibinfo {author} {\bibfnamefont {M.}~\bibnamefont
  {Pastena}},\ }\href@noop {} {\enquote {\bibinfo {title} {Extracting work from
  absence of correlations},}\ }\Eprint {http://arxiv.org/abs/arXiv:1409.3258}
  {arXiv:arXiv:1409.3258} \BibitemShut {NoStop}%
\bibitem [{\citenamefont {Lieb}\ and\ \citenamefont {Yngvason}(2013)}]{Lieb13}%
  \BibitemOpen
  \bibfield  {author} {\bibinfo {author} {\bibfnamefont {E.~H.}\ \bibnamefont
  {Lieb}}\ and\ \bibinfo {author} {\bibfnamefont {J.}~\bibnamefont
  {Yngvason}},\ }\href@noop {} {\bibfield  {journal} {\bibinfo  {journal}
  {Proc. Roy. Soc}\ }\textbf {\bibinfo {volume} {469}},\ \bibinfo {pages}
  {20130408} (\bibinfo {year} {2013})}\BibitemShut {NoStop}%
\bibitem [{\citenamefont {Jarzynski}(1997)}]{Jarzynski97}%
  \BibitemOpen
  \bibfield  {author} {\bibinfo {author} {\bibfnamefont {C.}~\bibnamefont
  {Jarzynski}},\ }\href@noop {} {\bibfield  {journal} {\bibinfo  {journal}
  {Phys. Rev. Lett.}\ }\textbf {\bibinfo {volume} {78}},\ \bibinfo {pages}
  {2690} (\bibinfo {year} {1997})}\BibitemShut {NoStop}%
\bibitem [{\citenamefont {Crooks}(1999)}]{Crooks99}%
  \BibitemOpen
  \bibfield  {author} {\bibinfo {author} {\bibfnamefont {G.~E.}\ \bibnamefont
  {Crooks}},\ }\href@noop {} {\bibfield  {journal} {\bibinfo  {journal} {Phys.
  Rev. E}\ }\textbf {\bibinfo {volume} {60}},\ \bibinfo {pages} {2721}
  (\bibinfo {year} {1999})}\BibitemShut {NoStop}%
\bibitem [{\citenamefont {Talkner}\ \emph {et~al.}(2007)\citenamefont
  {Talkner}, \citenamefont {Lutz},\ and\ \citenamefont {H\"anggi}}]{Talkner07}%
  \BibitemOpen
  \bibfield  {author} {\bibinfo {author} {\bibfnamefont {P.}~\bibnamefont
  {Talkner}}, \bibinfo {author} {\bibfnamefont {E.}~\bibnamefont {Lutz}}, \
  and\ \bibinfo {author} {\bibfnamefont {P.}~\bibnamefont {H\"anggi}},\
  }\href@noop {} {\bibfield  {journal} {\bibinfo  {journal} {Phys. Rev. E}\
  }\textbf {\bibinfo {volume} {75}},\ \bibinfo {pages} {050102} (\bibinfo
  {year} {2007})}\BibitemShut {NoStop}%
\bibitem [{Note1()}]{Note1}%
  \BibitemOpen
  \bibinfo {note} {{Note that the choice of $\protect \mathcal {P}$ has a
  subjective element, in the same way it amounts to a restriction to take
  $\protect \mathcal {P}$ as a lifted weight. It describes a particular choice
  of systems that we consider valid resources because we can handle them in a
  given experimental situation. To put an example, one can imagine that a given
  experimental setup can only handle qubits, systems of bounded entropy or
  energy, or systems with a fixed Hamiltonian. Our formalism allows to choose
  $\protect \mathcal {P}$ so that it encodes each of those
  situations.}}\BibitemShut {Stop}%
\bibitem [{\citenamefont {Janzing}\ \emph {et~al.}(2000)\citenamefont
  {Janzing}, \citenamefont {Wocjan}, \citenamefont {Zeier}, \citenamefont
  {Geiss},\ and\ \citenamefont {Beth}}]{Janzing00}%
  \BibitemOpen
  \bibfield  {author} {\bibinfo {author} {\bibfnamefont {D.}~\bibnamefont
  {Janzing}}, \bibinfo {author} {\bibfnamefont {P.}~\bibnamefont {Wocjan}},
  \bibinfo {author} {\bibfnamefont {R.}~\bibnamefont {Zeier}}, \bibinfo
  {author} {\bibfnamefont {R.}~\bibnamefont {Geiss}}, \ and\ \bibinfo {author}
  {\bibfnamefont {T.}~\bibnamefont {Beth}},\ }\href@noop {} {\bibfield
  {journal} {\bibinfo  {journal} {Int. J. Th. Phys.}\ }\textbf {\bibinfo
  {volume} {39}},\ \bibinfo {pages} {2717} (\bibinfo {year}
  {2000})}\BibitemShut {NoStop}%
\bibitem [{\citenamefont {Wilming}\ \emph {et~al.}(2014)\citenamefont
  {Wilming}, \citenamefont {Gallego},\ and\ \citenamefont
  {Eisert}}]{Wilming15}%
  \BibitemOpen
  \bibfield  {author} {\bibinfo {author} {\bibfnamefont {H.}~\bibnamefont
  {Wilming}}, \bibinfo {author} {\bibfnamefont {R.}~\bibnamefont {Gallego}}, \
  and\ \bibinfo {author} {\bibfnamefont {J.}~\bibnamefont {Eisert}},\
  }\href@noop {} {\enquote {\bibinfo {title} {Second laws under control
  restrictions},}\ } (\bibinfo {year} {2014}),\ \Eprint
  {http://arxiv.org/abs/arXiv:1411.3754} {arXiv:1411.3754} \BibitemShut
  {NoStop}%
\bibitem [{\citenamefont {del Rio}\ \emph {et~al.}(2011)\citenamefont {del
  Rio}, \citenamefont {Aberg}, \citenamefont {Renner}, \citenamefont
  {Dahlsten},\ and\ \citenamefont {Vedral}}]{Negative}%
  \BibitemOpen
  \bibfield  {author} {\bibinfo {author} {\bibfnamefont {L.}~\bibnamefont {del
  Rio}}, \bibinfo {author} {\bibfnamefont {J.}~\bibnamefont {Aberg}}, \bibinfo
  {author} {\bibfnamefont {R.}~\bibnamefont {Renner}}, \bibinfo {author}
  {\bibfnamefont {O.}~\bibnamefont {Dahlsten}}, \ and\ \bibinfo {author}
  {\bibfnamefont {V.}~\bibnamefont {Vedral}},\ }\href@noop {} {\bibfield
  {journal} {\bibinfo  {journal} {Nature}\ }\textbf {\bibinfo {volume} {474}},\
  \bibinfo {pages} {61} (\bibinfo {year} {2011})}\BibitemShut {NoStop}%
\bibitem [{\citenamefont {Anders}\ and\ \citenamefont
  {Giovannetti}(2013)}]{Anders2013}%
  \BibitemOpen
  \bibfield  {author} {\bibinfo {author} {\bibfnamefont {J.}~\bibnamefont
  {Anders}}\ and\ \bibinfo {author} {\bibfnamefont {V.}~\bibnamefont
  {Giovannetti}},\ }\href@noop {} {\bibfield  {journal} {\bibinfo  {journal}
  {New J. Phys.}\ }\textbf {\bibinfo {volume} {15}},\ \bibinfo {pages} {033022}
  (\bibinfo {year} {2013})}\BibitemShut {NoStop}%
\bibitem [{\citenamefont {Dahlsten}\ \emph {et~al.}(2011)\citenamefont
  {Dahlsten}, \citenamefont {Renner}, \citenamefont {Rieper},\ and\
  \citenamefont {Vedral}}]{Dahlsten11}%
  \BibitemOpen
  \bibfield  {author} {\bibinfo {author} {\bibfnamefont {O.~C.~O.}\
  \bibnamefont {Dahlsten}}, \bibinfo {author} {\bibfnamefont {R.}~\bibnamefont
  {Renner}}, \bibinfo {author} {\bibfnamefont {E.}~\bibnamefont {Rieper}}, \
  and\ \bibinfo {author} {\bibfnamefont {V.}~\bibnamefont {Vedral}},\
  }\href@noop {} {\bibfield  {journal} {\bibinfo  {journal} {New J. Phys.}\
  }\textbf {\bibinfo {volume} {13}},\ \bibinfo {pages} {053015} (\bibinfo
  {year} {2011})}\BibitemShut {NoStop}%
\bibitem [{\citenamefont {Egloff}\ \emph {et~al.}()\citenamefont {Egloff},
  \citenamefont {Dahlsten}, \citenamefont {Renner},\ and\ \citenamefont
  {Vedral}}]{EgloffRenner}%
  \BibitemOpen
  \bibfield  {author} {\bibinfo {author} {\bibfnamefont {D.}~\bibnamefont
  {Egloff}}, \bibinfo {author} {\bibfnamefont {O.~C.~O.}\ \bibnamefont
  {Dahlsten}}, \bibinfo {author} {\bibfnamefont {R.}~\bibnamefont {Renner}}, \
  and\ \bibinfo {author} {\bibfnamefont {V.}~\bibnamefont {Vedral}},\
  }\href@noop {} {\enquote {\bibinfo {title} {{Laws of thermodynamics beyond
  the von Neumann regime}},}\ }\Eprint {http://arxiv.org/abs/1207.0434}
  {arXiv:1207.0434} \BibitemShut {NoStop}%
\bibitem [{\citenamefont {Faist}\ \emph
  {et~al.}(2015{\natexlab{c}})\citenamefont {Faist}, \citenamefont {Dupuis},
  \citenamefont {Oppenheim},\ and\ \citenamefont
  {Renner}}]{QuantitativeLandauer}%
  \BibitemOpen
  \bibfield  {author} {\bibinfo {author} {\bibfnamefont {P.}~\bibnamefont
  {Faist}}, \bibinfo {author} {\bibfnamefont {F.}~\bibnamefont {Dupuis}},
  \bibinfo {author} {\bibfnamefont {J.}~\bibnamefont {Oppenheim}}, \ and\
  \bibinfo {author} {\bibfnamefont {R.}~\bibnamefont {Renner}},\ }\href@noop {}
  {\bibfield  {journal} {\bibinfo  {journal} {Nat. Commun.}\ }\textbf {\bibinfo
  {volume} {6}},\ \bibinfo {pages} {7669} (\bibinfo {year}
  {2015}{\natexlab{c}})}\BibitemShut {NoStop}%
\bibitem [{\citenamefont {Skrzypczyk}\ \emph {et~al.}(2014)\citenamefont
  {Skrzypczyk}, \citenamefont {Short},\ and\ \citenamefont
  {Popescu}}]{Popescu2013}%
  \BibitemOpen
  \bibfield  {author} {\bibinfo {author} {\bibfnamefont {P.}~\bibnamefont
  {Skrzypczyk}}, \bibinfo {author} {\bibfnamefont {A.~J.}\ \bibnamefont
  {Short}}, \ and\ \bibinfo {author} {\bibfnamefont {S.}~\bibnamefont
  {Popescu}},\ }\href@noop {} {\bibfield  {journal} {\bibinfo  {journal}
  {Nature Comm.}\ }\textbf {\bibinfo {volume} {5}},\ \bibinfo {pages} {4185}
  (\bibinfo {year} {2014})}\BibitemShut {NoStop}%
\bibitem [{\citenamefont {Skrzypczyk}\ \emph {et~al.}()\citenamefont
  {Skrzypczyk}, \citenamefont {Short},\ and\ \citenamefont
  {Popescu}}]{Popescu2013b}%
  \BibitemOpen
  \bibfield  {author} {\bibinfo {author} {\bibfnamefont {P.}~\bibnamefont
  {Skrzypczyk}}, \bibinfo {author} {\bibfnamefont {A.~J.}\ \bibnamefont
  {Short}}, \ and\ \bibinfo {author} {\bibfnamefont {S.}~\bibnamefont
  {Popescu}},\ }\href@noop {} {\enquote {\bibinfo {title} {Extracting work from
  quantum systems},}\ }\Eprint {http://arxiv.org/abs/1302.2811}
  {arXiv:1302.2811} \BibitemShut {NoStop}%
\bibitem [{\citenamefont {Malabarba}\ \emph {et~al.}()\citenamefont
  {Malabarba}, \citenamefont {Short},\ and\ \citenamefont
  {Kammerlander}}]{Malabarba14}%
  \BibitemOpen
  \bibfield  {author} {\bibinfo {author} {\bibfnamefont {A.~S.}\ \bibnamefont
  {Malabarba}}, \bibinfo {author} {\bibfnamefont {A.~J.}\ \bibnamefont
  {Short}}, \ and\ \bibinfo {author} {\bibfnamefont {P.}~\bibnamefont
  {Kammerlander}},\ }\href@noop {} {\enquote {\bibinfo {title} {Clock-driven
  quantum thermal engines},}\ }\Eprint {http://arxiv.org/abs/1412.1338}
  {arXiv:1412.1338} \BibitemShut {NoStop}%
\bibitem [{foo({\natexlab{a}})}]{footnotewbit}%
{In Ref.\
  \cite{Horodecki13} it is assumed that $\rho_A^{(i)}$ is the ground state of
  $H_A$. Hence, we have $f^\epsilon(\rho_A^{(i)},H_A^{(i)})=0$. What we
  introduce here as $\he{\mc{W}_{\text{det}}}$ is a generalisation that follows
  in spirit, but lies outside the scope formally laid out in Ref.
  \cite{Horodecki13}}\BibitemShut {NoStop}%
\bibitem [{\citenamefont {Gemmer}\ and\ \citenamefont {Anders}()}]{Gemmer}%
  \BibitemOpen
  \bibfield  {author} {\bibinfo {author} {\bibfnamefont {J.}~\bibnamefont
  {Gemmer}}\ and\ \bibinfo {author} {\bibfnamefont {J.}~\bibnamefont
  {Anders}},\ }\href@noop {} {\enquote {\bibinfo {title} {From single-shot
  towards general work extraction in a quantum thermodynamic framework},}\
  }\Eprint {http://arxiv.org/abs/1504.05061} {arXiv:1504.05061} \BibitemShut
  {NoStop}%
\bibitem [{\citenamefont {M{\"u}ller-Lennert}\ \emph
  {et~al.}(2013)\citenamefont {M{\"u}ller-Lennert}, \citenamefont {Dupuis},
  \citenamefont {Szehr}, \citenamefont {Fehr},\ and\ \citenamefont
  {Tomamichel}}]{Mueller-Lennert13}%
  \BibitemOpen
  \bibfield  {author} {\bibinfo {author} {\bibfnamefont {M.}~\bibnamefont
  {M{\"u}ller-Lennert}}, \bibinfo {author} {\bibfnamefont {F.}~\bibnamefont
  {Dupuis}}, \bibinfo {author} {\bibfnamefont {O.}~\bibnamefont {Szehr}},
  \bibinfo {author} {\bibfnamefont {S.}~\bibnamefont {Fehr}}, \ and\ \bibinfo
  {author} {\bibfnamefont {M.}~\bibnamefont {Tomamichel}},\ }\href@noop {}
  {\bibfield  {journal} {\bibinfo  {journal} {J. Math. Phys.}\ }\textbf
  {\bibinfo {volume} {54}},\ \bibinfo {pages} {122203} (\bibinfo {year}
  {2013})}\BibitemShut {NoStop}%
\bibitem [{\citenamefont {Wilde}\ \emph {et~al.}()\citenamefont {Wilde},
  \citenamefont {Winter},\ and\ \citenamefont {Yang}}]{Wilde13}%
  \BibitemOpen
  \bibfield  {author} {\bibinfo {author} {\bibfnamefont {M.~M.}\ \bibnamefont
  {Wilde}}, \bibinfo {author} {\bibfnamefont {A.}~\bibnamefont {Winter}}, \
  and\ \bibinfo {author} {\bibfnamefont {D.}~\bibnamefont {Yang}},\ }\href@noop
  {} {\enquote {\bibinfo {title} {{Strong converse for the classical capacity
  of entanglement-breaking and Hadamard channels via a sandwiched Renyi
  relative entropy}},}\ }\bibinfo {note} {ArXiv:1306.1586}\BibitemShut
  {NoStop}%
\bibitem [{Note2()}]{Note2}%
  \BibitemOpen
  \bibinfo {note} {See Ref.\ \cite {Mueller-Lennert13} for an exhaustive
  analysis of R\'enyi divergences. Indeed, for Thm.\ \ref {thm:renyi} we only
  need that the function $S_{\alpha }(\cdot ||\cdot )$ fulfils the
  data-processing inequality and additivity. The former implies $\Delta
  F_{\alpha }^{\beta }(\protect \mathcal {F}_C(p)) \leq \Delta F_{\alpha
  }^{\beta }(p)$.}\BibitemShut {Stop}%
\bibitem [{\citenamefont {Mueller}\ and\ \citenamefont
  {Pastena}(2015)}]{Mueller2015}%
  \BibitemOpen
  \bibfield  {author} {\bibinfo {author} {\bibfnamefont {M.~P.}\ \bibnamefont
  {Mueller}}\ and\ \bibinfo {author} {\bibfnamefont {M.}~\bibnamefont
  {Pastena}},\ }\href@noop {} {\enquote {\bibinfo {title} {{A generalisation of
  majorisation that characterises Shannon entropy}},}\ } (\bibinfo {year}
  {2015}),\ \bibinfo {note} {arXiv:1507.06900}\BibitemShut {NoStop}%
\bibitem [{\citenamefont {Fritz}()}]{Fritz}%
  \BibitemOpen
  \bibfield  {author} {\bibinfo {author} {\bibfnamefont {T.}~\bibnamefont
  {Fritz}},\ }\href@noop {} {\enquote {\bibinfo {title} {The mathematical
  structure of theories of resource convertibility},}\ }\Eprint
  {http://arxiv.org/abs/1504.03661} {arXiv:1504.03661} \BibitemShut {NoStop}%
\bibitem [{foo({\natexlab{b}})}]{footnotemeanwork}%
 {Strictly speaking, the
  mean energy does not fulfil the property of being positive for all states in
  $\mc{P}^{\text{mean}}$, as it is also demanded by Thm.\ \ref{lemma:forallp}.
  However, this can be tackled by simply taking $\mc{P}^{\text{mean}}$ so that
  the mean energy of the states is bounded from bellow. In this way, we can
  take the origin of energies (which does not alter the definition of work) in
  such a way all the states have positive energy.}\BibitemShut {Stop}%
\bibitem [{Note3()}]{Note3}%
  \BibitemOpen
  \bibinfo {note} {{\protect \color {black} Here, $\DOTSI \intop \ilimits@
  _{w_\protect \text {min}}^{w_{\protect \text {max}}} \delimiter "026A30C
  x\delimiter "526930B \protect \tmspace -\thinmuskip {.1667em}\delimiter
  "426830A x \delimiter "026A30C dx$ should be understood as a
  finite-dimensional Hamiltonian with non-degenerate spectrum within
  $[w_{\protect \mathrm {min}},w_{\protect \mathrm {max}}]$, as dense as
  necessary to reflect all the possible work-values.}}\BibitemShut {Stop}%
\end{thebibliography}
%

\newpage
\clearpage
\appendix

\section{Scenario and definitions}

\subsection{Transitions and free transitions}\label{secapp:definitions}
Let us consider a pair of a quantum states and a Hamiltonian $p=(\rho,H)$. 
In the following we, will call such pairs \emph{objects} and denote the associated Hilbert space by $\mc{H}(p)$, 
which for most of this work is taken to be finite-dimensional. 

\begin{definition}[Transition]\label{def:transition} A transition is defined by a pair of objects $p^{(i)},p^{(f)}$ and an ordering between them. We will refer to a transition as $p^{(i)} \rightarrow p^{(f)}$.
\end{definition}

\begin{definition}[State transition]\label{def:statetransition} This is a transition in which the Hamiltonian remains constant. That is, if $(\rho^{(i)},H)\rightarrow (\rho^{(f)},H)$, we will refer to a state transition and denote it simply, if the Hamiltonian is clear from the context, by $\rho^{(i)} \rightarrow \rho^{(f)}$.
\end{definition}

{\begin{definition}[Sequence]\label{def:sequence} A set of $n-1$ transitions of the form $\{p^{(k)} \rightarrow p^{(k+1)}\}_{k=1}^{n-1}$ is referred to as \emph{sequence}. We will simply denote it by $p^{(1)} \rightarrow p^{(2)} \rightarrow \cdots \rightarrow p^{(n)}$.
\end{definition}}

Such transitions are to be interpreted, in the context of the present work, as changes on the system and state Hamiltonian of the battery of Arthur as implemented by Merlin. 

\begin{definition}[Free image]
A free image is a function $\mc{F}$ that maps $p^{(i)}$ and a parameter 
$\beta$ into sets of objects $\{p_k\}
=\mc{F}(p^{(i)},\beta)$. When $F$ is such that 
the Hamiltonian remains constant, that is, 
\begin{equation}
\mc{F}(\rho^{(i)},
\beta)=\{(\rho_k,H)\},
\end{equation}
we will refer to it as \emph{free state-image}.
\end{definition}

\begin{definition}[Free transition]A free transition is defined as any transition $p^{(i)}\rightarrow p^{(f)}$, 
where $p^{(f)}\in \mathcal{F}(p^{(i)},\beta)$. When the parameter $\beta$ is clear 
from the context, we will denote a free transition simply as $p^{(i)} \rightarrow \mathcal{F}(p^{(i)})$.
\end{definition}

\begin{definition}[Tensoring objects]Given two objects $p=(\rho,H)$ and $p'=(\rho',H')$, we define the tensor product 
\begin{equation}
	p\otimes p':=(\rho \otimes \rho',H \otimes \id_{\mc{H}(p')} + \id_{\mc{H}(p)} \otimes H').
\end{equation}
\end{definition} 
In the definition we explicitly indicated on which tensor-factor the identity maps act. In the following, 
we will omit such indications when the information is clear from the context.

\begin{definition}[Non-interacting objects]If an object based on a bipartite system of parts $A$ and $B$ 
has the form 
\begin{equation}
	p=(\rho_{AB},H_A\otimes \id_B+ \id_A\otimes H_B) 
\end{equation}
we refer to it as \emph{non-interacting} object. 
\end{definition}
Non-interacting objects are those objects on which we define a partial trace.

\begin{definition}[Partial traces]Given any two objects $p_{S}=(\rho_S,H_S)$ and $p_{|S}=(\rho_{|S},H_{|S})$, we define the trace $\tr_{|S}$ as an operator acting on objects $p$ of the form 
\begin{equation}
p=p_{S} \otimes p_{|S}=\big(\rho_S \otimes \rho_{|S},H_S \otimes \id_{|S}+\id_S \otimes H_{|S}\big),
\end{equation} 
such that $\tr_{|S}(p)=p_S$. We extend this definition to all \emph{non-interacting} objects by the partial trace on quantum states.
\end{definition}
At this point a remark about Hamiltonians is in order. When we consider non-interacting objects, the local Hamiltonians are not well-defined: We can always change their traces by adding a global zero of the form $(\lambda \id_A)\otimes \id_B - \id_A\otimes (\lambda \id_B)$ to the global Hamiltonian. Therefore we will from now call two Hamiltonian operators equivalent if they differ by a multiple of the identity, $H\sim H+\lambda \id$. For simplicity, we will, 
however, not indicate this in our notation and will just refer to the equivalence classes as Hamiltonians. We could also just fix the trace of the Hamiltonians. It will become clear later, why we do not follow this path. 
\begin{definition}[Catalytic free image]Given the free image $\mathcal{F}$, we define the \emph{catalytic free image} $\mathcal{F}_C$ as 
\begin{equation}
\mathcal{F}_C(p^{(i)},\beta):=\{ p \: | \:\: \exists \:\: q ;\: p \otimes q \in \mathcal{F}(p^{(i)}\otimes q,\beta) \}.
\end{equation} 
\end{definition}

\begin{definition}[Catalytic free transition]A catalytic free transition is defined as any transition $p^{(i)} \rightarrow p^{(f)}$ with $p^{(f)}\in \mathcal{F}_C(p^{(i)},\beta)$. When the parameter $\beta$ 
is clear from the context, we will denote a free state-transition simply as $p^{(i)} \rightarrow \mathcal{F}_c(p^{(i)})$.
\end{definition}

{\begin{definition}[Assisted transitions \he{and sequences}]\label{def:assistedtrans}
Two objects $p^{(1)},p^{(2)}$ form
 a \emph{transition assisted by $(c_1,c_2)$} if
 \begin{equation}
 p^{(2)}\otimes c_2 \in \mc{F}(p^{(1)}\otimes c_1 ,\beta),
 \end{equation}
\he{Now consider a sequence of transitions $p^{(i)}\rightarrow p^{(i+1)}$ for $i=1,\ldots,n-1$. If each transition is a free transition assisted by $(c_i,c_{i+1})$, respectively, we say the sequence is assisted by $(c_1,c_n)$.}
\end{definition}
In other words, an assisted sequence is a sequence on objects that can be performed by using free operations and an ancilla that is at the end uncorrelated with the system but might have changed its state. }

{\he{Although} we would like to keep this definition fully general, let us anticipate that $\{c_i\}_i$ are going to play the role of the fuel employed by Merlin, which enables (assists) a transition or sequence of transitions, by changing its state (by being burnt).}

\he{\begin{definition}[Free sequence]\label{def:freesequence}
We call a sequence assisted by $(c,c)$ a \emph{free sequence}.
\end{definition}}
{Following with the interpretation of $c$ as the fuel, a free sequence is then a sequence of transitions that can be implemented while not spending any \he{fuel.}}

\subsection{Basic assumptions on the free transitions}

In the main text we have focused on the \emph{resource theory} of \emph{a-thermality}, where the free operations are, loosely speaking, defined as the energy preserving joint operations on system and bath. These are mathematically characterized by the GP-maps, or strictly contained subsets of operations, such as the thermal operations. However, our results apply potentially to widely different resource theories defined by other classes of free operations, not motivated by the thermodynamic context. In this endeavour, we aim at 
contributing to the emerging understanding of general resource theories
\cite{Brandao15,Fritz,Coecke14}.
We state below the first assumptions on the free operations that are needed in order to derive the results of Sec.\  \ref{sec:generalproperties} in the main text, in particular Theorem \ref{thm:generalform} (restated as Theorem \ref{thm:diffmonotones} in this appendix). 
\begin{property}[Composability]\label{prop:composability}
If $p^{(3)} \in \mathcal{F}\big(p^{(2)},\beta\big)$ and $p^{(2)}\in \mathcal{F}\big(p^{(1)},\beta\big)$, then $p^{(3)}\in \mathcal{F}\big(p^{(1)},\beta\big)$.
\end{property}

\begin{property}[Swapping products]\label{prop:swapping}
Given an object of the form $p^{(1)} \otimes \ldots \otimes p^{(n)}$, then 
\begin{equation}
P(p^{(1)} \otimes \ldots \otimes p^{(n)}) \in \mc{F}(p^{(1)} \otimes \ldots \otimes p^{(n)},\beta), \,\forall \beta, 
\end{equation}
where $P$ permutes the labels $(1,\ldots,n)$ into $(\sigma(1),\ldots, \sigma(n))$. 
\end{property}
Note that Property \ref{prop:swapping} implies that the identity is a catalytic free transition, that is, $p \in \mc{F}_C(p,\beta)$ for all $\beta$. This follows since one can take as catalyst $q=p$ and perform a swap between the system and the catalyst. 



\begin{property}[Tracing as free operation]\label{prop:tracingfree}
For any subsystem $S$ of $A_1 ,\ldots, A_N$ of a product object, tracing out is in the free image. That is, 
\begin{equation}
\tr_{S}(p_{A_1}\otimes \ldots \otimes p_{A_N})\in \mc{F}(p_{A_1} \otimes \ldots \otimes p_{A_N},\beta).
\end{equation}  
\end{property}
In the case where 
\begin{equation}
	S=\cup_{i=1}^{N}A_i,
\end{equation}  
 the entire system is traced out. In this case we introduce the notation $\tr_{S}(p):=\emptyset$. 
 In this instance Proposition \ref{prop:tracingfree} is also fulfilled and we denote it by $\mc{W}(p\rightarrow \emptyset,\beta)\leq 0$. The object $\emptyset$ can be seen as the pair $(1,0)$ on $\mc{H}=\CC$. Note that it therefore fulfills $p\otimes\emptyset=p$ for every object $p$. It is therefore a free object independent of $\beta$.

The next Lemma will turn out to be very useful in the subsequent sections.
\begin{lemma}[Mapping time to space]\label{lemma:timetospace} {Suppose $\mc{F}$ fulfills properties \ref{prop:composability} and \ref{prop:swapping} and let $p \rightarrow p'$ be an assisted transition by $(c,c')$ and $q\rightarrow q'$ be an assisted transition by $(c',c'')$. Then the transition $p\otimes q\rightarrow p'\otimes q'$ is an assisted transition by $(c,c'')$. }
\begin{proof}
Note that by Definition \ref{def:assistedtrans} of assisted {transition} and Property \ref{prop:composability}, the transition $p_1\otimes c\rightarrow p_m\otimes c'$ is free. Therefore, also the transition $p_q \otimes q_1\otimes c\rightarrow p_m\otimes q_1\otimes  c'$ is free. An equivalent argument implies that $p_m\otimes q_1\otimes c' \rightarrow p_m\otimes q_n\otimes c''$ is also free transition. Composing these two transitions yields that  $p_1\otimes q_1\otimes c\rightarrow p_m\otimes q_n \otimes c''$ is also a free transition.
\end{proof}  
\end{lemma}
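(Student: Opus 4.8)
The plan is to unfold the definition of an assisted transition into explicit free transitions on the enlarged system and then glue them together with Property~\ref{prop:composability}. By Definition~\ref{def:assistedtrans}, the hypothesis that $p\rightarrow p'$ is assisted by $(c,c')$ means precisely that $p'\otimes c'\in\mc{F}(p\otimes c,\beta)$, while the hypothesis that $q\rightarrow q'$ is assisted by $(c',c'')$ means $q'\otimes c''\in\mc{F}(q\otimes c',\beta)$. The target is to establish $p'\otimes q'\otimes c''\in\mc{F}(p\otimes q\otimes c,\beta)$, which is exactly the statement that $p\otimes q\rightarrow p'\otimes q'$ is assisted by $(c,c'')$.

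First I would extend the first free transition to the joint system by letting the $q$-factor sit idle: from $p'\otimes c'\in\mc{F}(p\otimes c,\beta)$ I would obtain
\begin{equation}
p'\otimes q\otimes c'\in\mc{F}(p\otimes q\otimes c,\beta).
\end{equation}
Then I would perform the symmetric move on the second transition, this time keeping $p'$ idle: from $q'\otimes c''\in\mc{F}(q\otimes c',\beta)$ I would obtain
\begin{equation}
p'\otimes q'\otimes c''\in\mc{F}(p'\otimes q\otimes c',\beta).
\end{equation}
Property~\ref{prop:swapping} is invoked throughout to reorder the tensor factors so that the catalyst always occupies the slot on which the relevant free operation acts. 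Finally, Property~\ref{prop:composability} chains these two free transitions and yields $p'\otimes q'\otimes c''\in\mc{F}(p\otimes q\otimes c,\beta)$, which is the claim.

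The step carrying the real content is the passage from a free transition on $p\otimes c$ to the same transition acting on $p\otimes q\otimes c$ with $q$ untouched, together with its mirror image leaving $p'$ untouched. This is the principle that a free operation remains free when performed in parallel with the trivial operation on an inert bystander. Physically it is immediate, since an energy-preserving global unitary realising $p\otimes c\rightarrow p'\otimes c'$ can be padded with the identity on the idle register; abstractly, however, I expect this parallel-composition (locality) property to be the main point to pin down, as it does not follow from Composability and Swapping alone. Once it is granted, the reordering via Property~\ref{prop:swapping} and the chaining via Property~\ref{prop:composability} are entirely routine.
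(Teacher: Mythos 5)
Your proof is correct and follows essentially the same route as the paper's: unfold the definition of an assisted transition, extend each resulting free transition to the joint system by letting the other factor sit idle, and then chain the two with Property~\ref{prop:composability}. The parallel-composition step you rightly flag as carrying the real content---that a free transition remains free when tensored with an inert bystander---is used silently in the paper's own proof as well (its ``Therefore, also the transition $p\otimes q\otimes c\rightarrow p'\otimes q\otimes c'$ is free''), so your scruple identifies an implicit assumption of the paper (one that holds for Gibbs-preserving and thermal operations but does not follow from Properties~\ref{prop:composability} and~\ref{prop:swapping} alone) rather than a defect of your argument.
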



\subsection{Work quantifiers}

Once we have specified the transitions and the free transitions, we will define a quantifier of the \emph{value} of a given transitions within the set of allowed work-storage devices $\mc{P}$. We will always assume that the empty object $\emptyset$ is an element of $\mc{P}$. 

\begin{definition}[Work quantifier]We define the work quantifier as a function $\mathcal{W}$ that maps a transition within $\mc{P}$ and parameter $(p^{(i)} \rightarrow p^{(f)}, \beta)$ into the real numbers.
If $\beta$ is clear from the context, we will simply write $\mathcal{W}(p^{(i)} \rightarrow p^{(f)})$.
\end{definition}

\section{General axioms}\label{sec:appgeneralaxioms}

We will now present the axioms \ref{prin:one} and \ref{prin:two} of the main text, restated in a more precise manner by making use of the mathematical definitions of Sec.\  \ref{secapp:definitions}. 

\begin{appaxm}[{Cyclic transitions of the work storage device}]\label{axm:cyclic}
Given a collection of objects of the work-storage device$\{p^{(1)},\ldots,p^{(n)}\}\subset\mc{P}$ such that  $p^{(n)} =p^{(1)}$, then
\begin{equation}\label{eq:workpositivecycle}
\sum_{i=1}^{n-1}\mathcal{W}\left(p^{(i)}\rightarrow p^{(i+1)},\beta \right) \geq 0.
\end{equation}
\end{appaxm}

Axiom \ref{axm:cyclic} ensures that if a set of states can be arranged in a cyclic sequence, the total work, given by the l.h.s.\ of \eqref{eq:workpositivecycle}, cannot be negative. Otherwise, Arthur, who receives at the end the same object he possessed at the beginning, can repeat the protocol an arbitrarily number of times and obtain an arbitrarily large benefit.

{Axiom \ref{prin:two} in the main text is however formulated in terms of $\mc{W}_{\mathrm{trans}}$. However, as this quantity is given as a function of $\mc{W}$ and $\mc{P}$ we can reformulate Axiom \ref{prin:two} as being directly expressed in terms of $\mc{W}$ for transitions of the work-storage device, which makes it a more comfortable formulation to work in the following proofs of this appendix.}

{\begin{appaxm}[Reformulation of ``cyclic transitions of the fuel'']\label{axm:freeop} 
Let $\{p_A^{(k)}\rightarrow q_A^{(k)}\}_{k=1}^{n-1}$ be a collection of assisted transitions of the work-storage device, assisted by $(c_k,c_{k+1})$ respectively, with $c_{n}=c_1$. Then
\begin{align}\label{eqapp:axiom2}
\sum_{k=1}^{n-1}  \mc{W}(p_A^{(k)}\rightarrow q_A^{(k)},\beta) \leq 0.
\end{align}
\end{appaxm}}
{Importantly, note that the objects $p_A^{(k)}$ and $q_A^{(k)}$ $\forall k$ in this formulation describe the work-storage device, contrary to the main text formulation of Axiom \ref{prin:two}. A schematic depiction of the transitions involved in this Axiom is given by Fig. \ref{fig:axiom2}. Now we will show, that \he{although} formulated in seemingly unrelated terms, both formulations are equivalent.}

{First, let us state a Corollary of Axiom \ref{axm:freeop} that will be useful in further proofs.
\begin{corollary}[Cyclic free sequences]\label{cor:corofaxiom2} Let $p_A^{(1)} \rightarrow p_A^{(2)} \rightarrow \cdots \rightarrow p_A^{(n)}$ a \emph{free sequence}, then,
\begin{equation}
\sum_{k=1}^{n-1}  \mc{W}(p_A^{(k)}\rightarrow p_A^{(k+1)},\beta) \leq 0.
\end{equation}
\end{corollary}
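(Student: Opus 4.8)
The plan is to recognise this corollary as essentially a direct specialisation of Axiom~\ref{axm:freeop}, so the entire task reduces to carefully unfolding the definitions and checking that the hypotheses line up. First I would unpack what a free sequence is. By Definition~\ref{def:freesequence}, the sequence $p_A^{(1)}\rightarrow p_A^{(2)}\rightarrow\cdots\rightarrow p_A^{(n)}$ being free means it is a sequence assisted by $(c,c)$ for some single catalyst $c$. Expanding this via Definition~\ref{def:assistedtrans}, there exist intermediate catalysts $c_1,\ldots,c_n$ with $c_1=c_n=c$ such that, for each $k=1,\ldots,n-1$, the transition $p_A^{(k)}\rightarrow p_A^{(k+1)}$ is a free transition assisted by $(c_k,c_{k+1})$, i.e.\ $p_A^{(k+1)}\otimes c_{k+1}\in\mc{F}(p_A^{(k)}\otimes c_k,\beta)$.

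The key observation is that this data is exactly an instance of the hypothesis of Axiom~\ref{axm:freeop}. I would set $q_A^{(k)}:=p_A^{(k+1)}$ for each $k=1,\ldots,n-1$, so that the free sequence becomes the collection of assisted transitions $\{p_A^{(k)}\rightarrow q_A^{(k)}\}_{k=1}^{n-1}$, with transition $k$ assisted by $(c_k,c_{k+1})$. The only extra condition imposed by the axiom is the cyclicity of the catalyst chain, $c_n=c_1$; but this holds automatically here, precisely because a free sequence is assisted by $(c,c)$, giving $c_1=c_n=c$. Thus all the premises of Axiom~\ref{axm:freeop} are met without any further work.

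Applying Axiom~\ref{axm:freeop} then yields directly
\begin{equation}
\sum_{k=1}^{n-1}\mc{W}(p_A^{(k)}\rightarrow p_A^{(k+1)},\beta)=\sum_{k=1}^{n-1}\mc{W}(p_A^{(k)}\rightarrow q_A^{(k)},\beta)\leq 0,
\end{equation}
which is the claimed inequality. I do not expect any genuine obstacle in this argument; the content is entirely definitional. The one point that warrants care is confirming that the ``collection of assisted transitions'' appearing in Axiom~\ref{axm:freeop} is permitted to be a \emph{chained} collection, in which the target $q_A^{(k)}$ of one transition coincides with the source $p_A^{(k+1)}$ of the next. Since the axiom places no restriction preventing this and only requires the catalyst endpoints to match cyclically, the specialisation is legitimate, and the corollary follows.
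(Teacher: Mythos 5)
Your proof is correct and follows exactly the paper's own argument: the paper likewise observes that a free sequence is, by Definition~\ref{def:freesequence}, precisely an instance of the hypothesis of Axiom~\ref{axm:freeop} with $q_A^{(k)}=p_A^{(k+1)}$ and catalyst endpoints $c_1=c_n=c$. Your version merely spells out the definitional unpacking in more detail; there is no substantive difference.
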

Corollary \ref{cor:corofaxiom2} follows simply by the definition of free sequence, which is a particular case of the conditions of Axiom \ref{axm:freeop}, in the case where $q_A^{(k)}=p_A^{(k+1)}$.}
\begin{figure}
\includegraphics{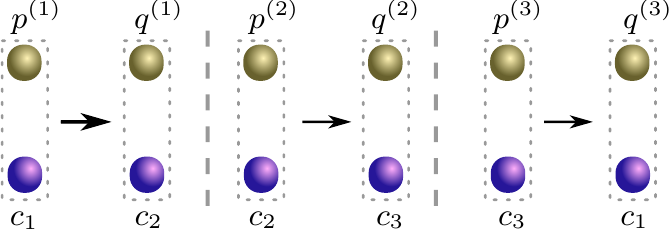}
\caption{A set of \he{transitions} where the constraints of Axiom \ref{prin:two} apply. Merlin holds a ancillary system described by $c_1$ and receives the work-storage device in $p^{(1)}$. Both systems are initially uncorrelated. Merlin performs a free transitions so that the final state is described by $c_2\otimes q^{(1)}$. After this process, Arthur comes with a new work-storage device, initially uncorrelated so that Merlin holds now $c_2\otimes p^{(2)}$. Several sequences of transitions are performed as depicted, so that at the final step, the ancillary system has returned to its initial state $c_1$. Importantly, also note that Merlin's ancillary system is uncorrelated with all the work-storage systems used in the process. Hence, Merlin, apart from the thermal baths which are considered free, has not spent any resource (neither in the form of changing his ancillary system, nor in the form of correlations) in the overall process. Axiom \ref{prin:two} states that Merlin cannot obtain benefit when adding up the work value of each transition.}
\label{fig:axiom2}
\end{figure}

\subsubsection{Equivalence between formulations:} {We will now show that indeed the version of the Axiom \ref{prin:two} given in the main text is equivalent to the one given above. }
{Let us first assume the version given in the main text. That is, we assume that for any sequence of the fuel (Merlin's system), where we $p_M^{(1)}\rightarrow \cdots\rightarrow p_M^{(n)}=p_M^{(1)}$, then
\begin{equation}\label{eq:eqequivalence1}
\sum_{i=1}^{n-1} W_{\mathrm{trans}}(p_M^{(i)}\rightarrow p_M^{(i+1)},\beta) \leq 0.
\end{equation}
Now consider a set of assisted transitions $\{p_A^{(k)}\rightarrow q_A^{(k)}\}_{k=1}^{n-1}$ of the work-storage device, assisted by $(c_k,c_{k+1})$ respectively, with $c_{n}=c_1$, as Axiom \ref{axm:freeop} states. Using Def. \ref{def:workoftransition}, we have that 
\begin{equation}
\mc{W}(p_A^{(k)} \rightarrow q_A^{(k)}) \leq W_\mathrm{trans}(c_k \rightarrow c_{k+1}).
\end{equation}
for all $k\in \{1,\ldots,n-1\}$. \he{But then by identifying} $c_i=p_M^{(i)}$ for all $i$, we obtain Eq.\ \eqref{eqapp:axiom2}.  }

{Let us now show the converse direction. We have to show, that given a sequence $p_M^{(1)}\rightarrow\cdots\rightarrow p_M^{(n)}=p_M^{(1)}$,  eq.~\eqref{eqapp:axiom2} implies eq.~\eqref{eq:eqequivalence1}. Each transition $p_M^{(k)} \rightarrow p_M^{(k+1)}$ will also induce a transition on the marginal of the work-storage device, given by $p_A^{(k)} \rightarrow q_A^{(k)} \in \mc{S}_k$, where $\mc{S}_k$ is the set of all marginal transitions on the work-storage device that can happen together with $p_M^{(k)} \rightarrow p_M^{(k+1)}$ on the fuel, and equivalently for all $k$. More explicitly, 
\begin{equation}
\mc{S}_k:= \{ p_A^{(k)} \rightarrow q_A^{(k)} \: | \:  q_A^{(k)} \otimes p_M^{(k+1)} \in \mc{F}_C (p_A^{(k)} \otimes p_M^{(k)})\}
\end{equation}
That said, all the $p_A^{(k)}\rightarrow q_A^{(k)}\in \mc{S}_k$ are an assisted transition by $p_M^{(k)}\rightarrow p_M^{(k+1)}$. By our assumption (eq.~\eqref{eqapp:axiom2}) this implies that the total work-value fulfills
\begin{equation}
\sum_{k=1}^{n-1}\mc{W}(p_A^{(k)}\rightarrow q_A^{(k)}) \leq 0
\end{equation}
for all $p_A^{(k)}\rightarrow q_A^{(k)}\in\mc{S}_k$. Then, this implies trivially
\begin{equation}\label{eq:eqequivalence_realization}
\sum_{k=1}^{n-1} \sup_{p_A^{(k)}\rightarrow q_A^{(k)}\in\mc{S}_k}\left(\mc{W}(p_A^{(k)}\rightarrow q_A^{(k)})\right) \leq 0,
\end{equation}
Now notice that the supremum in eq.~\eqref{eq:eqequivalence_realization} is the same as the one in the Def. \ref{def:workoftransition} of $W_{\mathrm{trans}}$, which concludes the proof.
 }

\subsection{Implications for the work definition}
We now turn to exploring implications for the work quantifiers. {Since the two Axioms have been reformulated in App. \ref{sec:appgeneralaxioms} in such a way that they only refer to objects of the work-storage device and not of the fuel, we will drop the labels $M$ and $A$. Unless explicitly mentioned, we will use the letters $p,q$ to refer to the work-storage device.}

\begin{lemma}[Properties of work quantifiers]\label{thm:3properties} Consider a free image $\mathcal{F}$ fulfilling Properties \ref{prop:composability}-\ref{prop:tracingfree}. In this case, Axioms \ref{axm:cyclic} and \ref{axm:freeop} are fulfilled if and only if $\mathcal{W}$ satisfies the following properties,
\begin{enumerate}
\item For all $p^{(1)},\ldots,p^{(m)}$ and $q^{(1)},\ldots,q^{(m)}$ in $\mc{P}$ such that $\bigotimes_{i=1}^{m} q^{(i)} \in \mc{F}_C(\bigotimes_{i=1}^{m} p^{(i)} )$, 
\begin{equation}\label{eq:additivemonotonocitywork}
\sum_{i=1}^{m} \mc{W}(p^{(i)} \rightarrow q^{(i)}) \leq 0.
\end{equation}
\item For all $p,q,r$ $\in \mc{P}$
\begin{eqnarray}
\label{eq:reversibility}&&\mc{W}(p\rightarrow q) = - \mc{W}(q\rightarrow p),\\
\label{eq:conservative}&&\mc{W}(p \rightarrow q)+\mc{W}(q \rightarrow r) = \mc{W}(p\rightarrow r).
\end{eqnarray}
\end{enumerate}

\begin{proof}
We will first show that the axioms imply the properties, beginning with properties \eqref{eq:reversibility} and \eqref{eq:conservative}. The two properties follow immediately once we have shown that any cyclic sequence $p_1\rightarrow p_2\rightarrow\ldots\rightarrow p_n=p_1$ has a total work-value equal to zero. Given Axiom~\ref{prin:one}, {which already implies that it is larger than zero}, this only requires us to show that such a sequence has a work-value smaller or equal to zero. {This will be done by showing that any cyclic sequence is a free sequence, which is enough to show the claim given Corollary \ref{cor:corofaxiom2}. }We will show that any cyclic sequence is a {\emph{free sequence}, where following the notation of Def. \ref{def:freesequence}, $c_1=c_n:=c$ } is given by $c=\bigotimes_{i=2}^{n-1}p_i$. To see that $c$ assists any cyclic sequence from $p_1$ to $p_n=p_1$, consider the object $p_1\otimes c=\bigotimes_{i=1}^{n-1}p_i$. By swapping, which is a free operation, we arrive at state $p_2\otimes c'$ with $c'=p_1\otimes p_3\otimes p_4\otimes\cdots\otimes p_{n-1}$. Repeating the swapping sequentially we see that the first system goes through the transitions $p_1\rightarrow p_2\rightarrow \cdots\rightarrow p_{n-1}$. Applying a final swap the \he{fuel} is returned to $c$ and the system returns to object $p_1$, proving the claim and thus, eqs.\  \eqref{eq:reversibility} and \eqref{eq:conservative}. 

Let us now show property \eqref{eq:additivemonotonocitywork} from the axioms. The premise of \eqref{eq:additivemonotonocitywork} is that, there exists a catalytic free transition $\bigotimes_{i=1}^np_i\rightarrow \bigotimes_{i=1}^n q_i$. {\he{Here we are} taking $m=n$ without loss of generality\he{. The other cases follow by tensoring a suitable number of empty objects $\emptyset$}}. Then the transition $p_1\rightarrow q_1$ is {an assisted transition by}
\begin{equation}
\left(c_1=\bigotimes_{i=2}^n \:\: p_i,\:\: c'_1=\bigotimes_{i=2}^n\:\:q_i\right).
\end{equation}
Secondly, the transition $p_2 \rightarrow q_2$ is an {assisted transition} by $(c_2=\bigotimes_{i=2}^n\: q_i,\:p_2\bigotimes_{i=3}^n\:q_i)$. This can be seen by just performing a swap between the work-storage system in $p_2$ and the first element of the \he{fuel} in $q_2$. An equivalent swapping can be used to show that $p_j\rightarrow q_j$ is {an assisted transition} by
\begin{equation}
\left(c_j=\bigotimes_{i=2}^{j-1}p_i\: \bigotimes_{k=j}^nq_k,\:c_j'=\bigotimes_{i=2}^jp_i\bigotimes_{k=j+1}^nq_k\right)
\end{equation}
 for $j=3,\ldots,n-1$. Lastly, $p_n\rightarrow q_n$ is assisted by $(c_n=\bigotimes_{i=2}^{n-1}p_i\otimes q_n,c_n'=\otimes_{i=2}^np_i)$. Altogether, this implies that the set of sequences $\{p_i \rightarrow q_i \}_{i=1}^{n}$ can be each performed with free operations assisted by $(c_i,c'_i)$ as described previously. Note, that $c'_i=c_{i+1}$ and $c_1=c'_n$, hence, it meets the conditions of Axiom \ref{prin:two} which by eq.\ \eqref{eqapp:axiom2} implies
\begin{equation}
\nonumber \sum_{i=1}^n \mc{W}(p_i\rightarrow q_i)\leq 0.
\end{equation}

Finally, let us show that the properties \eqref{eq:additivemonotonocitywork}-\eqref{eq:conservative} imply the axioms. Axiom \ref{prin:one} is trivially satisfied since properties \eqref{eq:reversibility} and \eqref{eq:conservative} imply that for any cyclic sequence the total amount of work is zero. {Let us move to Axiom \ref{axm:freeop}, which has as a premise} that one has $n-1$ assisted transitions $p^{(j)}\rightarrow q^{(j)}$, assisted by $(c_j,c_{j+1})$ with $j=1,\ldots,n-1$ and $c_{n}=c_1$. Then, we can use Lemma~\ref{lemma:timetospace} and see that the transition 
\begin{equation}
\bigotimes_{j=1}^{n-1} p^{(j)} \rightarrow \bigotimes_{j=1}^{n-1} q^{(j)}
\end{equation}
is an assisted transition, assisted by $(c_1,c_{n}=c_1)$. {That is, the system $c$ is returned unchanged, hence $\bigotimes_{j=1}^{n-1} p^{(j)} \rightarrow \bigotimes_{j=1}^{n-1} q^{(j)}$ is indeed a catalytic free transition} and \eqref{eq:additivemonotonocitywork} implies that
\begin{equation}
\sum_{j=1}^{n-1} \mc{W}(p^{(j)}\rightarrow q^{(j)}) \leq 0,
\end{equation}
proving \eqref{eqapp:axiom2} and thus Axiom~\ref{prin:one}.
\end{proof}
\end{lemma}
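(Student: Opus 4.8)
The plan is to prove the two directions of the equivalence separately, exploiting the reformulation of the fuel axiom as Axiom~\ref{axm:freeop}, which refers only to the work-storage device, together with the permutation freedom of Property~\ref{prop:swapping} and composability (Property~\ref{prop:composability}). The recurring technical device is to trade a single catalytic free transition of a tensor product against a \emph{cyclic chain} of assisted transitions whose assisting systems close up, and conversely; this is precisely the ``time-to-space'' correspondence captured by Lemma~\ref{lemma:timetospace}.

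For the direction ``axioms $\Rightarrow$ properties'', I would first establish \eqref{eq:reversibility} and \eqref{eq:conservative} by showing that every cyclic sequence $p_1\rightarrow p_2\rightarrow\cdots\rightarrow p_n=p_1$ carries total work-value exactly zero. Axiom~\ref{axm:cyclic} already gives ``$\geq 0$''. For the reverse inequality I would exhibit such a cycle as a \emph{free sequence} in the sense of Definition~\ref{def:freesequence}: taking the assisting system $c=\bigotimes_{i=2}^{n-1}p_i$ and applying successive swaps (free by Property~\ref{prop:swapping}), the active register runs through $p_1\rightarrow p_2\rightarrow\cdots\rightarrow p_{n-1}$ while a final swap returns both the assisting system to $c$ and the register to $p_1$. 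Corollary~\ref{cor:corofaxiom2} then yields ``$\leq 0$'', so the total is zero; specialising to cycles of length two and three gives \eqref{eq:reversibility} and \eqref{eq:conservative}. To obtain the additive-monotonicity inequality \eqref{eq:additivemonotonocitywork}, I would start from a catalytic free transition $\bigotimes_i p_i\rightarrow\bigotimes_i q_i$ and construct, for each $j$, an assisting pair $(c_j,c_j')$ realising $p_j\rightarrow q_j$: the first step uses the given catalytic transition itself (with its catalyst folded into $c_1$), while every later step is a pure swap moving the already-produced $q_j$ into the assisting system. Arranging $c_1=\bigotimes_{i\geq 2}p_i$ and the primed/unprimed systems so that $c_j'=c_{j+1}$ at each junction and $c_n'=c_1$ closes the loop, Axiom~\ref{axm:freeop} applies and delivers $\sum_j\mc{W}(p_j\rightarrow q_j)\leq 0$.

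For the converse ``properties $\Rightarrow$ axioms'', Axiom~\ref{axm:cyclic} is immediate, since \eqref{eq:reversibility} and \eqref{eq:conservative} force the work-value around any cycle to vanish, hence to be nonnegative. For Axiom~\ref{axm:freeop}, given assisted transitions $p^{(j)}\rightarrow q^{(j)}$ assisted by $(c_j,c_{j+1})$ with $c_n=c_1$, I would invoke Lemma~\ref{lemma:timetospace} repeatedly to fuse them into a single assisted transition $\bigotimes_j p^{(j)}\rightarrow\bigotimes_j q^{(j)}$ assisted by $(c_1,c_1)$. Since the assisting system is returned unchanged, this is a catalytic free transition, so \eqref{eq:additivemonotonocitywork} directly gives the required $\sum_j\mc{W}(p^{(j)}\rightarrow q^{(j)})\leq 0$.

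The step I expect to be most delicate is the catalyst bookkeeping in the forward direction: one must verify that the assisting systems satisfy $c_j'=c_{j+1}$ at every junction and that $c_n'=c_1$ closes the loop, so that a genuinely cyclic sequence of assisted transitions is produced and Axiom~\ref{axm:freeop} is truly applicable. The conceptual content is that the swaps let a single global catalytic transition be ``unzipped'' into a cycle of single-register transitions without ever altering or correlating the net assisting system---which is exactly what the fuel axiom is designed to constrain.
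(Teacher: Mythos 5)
Your proposal is correct and follows essentially the same route as the paper's own proof: the same swap-based argument showing every cycle is a free sequence (with catalyst $c=\bigotimes_{i=2}^{n-1}p_i$), the same catalyst bookkeeping $(c_j,c_j')$ with $c_j'=c_{j+1}$ and $c_n'=c_1$ to unzip a catalytic product transition into a cyclic chain of assisted transitions, and the same use of Lemma~\ref{lemma:timetospace} for the converse direction. Your explicit remark that the catalyst of the given catalytic free transition must be folded into the assisting systems is a point the paper's write-up leaves implicit, and it is handled correctly in your version.
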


Let us now show that Axioms \ref{axm:cyclic} and \ref{axm:freeop}, or equivalently eqs.\ \eqref{eq:additivemonotonocitywork}-\eqref{eq:conservative}, imply that the work function $\mc{W}$ must take a very particular form.

\begin{theorem} [Theorem \ref{thm:generalform} in the main text]\label{thm:diffmonotones}
Given a free image
$\mc{F}$ that fulfils Properties \ref{prop:composability}-\ref{prop:tracingfree}, the function $\mc{W}$ fulfils Axioms \ref{axm:cyclic} and \ref{axm:freeop} if and only if it can be written as
\begin{equation}\label{eq:diffmonotones}
\mc{W}(p\rightarrow q) = M(q) - M(p),
\end{equation}
for a function $M$ such that $M(\emptyset)=0$ and that fulfils the following property:
\begin{enumerate}
\item[] \emph{Additive monotonicity:} For all $p^{(1)},\ldots,p^{(n)}$ and $q^{(1)},\ldots,q^{(n)}$ in $\mc{P}$ such that $\bigotimes_{i=1}^{n} q^{(i)} \in \mc{F}_C(\bigotimes_{i=1}^{n} p^{(i)} )$
\begin{equation}\label{eq:additivemonotonicity}
\sum_{i=1}^{n} M( q^{(i)} ) \leq \sum_{i=1}^{n}  M (p^{(i)} ).
\end{equation}
\end{enumerate}
\end{theorem}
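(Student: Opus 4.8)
The plan is to deduce the claim directly from Lemma~\ref{thm:3properties}, which has already established that $\mc{W}$ satisfies Axioms~\ref{axm:cyclic} and \ref{axm:freeop} if and only if it obeys the three conditions \eqref{eq:additivemonotonocitywork}--\eqref{eq:conservative}. It therefore suffices to show that these three conditions are equivalent to the existence of a function $M$ with $M(\emptyset)=0$ that satisfies additive monotonicity \eqref{eq:additivemonotonicity} and represents $\mc{W}$ as $\mc{W}(p\rightarrow q)=M(q)-M(p)$. The entire construction hinges on a single natural choice of potential: set $M(p):=\mc{W}(\emptyset\rightarrow p)$, which is well-defined since $\emptyset\in\mc{P}$.

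For the direction from the conditions to the representation, I would first use reversibility \eqref{eq:reversibility} with $p=q=\emptyset$ to get $\mc{W}(\emptyset\rightarrow\emptyset)=-\mc{W}(\emptyset\rightarrow\emptyset)$, hence $M(\emptyset)=0$. Next, the representation $\mc{W}(p\rightarrow q)=M(q)-M(p)$ follows by writing $M(q)-M(p)=\mc{W}(\emptyset\rightarrow q)-\mc{W}(\emptyset\rightarrow p)$, converting the second term via reversibility \eqref{eq:reversibility} into $\mc{W}(p\rightarrow\emptyset)$, and then collapsing $\mc{W}(p\rightarrow\emptyset)+\mc{W}(\emptyset\rightarrow q)=\mc{W}(p\rightarrow q)$ through the path-independence relation \eqref{eq:conservative} applied to the chain $p\rightarrow\emptyset\rightarrow q$. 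Finally, additive monotonicity of $M$ is just condition \eqref{eq:additivemonotonocitywork} rewritten: whenever $\bigotimes_{i} q^{(i)}\in\mc{F}_C(\bigotimes_{i} p^{(i)})$, substituting $\mc{W}(p^{(i)}\rightarrow q^{(i)})=M(q^{(i)})-M(p^{(i)})$ into $\sum_{i}\mc{W}(p^{(i)}\rightarrow q^{(i)})\leq 0$ yields $\sum_{i} M(q^{(i)})\leq\sum_{i} M(p^{(i)})$, which is precisely \eqref{eq:additivemonotonicity}.

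For the converse, assuming the representation and additive monotonicity, each of the three conditions is immediate by substitution: reversibility \eqref{eq:reversibility} and path-independence \eqref{eq:conservative} are purely algebraic identities of the difference $M(q)-M(p)$, while condition \eqref{eq:additivemonotonocitywork} is exactly \eqref{eq:additivemonotonicity} read back through the representation. I do not anticipate a genuine obstacle here, since the substantive work---relating the cyclic and assisted-transition axioms to path-independence and to the $\mc{F}_C$-monotonicity inequality---was already carried out in Lemma~\ref{thm:3properties} using Properties~\ref{prop:composability}--\ref{prop:tracingfree} and Lemma~\ref{lemma:timetospace}. The only point demanding a little care is verifying that $M$ is well-defined as a genuine state function, i.e.\ that the value $\mc{W}(\emptyset\rightarrow p)$ carries no residual dependence on intermediate choices; this is guaranteed precisely by path-independence \eqref{eq:conservative}, which also shows that $M$ is fixed uniquely once the normalisation $M(\emptyset)=0$ is imposed.
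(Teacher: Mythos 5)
Your proposal is correct and follows essentially the same route as the paper's proof: both reduce the theorem to the three conditions \eqref{eq:additivemonotonocitywork}--\eqref{eq:conservative} established in Lemma~\ref{thm:3properties}, define the potential $M(p):=\mc{W}(\emptyset\rightarrow p)$, and obtain the representation and additive monotonicity by direct substitution. Your write-up is in fact slightly more explicit than the paper's (e.g.\ deriving $M(\emptyset)=0$ from reversibility rather than asserting it by definition, and spelling out both directions of the equivalence), but the underlying argument is identical.
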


\begin{proof}
We will prove it by showing an equivalence with conditions \eqref{eq:additivemonotonocitywork}-\eqref{eq:conservative}, which in turn are equivalent with Axioms \ref{axm:cyclic} and \ref{axm:freeop}. Consider the function $M(p):= \mc{W}(\emptyset\rightarrow p)$. By properties \eqref{eq:reversibility} and \eqref{eq:conservative} we have
\begin{equation}
\mc{W}(p\rightarrow q) = M(q)-M(p)
\end{equation}
and $M(\emptyset)=0$ is true by definition. Clearly, \eqref{eq:additivemonotonocitywork} is fulfilled if and only if additive monotonicity holds for $M$. 
\end{proof}

\section{Gibbs-preserving and thermal operations}
\label{sec:app-GP}
In this section, we will turn to two classes of operations that can be used to model meaningful
classes of thermodynamic operations in the quantum regime, namely Gibbs-preserving operations (GPO) \cite{Janzing00,Faist14,Wilming15} and thermal operations (TO) \cite{Nanomachines}. We will first introduce the necessary objects, then define 
what state transitions are possible, and finally show that all the necessary 
properties are indeed fulfilled. 
Both GPO and TO have the same sets of free objects, induced by Gibbs-states.
\begin{definition}[Gibbs objects] The free objects of GPO and TO are given by
\begin{equation}
w = (\omega_H,H),\quad \omega_H = \frac{\exp(-\beta H)}{Z_H},
\end{equation}
with any Hamiltonian $H$, and called \emph{Gibbs objects}.
\end{definition}
Since 
$\omega_H=\omega_{H+\lambda \id}$ for any $\lambda \in \RR$, Gibbs objects 
are well-defined. To every object $p=(\rho,H)$ we can associate the Gibbs object 
\begin{equation}
w(p) = (\omega_H,H). 
\end{equation}

Let us now define Gibbs-preserving transitions.
\begin{definition}[Gibbs-preserving transition]
A transition $p=(\rho,H)\rightarrow q=(\sigma,K)$ is \emph{Gibbs-preserving} if there exists a quantum channel $\mc{G}$ such that
\begin{equation}
\sigma=\mc{G}(\rho)\ \text{and}\ \omega_K=\mc{G}(\omega_H).
\end{equation}
\end{definition}

Clearly, any Gibbs-object is mapped to another Gibbs-objects under Gibbs-preserving transitions, hence the name. In the case that the Hamiltonian $H$ does not change in a Gibbs-preserving transition we call the corresponding quantum channel $\mc{G}$ a Gibbs-preserving channel with respect to $H$. An operational way to think about the change of Hamiltonian in Gibbs-preserving transitions is given by Gibbs-preserving operations.
\begin{definition}[Gibbs-preserving operations] Any operation composed of taking thermal objects (at the fixed inverse temperature $\beta$), applying Gibbs-preserving channels and tracing out subsystems is called a \emph{Gibbs-preserving operation} (GPO). 
\end{definition}
GPO are closed under composition since the set of Gibbs-objects is closed under tensor products and a composition of two Gibbs-preserving channels is again Gibbs-preserving. Let us discuss some examples of GPO. A particular way to describe them is through maps from objects to objects which induce Gibbs-preserving transitions.

\begin{example}
Suppose $G$ maps objects to objects, such that
\begin{equation}
G(\rho,H) = (\mc{G}_H(\rho),\tilde{\mc{G}}(H))
\end{equation}
with $\mc{G}_H$ a quantum channel and $\tilde{\mc{G}}$ a map that maps Hamiltonians 
onto Hamiltonians. Furthermore, suppose that the maps $\mc{G}_H,\tilde{\mc{G}}$ fulfill the consistency relation
\begin{equation}
\mc{G}_H(\omega_H) = \omega_{\tilde{\mc{G}}(H)}
\end{equation}
for all objects $(\rho,H)$. Then $p\rightarrow G(p)$ is a Gibbs-preserving transition for any object $p$ and can be written as a Gibbs-preserving operation.
\end{example}
To see that the above construction can be seen as Gibbs-preserving operations let $\mc{T}_H$ be the channel which acts as $\mc{T}_H(\rho)=\omega_H$ for any $\rho$, $\mc{S}$ the swap channel $\mc{S}(\rho\otimes \sigma) = \sigma \otimes \rho$ and $\omega_H\otimes$ the channel that tensors in $\omega_H$, i.e., $\omega_H\otimes (\rho)=\omega_{H}\otimes \rho$. Now let $G$ be any Gibbs-preserving operation. It is easy to check from the consistency condition that $\mc{S}\circ \mc{G}_H\otimes \mc{T}_H$ is a Gibbs-preserving channel with respect to the Hamiltonian $H\otimes \id+\id\otimes \tilde{\mc{G}}(H)$. A simple calculation furthermore shows that on the level of quantum states we have
\begin{equation}
tr_2\circ \mc{S}\circ \left(\mc{G}_H\otimes\mc{T}_H\right)\circ \omega_{\tilde{\mc{G}}(H)}\otimes (\rho) = \mc{G}_H(\rho),
\end{equation}
while on the level of Hamiltonians we have the mapping $H\mapsto \tilde{\mc{G}}(H)$. 

In the following examples we use the notation of the previous example. 
\begin{example}
Suppose $\mc{G}_H={\rm id}$. Then $\mc{G}(H) = H$ (as equivalence classes) must hold true for the pair to be a Gibbs-preserving operation. Conversely, if $\mc{G}(H) = H$ then automatically 
$\mc{G}_H(\omega_H) = \omega_H$ has to be valid. 
\end{example} 
This example implies that $\mc{G}_H(\omega_H)=\omega_H$ if and only if $\mc{G}(H)=H$ (as equivalence class).
\begin{example}\label{app:ex:unitary-gpo}
Fix an $n$-dimensional unitary $U_n$ for every $n\in\NN$. Then letting $\tilde{\mc{G}}(H) = U_nHU_n^\dagger$ and $\mc{G}_H(\rho)=U_nHU_n^\dagger$, with $n\in\NN$ being
the dimension corresponding to the Hilbert space of $\rho$, defines a Gibbs-preserving operation.
\end{example}
This example implies that the \emph{swap-operation} $p\otimes q\mapsto q\otimes p$ is a Gibbs-preserving operation.
\begin{example}
For any Gibbs object $w$, the map $p\mapsto p\otimes w$ is a Gibbs-preserving operation.
\end{example}
\begin{example}
For any non-interacting object $p_{A_1,\cdots, A_N}$ and any subset $S\subseteq\{A_1,\ldots,A_N\}$, the partial trace $p_{A_1,\ldots ,A_N}\mapsto p_S$ is a Gibbs-preserving operation.  
\end{example}
\begin{example}\label{app:ex:to} To every Hamiltonian $H$ choose a unitary $U_H$ such that
\begin{equation}
[U,H] = 0.
\end{equation}
Then the map $T$ which acts as
\begin{equation}
T(\rho,H) = \left(U_H \rho U_H^\dagger, H\right)
\end{equation}
is a Gibbs-preserving operation.
\end{example}
\begin{definition}[Thermal operations] 
A thermal operation is any operation that can be composed from the operations in the Examples~\ref{app:ex:unitary-gpo}--\ref{app:ex:to}.
\end{definition}
By definition, thermal operations are closed under composition.
\begin{definition}[Catalysis]
A transition $p\rightarrow r$ is a \emph{catalytic Gibbs-preserving transition} if there exists an object $q$ such that
\begin{equation}
p\otimes q \rightarrow r\otimes q
\end{equation}
is a Gibbs-preserving transition.
The transition is a \emph{catalytic thermal transition} if it is induced by a thermal operation.
\end{definition}
We will see later, Corollary \ref{cor:gibbstogibbs}, that also catalytic Gibbs-preserving transitions always map Gibbs objects to Gibbs objects. This is even true if correlations are allowed to built up between the system and the catalyst. 
\begin{proposition}
Gibbs-preserving transitions and transitions induced by thermal operations fulfil Properties~\ref{prop:composability}--\ref{prop:tracingfree}.
\begin{proof}
This immediately follows from the examples and the definition of thermal operations.
\end{proof}
\end{proposition}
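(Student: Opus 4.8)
The plan is to verify the three properties one at a time, handling Gibbs-preserving operations (GPO) and thermal operations (TO) in parallel, and in each case reducing the claim to the explicit examples constructed immediately above. The guiding observation is that both classes are by \emph{definition} closed under composition, so Property~\ref{prop:composability} is essentially free of content: for GPO this is the remark recorded in the text that the set of Gibbs objects is closed under tensor products and that a composition of Gibbs-preserving channels is again Gibbs-preserving, and for TO it is immediate from the definition as finite compositions of the maps in Examples~\ref{app:ex:unitary-gpo}--\ref{app:ex:to}. Concretely, if $p^{(2)}\in\mc{F}(p^{(1)},\beta)$ is witnessed by an operation $G_1$ of the given class and $p^{(3)}\in\mc{F}(p^{(2)},\beta)$ by $G_2$, then $G_2\circ G_1$ is an operation of the same class and witnesses $p^{(3)}\in\mc{F}(p^{(1)},\beta)$.

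For Property~\ref{prop:swapping} I would first show that a single transposition $p\otimes q\mapsto q\otimes p$ is a free operation. This is an instance of Example~\ref{app:ex:unitary-gpo} with $U$ the swap unitary, which acts as $\tilde{\mc{G}}(H_A+H_B)=H_B+H_A$ on the Hamiltonian and by conjugation on the state; the required consistency relation holds because $\mc{G}_H(\omega_H)=U\omega_H U^\dagger=\omega_{UHU^\dagger}=\omega_{\tilde{\mc{G}}(H)}$, i.e.\ unitary conjugation sends a Gibbs state to the Gibbs state of the conjugated Hamiltonian. Hence the swap lies in both free images, and since it is one of the listed building blocks it is in particular a TO. A general permutation $P$ of the $n$ factors factorises into a product of transpositions, so by the composability established above $P(p^{(1)}\otimes\cdots\otimes p^{(n)})\in\mc{F}(p^{(1)}\otimes\cdots\otimes p^{(n)},\beta)$ for every $\beta$.

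Property~\ref{prop:tracingfree} I would dispatch by citing directly the example asserting that for any non-interacting object $p_{A_1\cdots A_N}$ and any subset $S$ the partial trace $p_{A_1\cdots A_N}\mapsto p_S$ is a GPO, and which is likewise one of the building blocks for TO. The degenerate case where $S$ is the whole system, $\tr_S(p)=\emptyset$, is handled identically, recovering that discarding an object is always free.

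I do not expect a genuine obstacle here: the proof is almost entirely a matter of matching each abstract property to the concrete example that realises it. The only point that deserves care is the consistency relation for the swap in the GPO picture, where one must treat the Hamiltonian as transforming $H_A+H_B\mapsto H_B+H_A$ rather than being held fixed, and check that this is consistent with the convention that Hamiltonians are identified up to additive multiples of the identity. Once this is verified, every property reduces to ``compose the appropriate examples,'' which is precisely the content of the one-line argument.
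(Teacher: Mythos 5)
Your proposal is correct and follows exactly the route the paper intends: its one-line proof ("immediately follows from the examples and the definition of thermal operations") is precisely your matching of composability to closure under composition, of Property~\ref{prop:swapping} to Example~\ref{app:ex:unitary-gpo} (with permutations factored into swaps), and of Property~\ref{prop:tracingfree} to the partial-trace example, for both GPO and TO. You have simply made explicit the bookkeeping --- including the consistency check $U\omega_H U^\dagger = \omega_{UHU^\dagger}$ for the swap --- that the paper leaves implicit.
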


\section{Free energy}
In this section we give an example for a valid work-quantifier if we choose as free state transitions Gibbs-preserving transitions or those induced by thermal operations. As customary, we define the von~Neumann \emph{free energy} of an object $p=(\rho,H)$ as the function

\begin{equation}
\Delta F_1^\beta(\rho,H) := F((\rho,H),\beta)-F((\omega_H,H),\beta),
\end{equation}
where
\begin{equation}
F(p,\beta) := \tr(\rho H) - \frac{1}{\beta}S(\rho),
\end{equation}
and $S(\rho)$ denotes the von~Neumann entropy of $\rho$.
We can also express $\Delta F_1^\beta$ using the quantum relative entropy, which is for states $\rho$ and $\sigma$ 
defined as
\begin{equation}
S(\rho || \sigma) = \tr\left(\rho \log \rho - \rho\log\sigma\right)
\end{equation}
if $\mathrm{supp}(\rho)\subseteq \mathrm{supp}(\sigma)$ and is equal to $\infty$ otherwise. It is well 
known that
\begin{equation}
\Delta F_1^\beta(\rho,H) = \frac{1}{\beta}S(\rho||\omega_H).
\end{equation}
This equation also directly shows that $\Delta F^\beta(\rho,H)$ is a well-defined function on objects: It is invariant under 
maps of the form
\begin{equation}
	H~\mapsto~H+\lambda \id 
\end{equation}
for any $\lambda\in \RR$.
In this section we will prove the following proposition.

\begin{proposition}[Properties of the von-Neumann free energy monotone]\label{app:thm:free-energy-monotone}
The function $\Delta F_1^\beta$ is a monotone under catalytic Gibbs-preserving transitions and fulfils
the following properties:
\begin{enumerate}
\item \emph{Normalisation:}
$\Delta F_1^\beta(w)=\Delta F_1^\beta(\emptyset)=0$ for any $w$ being a Gibbs object.
\item \emph{Extensivity:}
\begin{equation}
\Delta F_1^\beta(p_A \otimes p_B)=\Delta F_1^\beta(p_A) + \Delta F_1^\beta(p_B).
\end{equation}
\item \emph{Strong generalized super-additivity:} If $p^{(f)}_{AB},p^{(i)}_{AB}$ are non-interacting objects on $AB$,
\begin{eqnarray}
\Delta F_1^\beta(p^{(f)}_A) - \Delta F^\beta(p^{(i)}_A) &\geq& \Delta F_1^\beta(p^{(f)}_{AB})\\
&-&
\Delta F_1^\beta(p^{(i)}_{AB}) \nonumber
\end{eqnarray}
if $p^{(f)}_{AB}$ can be reached from $p^{(i)}_{AB}$ by only acting on subsytem $A$.
\end{enumerate}
\end{proposition}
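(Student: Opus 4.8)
The plan is to work entirely with the relative-entropy representation $\Delta F_1^\beta(\rho,H)=\frac{1}{\beta}S(\rho\|\omega_H)$ recorded just before the statement, and to reduce the whole proposition to three elementary features of the quantum relative entropy: that $S(\sigma\|\sigma)=0$, that it is additive on product states, $S(\rho_1\otimes\rho_2\|\sigma_1\otimes\sigma_2)=S(\rho_1\|\sigma_1)+S(\rho_2\|\sigma_2)$, and that it satisfies the data-processing inequality $S(\mc{G}(\rho)\|\mc{G}(\sigma))\leq S(\rho\|\sigma)$ for every channel $\mc{G}$. Normalisation and extensivity then fall out immediately. For a Gibbs object $w=(\omega_H,H)$ one gets $\Delta F_1^\beta(w)=\frac{1}{\beta}S(\omega_H\|\omega_H)=0$, and the empty object is the one-dimensional special case. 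For extensivity I would use that the Gibbs state factorises, $\omega_{H_A+H_B}=\omega_{H_A}\otimes\omega_{H_B}$ (since $e^{-\beta(H_A+H_B)}=e^{-\beta H_A}\otimes e^{-\beta H_B}$ and $Z_{H_A+H_B}=Z_{H_A}Z_{H_B}$), so that additivity of the relative entropy yields $\Delta F_1^\beta(p_A\otimes p_B)=\Delta F_1^\beta(p_A)+\Delta F_1^\beta(p_B)$.

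For monotonicity under catalytic Gibbs-preserving transitions I would first dispose of the non-catalytic case: a Gibbs-preserving transition $(\rho,H)\to(\sigma,K)$ comes equipped with a channel $\mc{G}$ satisfying $\sigma=\mc{G}(\rho)$ and $\omega_K=\mc{G}(\omega_H)$, whence data processing gives $S(\sigma\|\omega_K)=S(\mc{G}(\rho)\|\mc{G}(\omega_H))\leq S(\rho\|\omega_H)$, i.e.\ $\Delta F_1^\beta(q)\leq\Delta F_1^\beta(p)$. For the catalytic version I would apply this to a transition $p\otimes c\to q\otimes c$ and cancel the catalyst contribution using the extensivity just established.

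The substantial part is strong generalized super-additivity, and the hard part is isolating the right algebraic identity. The key step I would prove is that, for any $\rho_{AB}$ and any product reference $\sigma_A\otimes\sigma_B$,
\[
S(\rho_{AB}\|\sigma_A\otimes\sigma_B)-S(\rho_A\|\sigma_A)=I(A\!:\!B)_\rho+S(\rho_B\|\sigma_B),
\]
where $I(A\!:\!B)_\rho:=S(\rho_A)+S(\rho_B)-S(\rho_{AB})$ is the quantum mutual information. This follows by expanding $\log(\sigma_A\otimes\sigma_B)=\log\sigma_A\otimes\id+\id\otimes\log\sigma_B$ and observing that the two $\tr(\rho_A\log\sigma_A)$ contributions cancel; crucially, the right-hand side is then independent of $\sigma_A$, so the choice of Hamiltonian on $A$ never enters.

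To finish I would write this identity with $\sigma_A=\omega_{H_A}$, $\sigma_B=\omega_{H_B}$ for both the initial and final objects. Rearranging the claimed inequality and grouping the $AB$ and $A$ terms, the quantity $\big(\Delta F_1^\beta(p^{(f)}_A)-\Delta F_1^\beta(p^{(i)}_A)\big)-\big(\Delta F_1^\beta(p^{(f)}_{AB})-\Delta F_1^\beta(p^{(i)}_{AB})\big)$ reduces, after the $S(\rho_B\|\omega_{H_B})$ terms are matched up, to $\tfrac{1}{\beta}\big(I(A\!:\!B)_{\rho^{(i)}}-I(A\!:\!B)_{\rho^{(f)}}\big)$. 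The hypothesis that $p^{(f)}_{AB}$ is reached by acting only on $A$ means $\rho^{(f)}_{AB}=(\mc{E}_A\otimes\id_B)(\rho^{(i)}_{AB})$ with $H_B$ fixed, which has two consequences: the $B$-marginal is untouched, $\rho^{(f)}_B=\rho^{(i)}_B$, so the $S(\rho_B\|\omega_{H_B})$ terms indeed cancel; and mutual information cannot grow under a channel acting locally on $A$, so $I(A\!:\!B)_{\rho^{(f)}}\leq I(A\!:\!B)_{\rho^{(i)}}$. Together these make the difference nonnegative, which is exactly property~3. I expect the only delicate points to be justifying the monotonicity of $I(A\!:\!B)$ under the local operation (itself a data-processing argument applied to $S(\rho_{AB}\|\rho_A\otimes\rho_B)$) and confirming that no Gibbs-preserving assumption on $\mc{E}_A$ is needed, since the identity already discards all dependence on $\sigma_A$ and hence on the change of $H_A$.
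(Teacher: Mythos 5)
Your proposal is correct and follows essentially the same route as the paper's proof: normalisation and extensivity from elementary properties of the relative entropy, monotonicity (catalytic via extensivity-cancellation of the catalyst) from data processing, and strong generalized super-additivity from the same decomposition of $S(\rho_{AB}\|\sigma_A\otimes\sigma_B)$ into mutual information plus marginal relative entropies, with the $B$-terms cancelling and the mutual-information difference controlled by data processing under the local channel on $A$. Your identity is just a rearrangement of the paper's Pythagorean-type relation $S(\rho_{AB}\|\omega_A\otimes\omega_B)=S(\rho_{AB}\|\rho_A\otimes\rho_B)+S(\rho_A\otimes\rho_B\|\omega_A\otimes\omega_B)$, so the two arguments coincide in substance.
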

Note that this proposition also implies that $\Delta F_1^\beta$ is a monotone for catalytic thermal transitions, since these 
constitute a strict subset of catalytic Gibbs-preserving transitions. Hence $\Delta F^\beta$ defines a valid work-quantifier for both Gibbs-preserving transitions and thermal operations. The property of strong generalized super-additivity furthermore implies the usual super-additivity $\Delta F_1^\beta(p_{AB})\geq \Delta F_1^\beta(p_A)+\Delta F_1^\beta(p_B)$, 
if $p_{AB}$ is non-interacting.

We will separate the proof into several propositions. We will frequently use the following well-known properties of the relative entropy: 
\begin{enumerate}
\item \emph{Positivity:} $S(\rho ||\sigma) \geq 0$ and $S(\rho || \sigma)=0$ if and only if $\rho=\sigma$,
\item \emph{Data-processing inequality}: $S(T(\rho)||T(\sigma)) \leq S(\rho || \sigma)$ for any quantum channel $T$.
\item \emph{Mutual information:}
For any bipartite state $\rho_{A_1A_2}$ we have
\begin{equation}
S(\rho_{AB}) = S(\rho_{A}) + S(\rho_{AB}) - S(\rho_{AB} || \rho_{A}\otimes \rho_{B}). 
\end{equation}
\end{enumerate}
Positivity directly implies that, for a fixed Hamiltonian $H$, the Gibbs-state $\omega_H$ at inverse temperature $\beta$ is the \emph{unique} minimum of the function $\rho\mapsto \Delta F_1^\beta(\rho,H)\geq 0$. Thus we already know that $\Delta F^\beta_1(w)=0$ for any Gibbs object and that $\Delta F_1^\beta(p)>0$ if $p$ is not a Gibbs object. 
\begin{proposition}[Extensivity of the free energy difference]
The function $\Delta F_1^\beta$ is extensive.
\begin{proof}
The proof
follows immediately from Property 3. of the relative entropy.
\end{proof}
\end{proposition}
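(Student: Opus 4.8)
The plan is to verify the additivity $\Delta F_1^\beta(p_A \otimes p_B) = \Delta F_1^\beta(p_A) + \Delta F_1^\beta(p_B)$ directly from the definition $\Delta F_1^\beta(\rho,H) = F((\rho,H),\beta) - F((\omega_H,H),\beta)$ with $F(p,\beta) = \tr(\rho H) - S(\rho)/\beta$, treating the energy and entropy contributions separately and then subtracting the Gibbs-object value. The object in question is the non-interacting tensor product $p_A \otimes p_B = (\rho_A \otimes \rho_B,\ H_A \otimes \id + \id \otimes H_B)$.

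For the energy term I would use linearity of the trace together with $\tr \rho_A = \tr \rho_B = 1$ to obtain $\tr[(\rho_A \otimes \rho_B)(H_A \otimes \id + \id \otimes H_B)] = \tr(\rho_A H_A) + \tr(\rho_B H_B)$. For the entropy term I would invoke the mutual-information identity for the relative entropy applied to the product state $\rho_A \otimes \rho_B$: since the mutual-information term $S(\rho_A \otimes \rho_B || \rho_A \otimes \rho_B)$ vanishes, this yields the additivity $S(\rho_A \otimes \rho_B) = S(\rho_A) + S(\rho_B)$. Combining the two gives $F(p_A \otimes p_B,\beta) = F(p_A,\beta) + F(p_B,\beta)$.

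It then remains only to handle the Gibbs-object subtrahend, and this is the one point that genuinely needs care: the factorization of the Gibbs object of $p_A \otimes p_B$. Because $H_A \otimes \id$ and $\id \otimes H_B$ commute, one has $\exp(-\beta(H_A \otimes \id + \id \otimes H_B)) = e^{-\beta H_A} \otimes e^{-\beta H_B}$ and $Z_{H_A + H_B} = Z_{H_A} Z_{H_B}$, so $w(p_A \otimes p_B) = w(p_A) \otimes w(p_B)$. Applying the additivity of $F$ just established to this Gibbs object as well, and subtracting, yields $\Delta F_1^\beta(p_A \otimes p_B) = \Delta F_1^\beta(p_A) + \Delta F_1^\beta(p_B)$. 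Equivalently, I could run the whole argument in the relative-entropy picture, using the same factorization of the Gibbs state together with the additivity $S(\rho_A \otimes \rho_B || \omega_{H_A} \otimes \omega_{H_B}) = S(\rho_A || \omega_{H_A}) + S(\rho_B || \omega_{H_B})$, which follows from the splitting $\log(\rho_A \otimes \rho_B) = \log \rho_A \otimes \id + \id \otimes \log \rho_B$ and the analogous splitting for $\omega_{H_A} \otimes \omega_{H_B}$. There is no real obstacle here; the statement is a routine computation, the only subtlety being that extensivity rests on the non-interacting form of the composite Hamiltonian, which is precisely what makes both the state and its Gibbs object factorize.
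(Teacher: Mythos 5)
Your proof is correct and takes essentially the same route as the paper: the paper's one-line proof invokes exactly the mutual-information identity (Property 3 of the relative entropy) that you use to get $S(\rho_A\otimes\rho_B)=S(\rho_A)+S(\rho_B)$, with the energy additivity and the Gibbs-object factorization $w(p_A\otimes p_B)=w(p_A)\otimes w(p_B)$ left implicit. You have merely spelled out those routine steps (correctly), including the one point worth making explicit, namely that the factorization rests on the non-interacting form of the composite Hamiltonian.
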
 
\begin{proposition}[Super-additivity of the free energy difference]
The function $\Delta F_1^\beta$ fulfils strong generalized super-additivity.
\begin{proof}
Assume that two objects
\begin{equation}
	p^{(i)}_{AB}=(\rho^{(i)}_{AB},H^{(i)}_A\otimes\id_B+\id_A\otimes H^{(i)}_B) 
\end{equation}	
	and 
\begin{equation}	
	p^{(f)}_{AB}=(\rho^{(f)}_{AB},H^{(f)}_A\otimes\id_B+\id_A\otimes H^{(f)}_B) 
\end{equation}	
are related through a local operation on $A$. Then 
\begin{equation}
\rho^{(f)}_{AB}=(\mc{T}_A\otimes \id)(\rho^{(i)}_{AB})
\end{equation}	 
for some quantum channel $\mc{T}_A$ acting on system $A$ and therefore $\rho^{(i)}_B=\rho^{(f)}_B$ and $H^{(i)}_B=H^{(f)}_B$. We need to show that
\begin{equation}\label{app:sgsa2}
\Delta F_1^\beta(p^{(f)}_A) - \Delta F^\beta(p^{(i)}_A) \geq \Delta F_1^\beta(p^{(f)}_{AB}) - \Delta F_1^\beta(p^{(i)}_{AB}). 
\end{equation}
If $\omega_{A},\omega_{B}$ are two Gibbs-states, then it is easy to prove, 
using locality, that for any state $\rho_{AB}$ on $AB$ we have
\begin{align}
S(\rho_{AB}||\omega_{A}\otimes \omega_{B}) = &S(\rho_{AB}||\rho_{A}\otimes \rho_{B}) \\ &+S(\rho_{A}\otimes \rho_{B}||\omega_{A}\otimes \omega_{B}).\nonumber
\end{align}
Using this relation we can rewrite the r.h.s. of eq.~\eqref{app:sgsa2} as
\begin{align}
&\frac{1}{\beta}\left[S(\rho^{(f)}_{AB}||\rho^{(f)}_A\otimes \rho^{(f)}_B) -  
S(\rho^{(i)}_{AB}||\rho^{(i)}_A\otimes \rho^{(i)}_B)\right] \\
&+\frac{1}{\beta}\left[S(\rho^{(f)}_{A}\otimes\rho^{(f)}_B||\omega_{H^{(f)}_A}\otimes \omega_{H^{(f)}_B})\right.\nonumber\\ &\quad-\left. S(\rho^{(i)}_{A}\otimes\rho^{(i)}_B||\omega_{H^{(i)}_A}\otimes \omega_{H^{(i)}_B})\right].\nonumber
\end{align}
Using $\rho^{(f)}_B=\rho^{(i)}_B$, $H^{(i)}_B=H^{(f)}_B$  and extensitivity, we find 
that the second term in brackets reduces to
\begin{equation}
\frac{1}{\beta}\left[S(\rho^{(f)}_A||\omega_{H^{(f)}_A}) - S(\rho^{(i)}_A||\omega_{H^{(i)}_A}) \right] = \Delta F_1^\beta(p^{(f)}_A) -\Delta F_1^\beta(p^{(i)}_A).
\end{equation}
But from the data-processing inequality we get that 
\begin{equation}
\frac{1}{\beta}\left[S(\rho^{(f)}_{AB}||\rho^{(f)}_A\otimes \rho^{(f)}_B) -  S(\rho^{(i)}_{AB}||\rho^{(i)}_A\otimes \rho^{(i)}_B)\right] = C \leq 0.
\end{equation}
We thus have
\begin{equation}
\mathrm{r.h.s.} = \Delta F_1^\beta(p^{(f)}_A) -\Delta F_1^\beta(p^{(i)}_A) + C \leq \Delta F_1^\beta(p^{(f)}_A) -\Delta F_1^\beta(p^{(i)}_A). 
\end{equation}
\end{proof}
\end{proposition}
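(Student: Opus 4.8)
The plan is to reduce the whole statement to the relative-entropy representation $\Delta F_1^\beta(\rho,H)=\frac{1}{\beta}S(\rho||\omega_H)$ and then exploit a chain-rule decomposition of the relative entropy together with the data-processing inequality. First I would record the structural consequences of the hypothesis that $p^{(f)}_{AB}$ is obtained from $p^{(i)}_{AB}$ by a channel $\mc{T}_A$ acting only on $A$: the $B$-marginal and its Hamiltonian are untouched, so $\rho^{(i)}_B=\rho^{(f)}_B$ and $H^{(i)}_B=H^{(f)}_B$ (hence $\omega_{H^{(i)}_B}=\omega_{H^{(f)}_B}$), while $\rho^{(f)}_A=\mc{T}_A(\rho^{(i)}_A)$. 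These facts are exactly what makes the $B$-contributions cancel later.

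The key algebraic step is the identity, valid for any bipartite state $\rho_{AB}$ and any product $\omega_A\otimes\omega_B$ of Gibbs states,
\[
S(\rho_{AB}||\omega_A\otimes\omega_B)=S(\rho_{AB}||\rho_A\otimes\rho_B)+S(\rho_A\otimes\rho_B||\omega_A\otimes\omega_B),
\]
which follows immediately by expanding both sides into $\tr(\rho\log\rho)$ and $\tr(\rho\log\omega)$ terms and using that the second argument factorizes. Applying this to both $p^{(i)}_{AB}$ and $p^{(f)}_{AB}$ and dividing by $\beta$, I would split $\Delta F_1^\beta(p^{(f)}_{AB})-\Delta F_1^\beta(p^{(i)}_{AB})$ into two pieces: a mutual-information piece $C:=\frac{1}{\beta}[S(\rho^{(f)}_{AB}||\rho^{(f)}_A\otimes\rho^{(f)}_B)-S(\rho^{(i)}_{AB}||\rho^{(i)}_A\otimes\rho^{(i)}_B)]$, and a product-marginal piece. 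By additivity of the relative entropy on product states the product-marginal piece equals $\frac{1}{\beta}[S(\rho^{(f)}_A||\omega_{H^{(f)}_A})+S(\rho^{(f)}_B||\omega_{H^{(f)}_B})-S(\rho^{(i)}_A||\omega_{H^{(i)}_A})-S(\rho^{(i)}_B||\omega_{H^{(i)}_B})]$; since the $B$-data is unchanged the two $B$-terms cancel and this piece reduces to exactly $\Delta F_1^\beta(p^{(f)}_A)-\Delta F_1^\beta(p^{(i)}_A)$.

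It then remains to show $C\leq 0$, which is where data processing enters. Because $\rho^{(f)}_{AB}=(\mc{T}_A\otimes\id)(\rho^{(i)}_{AB})$ and, using $\rho^{(f)}_B=\rho^{(i)}_B$, also $\rho^{(f)}_A\otimes\rho^{(f)}_B=(\mc{T}_A\otimes\id)(\rho^{(i)}_A\otimes\rho^{(i)}_B)$, monotonicity of the relative entropy under the channel $\mc{T}_A\otimes\id$ gives $S(\rho^{(f)}_{AB}||\rho^{(f)}_A\otimes\rho^{(f)}_B)\leq S(\rho^{(i)}_{AB}||\rho^{(i)}_A\otimes\rho^{(i)}_B)$, i.e.\ $C\leq 0$. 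Combining the two pieces yields $\Delta F_1^\beta(p^{(f)}_{AB})-\Delta F_1^\beta(p^{(i)}_{AB})=C+[\Delta F_1^\beta(p^{(f)}_A)-\Delta F_1^\beta(p^{(i)}_A)]\leq\Delta F_1^\beta(p^{(f)}_A)-\Delta F_1^\beta(p^{(i)}_A)$, which is precisely strong generalized super-additivity. The only genuinely nontrivial ingredient is the data-processing inequality; everything else is bookkeeping, and the main pitfall to watch is verifying that the \emph{second} argument $\rho^{(i)}_A\otimes\rho^{(i)}_B$ is mapped by $\mc{T}_A\otimes\id$ exactly onto $\rho^{(f)}_A\otimes\rho^{(f)}_B$ (this is where $\rho^{(f)}_B=\rho^{(i)}_B$ is essential), so that the mutual-information difference can legitimately be bounded.
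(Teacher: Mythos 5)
Your proof is correct and follows essentially the same route as the paper's: the same chain-rule decomposition $S(\rho_{AB}||\omega_A\otimes\omega_B)=S(\rho_{AB}||\rho_A\otimes\rho_B)+S(\rho_A\otimes\rho_B||\omega_A\otimes\omega_B)$, the same cancellation of the $B$-terms via additivity and the unchanged $B$-marginal, and the same application of the data-processing inequality under $\mc{T}_A\otimes\id$ to bound the mutual-information difference $C\leq 0$. Your explicit check that $\mc{T}_A\otimes\id$ maps $\rho^{(i)}_A\otimes\rho^{(i)}_B$ onto $\rho^{(f)}_A\otimes\rho^{(f)}_B$ is a welcome clarification of a step the paper leaves implicit.
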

What is left to be proven is that $\Delta F_1^\beta$ is a monotone under free (catalytic) transitions.

\begin{proposition}[Monotonicity under Gibbs-preserving transitions]
The function $\Delta F_1^\beta$ is a monotone under Gibbs-preserving transitions.
\begin{proof}
Consider a Gibbs-preserving transition
\begin{equation}
	p=(\rho,H)\rightarrow r=(\mc{G}(\rho),K)
\end{equation}
 with $\omega_K=\mc{G}(\omega_H)$. Then we get
\begin{align}
S(\mc{G}(\rho)|| \omega_K) &= S(\mc{G}(\rho)|| \mc{G}(\omega_H))\\
&\leq S(\rho ||\omega_H),\nonumber
\end{align}
where the last inequality is the data-processing inequality. 
\end{proof}
\end{proposition}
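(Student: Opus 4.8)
The plan is to reduce the statement to the data-processing inequality for the quantum relative entropy, using the representation of the free-energy difference as a relative entropy to the Gibbs state. Recall from the preceding discussion that $\Delta F_1^\beta(\rho,H) = \frac{1}{\beta}S(\rho||\omega_H)$, so that monotonicity of $\Delta F_1^\beta$ under a Gibbs-preserving transition $p=(\rho,H)\rightarrow r=(\mc{G}(\rho),K)$ is precisely the inequality $S(\mc{G}(\rho)||\omega_K)\leq S(\rho||\omega_H)$. The whole argument will therefore consist of rewriting the left-hand relative entropy so that data processing can be applied directly.

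First I would invoke the defining property of a Gibbs-preserving transition, namely $\omega_K=\mc{G}(\omega_H)$. This is the crucial structural input: it lets me replace the second argument of the target relative entropy by $\mc{G}(\omega_H)$, so that \emph{both} arguments of $S(\cdot||\cdot)$ become images under the single channel $\mc{G}$, applied to $\rho$ and to $\omega_H$ respectively. Concretely,
\begin{equation}
S(\mc{G}(\rho)||\omega_K) = S(\mc{G}(\rho)||\mc{G}(\omega_H)).
\end{equation}
I would then apply the data-processing inequality for the relative entropy (monotonicity under quantum channels), listed among its basic properties above, to obtain $S(\mc{G}(\rho)||\mc{G}(\omega_H))\leq S(\rho||\omega_H)$. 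Dividing by $\beta>0$ yields $\Delta F_1^\beta(r)\leq\Delta F_1^\beta(p)$, which is the claimed monotonicity.

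There is essentially no technical obstacle once the free energy is written as a relative entropy; the only point meriting care is that the Hamiltonian may change along the transition, so that $\omega_H$ and $\omega_K$ can have different spectra and live on different spaces. This is handled automatically: the Gibbs-preserving condition $\omega_K=\mc{G}(\omega_H)$ already encodes the correct transformation of the reference state, and data processing holds for an arbitrary channel $\mc{G}$ irrespective of whether the input and output Hamiltonians coincide. I would also remark that the fixed-Hamiltonian case $K=H$, where $\mc{G}$ is a Gibbs-preserving channel with $\mc{G}(\omega_H)=\omega_H$, is recovered as the special instance in which the replacement step above is an identity.
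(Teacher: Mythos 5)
Your proposal is correct and follows essentially the same route as the paper's proof: substituting $\omega_K=\mc{G}(\omega_H)$ into the second argument of the relative entropy and then applying the data-processing inequality to conclude $S(\mc{G}(\rho)||\mc{G}(\omega_H))\leq S(\rho||\omega_H)$. Your additional remarks on the changing Hamiltonian and the fixed-Hamiltonian special case are accurate but not needed beyond the paper's two-line argument.
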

\begin{proposition}[Monotonicity under catalytic Gibbs-preserving transitions]
The function $\Delta F_1^\beta$ is a monotone under catalytic Gibbs-preserving transitions. 
\begin{proof}
Consider a catalytic transition $p\otimes q \rightarrow r\otimes q$. From monotonicity and extensitivity of $\Delta F_1^\beta$,
we obtain
\begin{align}
\Delta F_1^\beta(r) & = \Delta F_1^\beta(r\otimes q) - \Delta F_1^\beta(q) \\
&\leq \Delta F_1^\beta(p\otimes q) - \Delta F_1^\beta(q) \nonumber \\
&= \Delta F_1^\beta(p). \nonumber
\end{align}
\end{proof}
\end{proposition}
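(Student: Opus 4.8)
The plan is to reduce the catalytic statement to the two facts already established for $\Delta F_1^\beta$ in this section: its \emph{extensivity} (additivity over tensor products of non-interacting objects) and its \emph{monotonicity under ordinary, non-catalytic Gibbs-preserving transitions}. By definition, asserting that $p\rightarrow r$ is a catalytic Gibbs-preserving transition means precisely that there exists an auxiliary object $q$ for which $p\otimes q\rightarrow r\otimes q$ is an ordinary Gibbs-preserving transition, with the catalyst returned in exactly the same object $q$. The entire argument is then to push this single non-catalytic inequality through additivity and cancel the catalyst contribution.

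First I would invoke the preceding proposition to obtain $\Delta F_1^\beta(r\otimes q)\leq \Delta F_1^\beta(p\otimes q)$. This is where the genuine analytic content sits: writing $\Delta F_1^\beta(\rho,H)=\tfrac{1}{\beta}S(\rho||\omega_H)$ and using that a Gibbs-preserving channel $\mc{G}$ satisfies $\omega_K=\mc{G}(\omega_H)$, the inequality is nothing but the data-processing inequality $S(\mc{G}(\rho)||\mc{G}(\omega_H))\leq S(\rho||\omega_H)$ applied to the joint object $p\otimes q$. Next I would apply extensivity on both sides, namely $\Delta F_1^\beta(p\otimes q)=\Delta F_1^\beta(p)+\Delta F_1^\beta(q)$ and $\Delta F_1^\beta(r\otimes q)=\Delta F_1^\beta(r)+\Delta F_1^\beta(q)$, and subtract the common term $\Delta F_1^\beta(q)$ from both sides to conclude $\Delta F_1^\beta(r)\leq \Delta F_1^\beta(p)$, which is exactly monotonicity under catalytic Gibbs-preserving transitions.

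The only point requiring care --- and thus the ``main obstacle,'' though a mild one --- is the legitimacy of cancelling $\Delta F_1^\beta(q)$. This is justified because we restrict to finite-dimensional objects and because $\Delta F_1^\beta(q)=\tfrac1\beta S(\sigma||\omega_{H})$ for $q=(\sigma,H)$ is always finite: the Gibbs state $\omega_H$ has full support, so the support condition $\mathrm{supp}(\sigma)\subseteq\mathrm{supp}(\omega_H)$ holds automatically. I would also emphasize that this clean cancellation hinges on the catalyst coming back in the product form $r\otimes q$, uncorrelated with the system; were one to allow the catalyst to retain correlations with the system, the subtraction would fail and one would have to replace plain extensivity by the \emph{strong generalized super-additivity} established earlier in this section, with the conclusion correspondingly weakened. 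Finally, I would remark that since thermal operations form a strict subset of Gibbs-preserving operations, the identical chain of inequalities immediately yields monotonicity under catalytic thermal transitions as well.
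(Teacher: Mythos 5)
Your proposal is correct and follows essentially the same route as the paper's proof: apply monotonicity under ordinary Gibbs-preserving transitions to the joint object $p\otimes q\rightarrow r\otimes q$, then use extensivity to split off and cancel the catalyst term $\Delta F_1^\beta(q)$. Your additional observations (finiteness of $\Delta F_1^\beta(q)$ justifying the cancellation, the need for super-additivity when the catalyst may become correlated, and the immediate extension to thermal operations) are consistent with remarks the paper itself makes surrounding the proposition.
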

A similar proof can also be given in the setting where the catalyst is allowed to become correlated with the system. In this case, we need to use super-additivity of $\Delta F_{1}^{\beta}$. The same applies for the next corollary.
\begin{corollary}[Mapping Gibbs objects to Gibbs objects]\label{cor:gibbstogibbs}
Catalytic Gibbs-preserving transitions map Gibbs objects to Gibbs objects.
\begin{proof}
Consider a transition $w\otimes q\rightarrow r\otimes q$. Then $\Delta F^\beta(r)\leq \Delta F^\beta(w) = 0$. But $\Delta F^\beta\geq 0$ and $\Delta F^\beta$ vanishes only on Gibbs-objects. Hence $r$ has to be a Gibbs object.
\end{proof}
\end{corollary}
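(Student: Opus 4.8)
The plan is to exploit the fact that the von~Neumann free energy difference $\Delta F_1^\beta$ is simultaneously a \emph{monotone} under catalytic Gibbs-preserving transitions, is \emph{nonnegative}, and \emph{vanishes exactly on Gibbs objects}. All three ingredients are already available from Proposition~\ref{app:thm:free-energy-monotone} together with the positivity of the quantum relative entropy, so the corollary should follow by a short squeezing argument: any image of a Gibbs object can have free energy at most $0$, but free energy is bounded below by $0$ with equality characterising Gibbs objects, which forces the image itself to be Gibbs.

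Concretely, first I would take an arbitrary catalytic Gibbs-preserving transition $w \to r$ whose source $w=(\omega_H,H)$ is a Gibbs object. By the definition of catalysis there is a catalyst $q$ such that $w\otimes q \to r\otimes q$ is a genuine (non-catalytic) Gibbs-preserving transition. Second, I would invoke the monotonicity of $\Delta F_1^\beta$ under catalytic Gibbs-preserving transitions to conclude $\Delta F_1^\beta(r)\leq \Delta F_1^\beta(w)$; the reduction from the catalytic to the bare statement uses extensivity $\Delta F_1^\beta(r\otimes q)=\Delta F_1^\beta(r)+\Delta F_1^\beta(q)$ to cancel the catalyst contribution. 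Third, since $w$ is a Gibbs object, normalisation gives $\Delta F_1^\beta(w)=0$, whence $\Delta F_1^\beta(r)\leq 0$. Finally, positivity of the relative entropy gives $\Delta F_1^\beta(r)=\frac{1}{\beta}S(\rho_r||\omega_{H_r})\geq 0$ with equality if and only if $\rho_r=\omega_{H_r}$, so $\Delta F_1^\beta(r)=0$ forces $r$ to be a Gibbs object, which proves the claim.

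The only real subtlety---which I would treat carefully rather than regard as an obstacle---is the justification of monotonicity through the catalyst. In the uncorrelated case it is purely extensivity that lets the $\Delta F_1^\beta(q)$ terms cancel; if one additionally wants the statement to hold when the catalyst becomes correlated with the system (as anticipated in the remark following Proposition~\ref{app:thm:free-energy-monotone}), extensivity no longer suffices and one must instead use the strong generalised super-additivity of $\Delta F_1^\beta$ to control the joint final state. Beyond this, the argument is essentially a one-line sandwich between the monotonicity bound $\Delta F_1^\beta(r)\leq 0$ and the equality condition in positivity, so no further technical work should be required.
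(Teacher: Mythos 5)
Your proposal is correct and follows essentially the same route as the paper: monotonicity of $\Delta F_1^\beta$ under catalytic Gibbs-preserving transitions gives $\Delta F_1^\beta(r)\leq \Delta F_1^\beta(w)=0$, and positivity of the relative entropy, which vanishes exactly on Gibbs objects, then forces $r$ to be a Gibbs object. Your closing remark about needing super-additivity rather than extensivity when the catalyst may become correlated also matches the paper's own comment preceding the corollary.
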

This finishes the proof of Proposition~\ref{app:thm:free-energy-monotone}.

\section{The usual notions of work as a particular case in our formalism}\label{sec:otherforms}

{In this section we will review the common definitions of work that have been considered in the literature and recast them as particular cases of our formalism. That is, we will show that the energy stored in the work-storage device is a valid work quantifier fulfilling Axioms \ref{prin:one} and \ref{prin:two}, where the catalytic free operations and the set of restrictions $\mc{P}$ encode the behavior of a lifted weight.}

\subsection{The average energy of the lifted weight}

{Here we will discuss the model of work considered in \cite{Popescu2013,Popescu2013b}. In this case, the restrictions $\mc{P}$ are taken as the quantum analogue of a lifted weight:
\begin{equation}\label{eq:pmean}
\mc{P}^{\mathrm{mean}}=\left\{ (\rho_A,H_A) \: \: ; \: \: H_A =\he{mgX}\right\},
\end{equation}
\he{where $X$ is the position operator associated to one continuous degree of freedom, $m$ is the mass of the weight and $g$ is the gravitational constant.} Note that no restrictions are put on the state $\rho_A$ but the Hamiltonian $H_A$ is fixed throughout the protocol. The work quantifier is defined as
\begin{equation}\label{eq:workmean}
\mc{W}_{\text{mean}}( p_A^{(i)} \rightarrow p_A^{(f)} ) = \tr ( \rho_A^{(f)} H_A ) -\tr ( \rho_A^{(i)} H_A). 
\end{equation}
Importantly, the treatment of the work-storage device as a lifted weight is encoded in the set of catalytic free operations $\mc{F}_C$. Following the formalism of Ref.\ \cite{Popescu2013} and adding to it the notion of a catalyst, we have that the free operations, that we denote by $\mc{F}_C^{\text{mean}}$ are given by
\begin{equation}
\mc{F}_{C}^{\text{mean}}(\rho_A)=  \big\{ \rho'_A  \: \: ; \:\: \rho'=\tr_{BC} (U \omega_B \otimes \sigma_C \otimes \rho_AU^{\dagger})\big\},
\end{equation}
where $\omega$ is a Gibbs state (reflecting a heat-bath), the mean-energy is preserved
\begin{equation}
\tr \big((U \omega_B \otimes \sigma_C \otimes \rho_AU^{\dagger}-\omega_B \otimes \sigma_C \otimes \rho_A)H_{BCA})\big)=0,
\end{equation}
the catalyst $C$ is left in the same final state,
\begin{equation}
\sigma_C=\tr_{BA} (U \omega_B \otimes \sigma_C \otimes \rho_AU^{\dagger}),
\end{equation}
and finally, that the unitary $\he{U}$ commutes with the space-translation operator on $A$ (see \he{Refs.} \cite{Popescu2013,Malabarba14} for details). }

{Given all the conditions, it is shown in Refs. \cite{Popescu2013,Malabarba14} that 
\begin{equation}
\tr(\rho_{A} H_A) \leq \tr(\rho'_{A} H_A) \: \: \forall \: \: \rho'_A \in \mc{F}_{C}^{\text{mean}}(\rho_A).
\end{equation}
That is, the function $M(\rho,H):=\tr(\rho H)$ is a monotone under $\mc{F}_{C}^{\text{mean}}$ catalytic free operations. Lastly, one can easily show that the average energy fulfils the properties of additivity and \ro{super-additivity}, hence, using Thm.\ \ref{lemma:forallp} we see that $\mc{W}_{\text{mean}}$ fulfills Axioms \ref{prin:one} and \ref{prin:two} \cite{footnotemeanwork}.}

{As a final remark, notice that the free operations $\mc{F}_{C}^{\text{mean}}$ impose a limitation in comparison to what is usually allowed when thermal operations or Gibbs preserving maps are considered. The condition that the unitary has to commute with the translation operator of the work-storage device prevents one from employing the lifted weight as an entropy sink in the spirit of the example of Fig. \ref{fig:counterex}. At the same time, it is obvious from the model of the work-storage device and the conditions on  $\mc{F}_{C}^{\text{mean}}$ that this idealisation will not represent the realistic behaviour of a nano-machine. A work-storage device made of a few atoms certainly will not have a Hamiltonian of the form \eqref{eq:pmean} neither one can expect the operations performed in a real experimental device to, even approximately, commute with the translation \he{operator on the work-storage device, even if its Hilbert-space allows for such operators}.}

\subsection{The \emph{wbit} and $\epsilon$-deterministic work extraction}

{Now we will consider the model of a \emph{wbit} and the notion of $\epsilon$-deterministic work as it has been put forward in Ref.\ \cite{Nanomachines}. The restrictions on the work-storage device are such they are qubits with 
\begin{equation}
\mc{P_{\epsilon}}:= \{ (\rho,H) \:  | \:  H=\Delta \proj{1}, \: \norm{\rho -\ketbra{E}{E}}_1\leq 2\epsilon \},
\end{equation}
where $\ket{E}$ is an eigenvector of $H$, $\norm{ \cdot }_1$ is \he{the 1-norm} on quantum states and $\epsilon < \frac{1}{2}$. The restriction $\mc{P}_{\epsilon}$ encodes that Arthur is interested in having states of well-defined energy or at least $\epsilon$-close to it. Work is then given by the energy difference of the closest energy-eigenstates, formally as
\begin{equation}\label{eq:workepsilon}
\mc{W}_{\text{det}}(p_A^{(i)} \rightarrow p_A^{(f)})=f(\rho_A^{(f)},H_A^{(f)})-f(\rho_A^{(i)},H_A^{(i)}),
\end{equation}
with the function $f$ being defined \cite{footnotewbit} as
\begin{equation}
f(\rho,H)=
\begin{cases}
\Delta & \mbox{if } \norm{\rho -\proj{1}} < 1\\
0 & \mbox{if }  \norm{\rho -\proj{0}}<  1 \\
\end{cases}.
\end{equation}}

{It is easy to see that strictly speaking, this model of the \emph{wbit} respects Axioms \ref{prin:one} and \ref{prin:two} if $\epsilon=0$. In that case, $\mc{P}_{\epsilon=0}$ is given only by \he{pure energy eigenstates}. Hence, we find that $\mc{W}_{\mathrm{det}}$ evaluated on $\mc{P}_{\epsilon=0}$ coincides with the work-quantifier defined by the non-equilibrium free-energy on any transition where the Hamiltonian is constant. However, for $\epsilon>0$ we find that this model of the lifted weight does not satisfy the Axioms \ref{prin:one} and \ref{prin:two}. Indeed, one can simply check that it does not fulfil Eq.\ \eqref{eq:noworkfromfreeoperations}, or in other words, it is possible to store work in the \emph{wbit} by simply putting it in contact with a single thermal bath. Indeed, it has been shown in Ref. \cite{Wilming15} that for any value of $\epsilon>0$ and $\beta$ one can find a value of $\Delta>0$ such that there exists a thermal operation that brings a qubit work-storage system initially in the ground state to a final state $\rho^{(f)}=(1-\epsilon)\proj{1} + \epsilon \proj{0}$. Hence, for any $\epsilon > 0$, there exist $p^{(i)},p^{(f)} \: \in \:  \mc{P}_{\epsilon}$, such that $\he{\mc{W}_{\text{det}}(p^{(i)}} \rightarrow p^{(f)})=\Delta > 0$, while $p^{(f)} \in \mc{F}_C (p^{(i)})$, in contradiction with \eqref{eq:noworkfromfreeoperations}. Thus, $\he{\mc{W}_{\text{det}}}$ only defines a work quantifier that is compatible with Axioms \ref{prin:one} and \ref{prin:two} if $\epsilon=0$. A discussion on how to define a work quantifier that incorporates the notion of $\epsilon$-deterministic work without running into contradictions is presented in Sec.\  \ref{sec:epsilondet}.}

{As a final remark, let us note again that the incompatibility of $\epsilon$-deterministic work with our axioms is unrelated with issues related to reversibility or the fact that $\mc{W}(p\rightarrow q)=-\mc{W}(q\rightarrow p)$. Indeed, also $\epsilon$-deterministic work fulfils this property. The reason that makes it violate the axioms is the same as the one given in the example of Fig. \ref{fig:counterex}: when $\epsilon>0$, the \emph{wbit} can act as an entropy sink. Thus, identifying work with energy as $\mc{W}_{\text{det}  }$ does, allows one to extract work by using a single heat bath. Nonetheless, as we propose in Sec.\ \ref{sec:epsilondet} in the main text, it is possible to keep the spirit of the $\epsilon$-deterministic work (that the work-storage devices are $\epsilon$-close to pure energy eigenstates) and put forward a proper work quantifier that satisfies the axioms.}

\section{Probability distributions of work}
\label{sec:prob_distributions}

We will now discuss how our formalism is perfectly compatible with the notion of work as a classical random variable and the well-known results that pertain to the fluctuations of the probability distribution of work of Refs.\ \cite{Jarzynski97,Crooks99}. In the setting in the focus of attention in so-called fluctuation theorems one considers a system $S$ on which an energy measurement is performed both at the beginning and the end of a given unitary evolution. That is, the initial and final energies $E^{(i)}$ and $E^{(f)}$ are random variables, and so is the work given by $w=E^{(f)}-E^{(i)}$, \ro{which occurs with probability $P_W(w)$. Let us assume that the energy difference is bounded so that $P_W(w)\neq0$ only if $w_\text{min} \leq w \leq w_\text{max}$.} One may then always assume the presence of a work-storage device $A$ that stores the energy lost by $S$ and that is---in each event---in an energy eigenstate. 

More explicitly, we consider $\mc{P}=\{ (\proj{x},mgX)\}$ where $\proj{x}$ is an eigenstate of the \ro{truncated position operator $X=\int_{w_\text{min}}^{w_{\text{max}}} \proj{x} dx$ (taking $mg=1$ for simplicity) \footnote{\he{Here, $\int_{w_\text{min}}^{w_{\text{max}}} \proj{x} dx$ should be understood as a finite-dimensional Hamiltonian with non-degenerate spectrum within $[w_{\mathrm{min}},w_{\mathrm{max}}]$, as dense as necessary to reflect all the possible work-values.}}.} Then, in each event---that is, conditioned on a
specific value of the initial and final measurement---the work-storage device undergoes the transition 
\begin{equation}\label{eq:eigenstatetransition}
p^{(i)}=(\proj{0}_A,X_A) \rightarrow p^{(f)}=(\proj{w}_A,X_A)
\end{equation}
Then, we simply take
\begin{equation}\label{eq:workeigenstatetransition}
\nonumber \mc{W}(p^{(i)} \rightarrow p^{(i)}) =f(p^{(f)})-f(p^{(i)})=w
\end{equation}
 with $f((\proj{x}_A,X_A))=x$. Clearly, this work quantifier fulfills Axioms \ref{prin:one} and \ref{prin:two}, since it coincides, for the case of $\mc{P}=\{ (\proj{x},mgX)\}$, with taking  $f=\Delta F_{1}^{\beta}$, which by Thm.\ \ref{thm:renyi}  fulfils the axioms.

Nonetheless, note that it is crucial to include as part of the work extraction scheme the step in which a measurement is performed at the beginning and at the end. This is necessary even if $S$ is a classical system. Otherwise, the process will result in a mixed state of the work-storage device. 

To be more precise, it is important to appreciate that the \ro{two following notions are not equivalent:}
\begin{enumerate}
\item[i)]  $\nonumber \left(\proj{0}_A,X_A) \rightarrow (\proj{w}_A,X_A\right)$ occurs with probability $P_W(w)$.
\item[ii)]  $\nonumber \left(\proj{0}_A,X_A) \rightarrow (\sum_{w}P_W(w)\proj{w}_A,X_A\right)$ takes place.
\end{enumerate}

The interpretation of work as a random variable in Refs.\ \cite{Jarzynski97,Crooks99} corresponds to a process of the type i), where the probability distribution of work $P_W(w)$ encodes our a priori knowledge or capability to make predictions about which transition of the form \eqref{eq:eigenstatetransition} is going to take place. The transition given by ii) is a situation that is not covered by $\mc{P}=\{ (\proj{x},mgX)\}$ and $\mc{W}$ given simply by \eqref{eq:workeigenstatetransition}. In order to quantify work for a transition of the form ii), one has to properly account for the fact that the work-storage device might act as an entropy sink. In other words, in order to account for transitions of the form ii), one cannot identify work simply with the energy difference, and one has to define a work quantifier that fulfils Axioms \ref{prin:one} and \ref{prin:two} \ro{for extended sets $\mc{P}$ that contain $\sum_{w}P_W(w)\proj{w}_A$ as a valid state.}

Although this discussion between the differences of i) and ii) is rather obvious, we would like to stress that it plays an important role in the interpretation of work 
in quantitative terms, since i) requires to perform an energy measurement, which is not a free operation within any sensible thermodynamic framework. Hence, it should be kept in mind that when referring to work as a random variable, one is effectively \ro{quantifying} the work extracted/invested in the process plus the work extracted/invested in the measurements,
which of course is a perfectly valid approach.
In contrast, in the formalism of Refs.\ \cite{Aberg,Nanomachines}, deterministic values of energy are obtained, not by conditioning on the value of an energy measurement, but by engineering the protocol in such a way the work-storage device ends in a deterministic state of energy. 

Lastly, let us point out that our axiomatic framework can incorporate the notion of work as a random variable in situations more general than the one consider above. \he{For example,} 
one may consider probability distributions of work \he{for arbitrary processes between two measurements described by POVMs $\{M_\alpha\}_\alpha$, provided that all the post-measurement states are valid work-storage devices, i.e., are fairly included in the set $\mc{P}$.} Then, for any work quantifier $\mc{W}$ that fulfills Axioms \ref{prin:one} and \ref{prin:two} for the set $\mc{P}$, the work $\mc{W}(p^{(i)}_{\alpha_i}  \rightarrow p^{(f)}_{\alpha_f})$ occurs with probability $P(\alpha_i,\alpha_f)$, where $p^{(i)}_{\alpha_i}$ is the initial state conditioned on having obtained outcome $\alpha_i$ initially (and equivalently for \he{the state} $p^{(f)}_{\alpha_f}$ \he{after the final measurement}) and $P(\alpha_i,\alpha_f)$ is the joint probability distribution of obtaining the pair $\alpha_i,\alpha_f$. In this way, \he{we see that} the framework laid out in this work
 and the picture of capturing work as a probability distribution \he{are compatible}.

\clearpage

\end{document}